\documentclass[11pt,english]{article}
\usepackage{lmodern}
\usepackage[T1]{fontenc}
\usepackage[latin9]{inputenc}
\usepackage[a4paper]{geometry}
\geometry{verbose,tmargin=2.5cm,bmargin=2.5cm,lmargin=2.5cm,rmargin=2.5cm}
\pagestyle{plain}
\synctex=-1
\usepackage{amsmath}
\usepackage{amsthm}
\usepackage{amssymb}
\usepackage{graphicx}
\usepackage{setspace}
\usepackage[authoryear]{natbib}
\onehalfspacing

\makeatletter
\numberwithin{equation}{section}
\theoremstyle{definition}
 \newtheorem{example}{\protect\examplename}
\theoremstyle{plain}
\newtheorem{assumption}{\protect\assumptionname}
\theoremstyle{plain}
\newtheorem{thm}{\protect\theoremname}

\usepackage{placeins}
\usepackage[titletoc]{appendix}
\usepackage{graphicx}
\usepackage{placeins}
\usepackage{chngcntr}

\usepackage{threeparttable}
\usepackage{booktabs,caption} 
\setlength{\belowcaptionskip}{10pt}
\captionsetup[table]{skip=0pt} 

\usepackage{subfig}
\usepackage{pgfplots}
\captionsetup[subfloat]{position=top}

\setlength{\skip\footins}{1cm}
\usepackage[hang,splitrule]{footmisc}
\setlength{\footnotemargin}{0.3cm}
\setlength{\footnotesep}{0.4cm}

\usepackage[pagebackref=false]{hyperref}
\hypersetup{colorlinks=true,citecolor=blue}
\hypersetup{colorlinks=true,linkcolor=blue}
\hypersetup{colorlinks=true,urlcolor=blue}

\makeatother

\usepackage{babel}
\providecommand{\assumptionname}{Assumption}
\providecommand{\examplename}{Example}
\providecommand{\theoremname}{Theorem}

\begin{document}

\title{\textbf{Solving Dynamic Discrete Choice Models Using Smoothing and
Sieve Methods}\thanks{We would like to thank Mike Keane, John Rust, Victor Aguirregabiria,
Lars Nesheim, Aureo de Paula and many other people for helpful comments
and suggestions. Kristensen gratefully acknowledges financial support
from the ERC (through starting grant No 312474 and advanced grant
No GEM 740369). Schjerning gratefully acknowledges the financial support
from the Independent Research Fund Denmark (grant no. DFF \textendash{}
4182-00052) and the URBAN research project financed by the Innovation
Fund Denmark (IFD).}}

\author{Dennis Kristensen\thanks{Department of Economics, University College London, Gower Street,
London, United Kingdom. E-mail: \url{d.kristensen@ucl.ac.uk}. Website:
\url{https://sites.google.com/site/econkristensen}.} \and Patrick K. Mogensen\thanks{Department of Economics, University of Copenhagen, Øster Farimagsgade
5, Building 35, DK-1353 Copenhagen K, Denmark. E-mail: \url{Patrick.Kofod.Mogensen@econ.ku.dk}.
Webpage: \url{http://www.economics.ku.dk/staff/phd_kopi/?pure=en/persons/374766}.}\and Jong Myun Moon\thanks{PIMCO}\and Bertel Schjerning\thanks{Department of Economics, University of Copenhagen, Øster Farimagsgade
5, Building 35, DK-1353 Copenhagen K, Denmark. E-mail: \url{Bertel.Schjerning@econ.ku.dk}.
Webpage: \url{http://bschjerning.com/}.}\\
 }
\maketitle
\begin{abstract}
\noindent We propose to combine smoothing, simulations and sieve approximations
to solve for either the integrated or expected value function in a
general class of dynamic discrete choice (DDC) models. We use importance
sampling to approximate the Bellman operators defining the two functions.
The random Bellman operators, and therefore also the corresponding
solutions, are generally non-smooth which is undesirable. To circumvent
this issue, we introduce smoothed versions of the random Bellman operators
and solve for the corresponding smoothed value functions using sieve
methods. We also show that one can avoid using sieves by generalizing
and adapting the ``self-approximating'' method of \citet{Rust1997curse}
to our setting. We provide an asymptotic theory for both approximate
solution methods and show that they converge with $\sqrt{N}$-rate,
where $N$ is number of Monte Carlo draws, towards Gaussian processes.
We examine their performance in practice through a set of numerical
experiments and find that both methods perform well with the sieve
method being particularly attractive in terms of computational speed
and accuracy.
\end{abstract}
\noindent \textbf{Keywords: }Dynamic discrete choice; numerical solution;
Monte Carlo; sieves.

\thispagestyle{empty}

\newpage{}

\pagenumbering{arabic}
\setcounter{page}{1}

\section{Introduction\label{sec:Introduction}}

Discrete Decision Processes (DDPs) are widely used in economics to
model forward-looking discrete decisions. For their implementation,
researchers are required to solve the model which generally cannot
be done in closed form. Instead, a number of methods have been proposed
for solving the model numerically; see, e.g., \citet{rust2008dynamic}
for an overview. We propose two novel methods for approximating the
solutions to a general class of Markovian DDP models in terms of either
the so-called integrated or expected value function. These two functions
are relevant for estimation of DDP's and for welfare analysis of policy
experiments. Our framework allows for both continuous and discrete
state variables, non-separable utility functions and unrestricted
dynamics. As such, we cover most relevant models used in empirical
work. The proposed implementation of model and estimators are found
to be computationally very efficient, and at the same time providing
precise results with small approximation errors due to the use of
simulations and sieve methods. 

Our first proposal proceeds in three steps: First, we develop smoothed
simulated versions of the Bellman operators that returns the integrated
and expected value functions as fixed points. Next, we approximate
the unknown value function by a sieve, that is, a parametric function
class, thereby turning the problem into a finite-dimensional one.
Finally, we solve for the parameters entering the chosen sieve using
projection-based methods. When the chosen sieve is linear in the parameters,
the approximate solution can be computed using an iterative procedure
where each step is on closed form.

As an alternative to the above sieve-based method, we also adapt and
generalize the so-called ``self-approximating'' method proposed
in \citet{Rust1997curse} to our setting: We design the importance
sampler used in the simulated Bellman operators so that the corresponding
expected and integrated value functions can be solved for directly
without the use of sieves. In comparison with the sieve approach,
the self-approximating solution method has the advantage that it will
not suffer from any biases due to function approximations. But at
the same time, the importance sampler used in its implementation will
generally have a larger variance compared to the class of samplers
that can be used for the sieve method. This larger variance also translates
into a larger simulation bias of the self-approximating solution due
to the non-linear nature of the problem. Thus, neither method strictly
dominates the other.

Our two procedures, the sieve-based and self-approximating one, differ
from existing proposals in three important aspects: First, we solve
for either the integrated or expected value function instead of the
value function itself. This reduces the dimensionality of the problem
since we integrate out any i.i.d. shocks appearing in the model before
solving it. Moreover, while the value function is non-differentiable,
the integrated and expected value functions are generally smooth which
means that our sieve method performs better compared to existing ones
that aim at approximating the value function. Second, we allow for
a general class of importance samplers in the simulation of the Bellman
operator; these can be designed to reduce variances and biases due
to simulations. Third, we smooth the simulated Bellman operator by
replacing the $\max$-function appearing in its expression by a smoothed
version where the degree of smoothing is controlled by a parameter
akin to the bandwidth in kernel smoothing methods. This is similar
to the logit-smoothed accept-reject simulator of probit models as
proposed by \citet{McFadden1989}; see also \citet{Fermanian&Salanie2004},
\citet{Kristensen&Shin20112} and \citet{Iskhakovetal2017}. The smoothing
turns the problem of solving for the integrated and expected value
functions into differentiable ones. In particular, the exact solutions
to the smoothed simulated Bellman equations become smooth as functions
of state variables and any underlying structural parameters. This
in turn means that standard sieves, such as polynomials, will approximate
the exact solutions well and that we can control the error rate due
to function approximation. Moreover, if used in estimation, standard
numerical solvers can be employed in computing estimators of the structural
parameters. The smoothing entails an additional bias but this can
be controlled for by suitable choice of aforementioned smoothing parameter.

The smoothing device also facilitates the theoretical analysis of
the approximate value functions since it allows us to use a functional
Taylor expansion of it. This expansion is then used to analyze the
leading numerical error terms of the approximate value functions due
to simulations, smoothing and function approximations. In particular,
under regularity conditions, we show that the approximate value function
will converge weakly towards a Gaussian process which is the first
result of its kind to our knowledge. These results allow researchers
to, for example, build confidence intervals around the approximate
value function and should be useful when analyzing the impact of value
function approximation when used in welfare analysis and estimation
of structural parameters. They may also be potentially helpful in
designing selection rules for number of basis functions and the smoothing
parameter.

A numerical study investigates the performance of the solution methods
in practice. We implement the proposed methods for the engine replacement
model of \citet{Rust1987} and investigate how smoothing, number of
basis functions and number of simulations affect the approximation
errors. We also investigate how the procedures are affected by the
dimensionality of the problem and how derivative-based solvers affect
computation times. We find that the sieve method generally performs
best of the two methods: It is computationally faster and in most
situations provides a better approximation in terms of bias and variance.
Moreover, the sieve method is found to also work well in higher dimensions
with its bias and variance being fairly stable as we increase the
the number of state variables of the model. In contrast, variances
of the self-approximating method increase dramatically as the number
of state variables increases and so appears to be less robust. Finally,
the errors due to simulations and function approximation behave according
to theory and are found to vanish at the expected rates.

Our proposed methods share similarities with the ones developed in,
amongst others, \citet{Arcidiaconoetal2013}, \citet{Keane&Wolpin1994},
\citet{Munos&Szepesvari2008}, \citet{Norets2012}, \citet{Pal&Stacurski2013}
and \citet{Rust1997curse} who also use simulations and/or sieve methods
to solve DDP's. However, except for \citet{Keane&Wolpin1994}, the
methods proposed in these papers approximate the value function while
ours target the integrated or expected value function which are more
well-behaved (smooth) objects and therefore easier to approximate.
Moreover, in contrast to the cited papers, we employ importance sampling
and smoothing in our implementation which comes with the aforementioned
computational advantages. From a theory perspective, we provide a
more complete asymptotic analysis of the approximate integrated and
expected value functions. On the other hand, \citet{Munos&Szepesvari2008}
and \citet{Rust1997curse} provide an analysis of the computational
complexity of solving for the value function and so the theories of
this paper and these studies complement each other.

The remains of the paper are organized as follows: Section \ref{sec:Model}
introduces a general class of DDP's and their corresponding value
functions. In Section \ref{sec:Simulated-Bellman}, we develop our
smoothed simulated versions of the Bellman operators that the integrated
and expected value functions are fixed points to. We then show how
to (approximately) solve these simulated Bellman equations in Section
\ref{sec:Approximate-value}. An asymptotic theory of the approximate
value function is presented in Section \ref{sec:Convergence}, while
the results of the numerical experiments are found in Section \ref{sec:Numerical}.
Appendix \ref{app:Auxiliary-Results} contains some general results
for approximate solutions to fixed point problems, while proofs of
the main results can be found in Appendix \ref{App:Proofs}.

\section{Model\label{sec:Model}}

We consider the following DDP where a single agent at time $t\geq1$
solves
\begin{equation}
d_{t}=\arg\max_{d\in\mathcal{D}}\left\{ u(S_{t},d)+\beta E\left[\nu(S_{t+1})|S_{t},d_{t}=d\right]\right\} ,\label{eq: model}
\end{equation}
where $\mathcal{D}=\left\{ 1,...,D\right\} $ is the set of alternatives,
$u(S_{t},d)$ is the per-period utility, $0<\beta<1$ is the discount
factor, $S_{t}$ is a set of state variables that follows a controlled
Markov process with transition kernel $F_{S}\left(S_{t}|S_{t-1},d_{t-1}\right)$
and the so-called value function $\nu$ solves the following fixed-point
problem, 
\begin{equation}
\nu(S_{t})=\max_{d\in\mathcal{D}}\left\{ u(S_{t},d)+\beta E\left[\nu(S_{t+1})|S_{t},d_{t}=d\right]\right\} .\label{eq:org bellman}
\end{equation}
Following \citet{Rust1987} and many subsequent empirical specifications,
we assume that $S_{t}=\left(Z_{t},\varepsilon_{t}\right)\in\mathcal{Z}\times\mathcal{E}\subseteq\mathbb{R}^{d_{Z}}\times\mathbb{R}^{d_{\varepsilon}}$
where $Z_{t}$ and $\varepsilon_{t}$ satisfy the following conditional
independence condition,
\[
F_{S}\left(Z_{t},\varepsilon_{t}|Z_{t-1},\varepsilon_{t-1},d_{t-1}\right)=F_{\varepsilon}\left(\varepsilon_{t}|Z_{t}\right)F_{Z}\left(Z_{t}|Z_{t-1},d_{t-1}\right).
\]
In many cases $F_{\varepsilon}\left(\varepsilon_{t}|Z_{t}\right)=F_{\varepsilon}\left(\varepsilon_{t}\right)$
in which case $\varepsilon_{t}$ is an i.i.d. sequence and so can
be thought of as idiosyncratic shocks to utility. If no shocks are
present in the model, we can always choose $\varepsilon_{t}=\emptyset$
to be an empty variable so that $S_{t}=Z_{t}$. Throughout, we will
assume that the support $\mathcal{Z}$ is a compact set. This is done
to simplify the theoretical analysis since it, for example, implies
that value functions defined below will lie in the space of bounded
functions on $\mathcal{Z}$, $\mathbb{B}\left(\mathcal{Z}\right)$,
equipped with the sup-norm, $\left\Vert v\right\Vert _{\infty}=\sup_{z\in\mathcal{Z}}\left|v\left(z\right)\right|$.
At the same time, we allow the support of the error term, $\mathcal{E}$,
be unbounded and for both countable and continuously distributed state
variables.

In the above formulation, the model is characterized by the value
function $\nu(s)$. However, it is possible to rewrite the models
in terms of either the so-called \textit{integrated value function}
or the \textit{expected value function} and solve for these instead.
These are defined as
\[
v(Z_{t})=E\left[\nu(Z_{t},\varepsilon_{t})|Z_{t}\right]=\int_{\mathcal{E}}\nu\left(Z_{t},e\right)dF_{\varepsilon}\left(e|Z_{t}\right),
\]
and
\[
V(Z_{t},d_{t})=E\left[\nu(Z_{t+1},\varepsilon_{t+1})|Z_{t},\varepsilon_{t},d_{t}\right]=E[v(Z_{t+1})|Z_{t},d_{t}]=\int_{\mathcal{E}}v(z^{\prime})dF_{Z}\left(z^{\prime}|Z_{t},d_{t}\right),
\]
respectively, where we have used the conditional independence assumption.
Observe that given $v\left(z\right)$, we can recover $V\left(z,d\right)=E[v(Z_{t+1})|Z_{t},d_{t}]$
which in turn can be used to compute $\nu(S_{t})=\max_{d\in\mathcal{D}}\left\{ u(Z_{t},\varepsilon_{t},d)+\beta V\left(Z_{t},d\right)\right\} $.
Thus, there is no loss in focusing on the integrated and expected
value function, except that in the former case we need to compute
$E[v(Z_{t+1})|Z_{t},d_{t}]$ numerically to obtain the value function.
Furthermore, in many cases the expected value function itself is of
interest. For example, the conditional choice probabilities, which
are needed for counterfactuals and for estimation, take as input the
relative expected value function,
\[
P\left(d_{t}=d|Z_{t}=z\right)=M_{u,d}(\beta\Delta V(z)|z),\:M_{u,d}(r|z)=\frac{\partial M_{u}(r|z)}{\partial r\left(d\right)},
\]
where $\Delta V(z,d)=V(z,d)-V(z,D)$, $d\in\mathcal{D}$, and$M_{u}(r|z)$
is a generalized version of the so-called social surplus function
defined as, for any $r=(r(1),...,r(D)),$ 
\begin{equation}
M_{u}(r|z)=\int_{\mathcal{E}}\max_{d\in\mathcal{D}}\left\{ u(z,e,d)+r(d)\right\} dF_{\varepsilon}(e|z).\label{eq: M_u def}
\end{equation}
It is also useful in welfare analysis of policy experiments where
we wish to see how a policy change will affect the expected present
value of lifetime utility (i.e., the expected value function).

Except for a few special cases, analytical expressions of $v$ and
$V$ are not available and so numerical approximations have to be
employed. We will here develop numerical methods for solving for either
$v$ or $V$ instead of $\nu$ for the following reasons: First, $\nu$
is a function of $s=\left(z,\varepsilon\right)$ while $V$ and $v$
are functions of $z$ alone and therefore their approximations are
lower-dimensional problems. Second, $\nu$ is non-differentiable due
to the max-function in (\ref{eq:org bellman}); in contrast, $v(z)$
and $V(z,d)$ are both smooth functions of $z$ if $F_{\varepsilon}(e|z)$
and $F_{Z}\left(z^{\prime}|z,d\right)$ are. If there is no i.i.d.
component in the model, $\varepsilon_{t}=\emptyset$, then $\nu\left(s\right)=\nu\left(z\right)=v(z)$
and so the integrated value function becomes non-smooth. In contrast,
$V(z,d)$ remains smooth even in this case. The functions $v$ and
$V$ each solves their own fixed-point problem: Taking conditional
expectations on both sides of eq. (\ref{eq:org bellman}), $V$ can
be expressed as the solution to
\begin{equation}
V(z,d)=\Gamma(V)(z,d),\label{eq: expected bellman}
\end{equation}
where, with $M_{u}$ defined in eq. (\ref{eq: M_u def}),
\begin{align*}
\Gamma(V)(z,d) & =E\left[\max_{d^{\prime}\in\mathcal{D}}\left\{ u(Z_{t+1},\varepsilon_{t+1},d^{\prime})+\beta V(Z_{t+1},d^{\prime})\right\} |Z_{t}=z,d_{t}=d\right]\\
 & =\int_{\mathcal{Z}}\int_{\mathcal{E}}\max_{d^{\prime}\in\mathcal{D}}\left\{ u(z^{\prime},e,d^{\prime})+\beta V(z^{\prime},d^{\prime})\right\} dF_{\varepsilon}(e|z^{\prime})dF_{Z}(z^{\prime}|z,d)\\
 & =\int_{\mathcal{Z}}M_{u}(\beta V(z^{\prime})|z^{\prime})dF_{Z}(z^{\prime}|z,d).
\end{align*}
 Here and in the following, we let $V\left(z\right)=\left(V(z,1),....,V(z,D)\right)^{\prime}$
denote the $D\times1$-vector of expected value function and similar
for other objects. With this notation, we can represent the fixed-point
problem in vector form, $V(z)=\Gamma(V)(z)$, where
\begin{equation}
\Gamma(V)(z)=\int_{\mathcal{Z}}M_{u}(\beta V(z^{\prime})|z^{\prime})dF_{Z}(dz^{\prime}|z).\label{bellman}
\end{equation}
Next, to derive the fixed-point problem that $v$ solves, again take
conditional expectations on both sides of eq. (\ref{eq:org bellman})
but now only condition on $Z_{t}$ to obtain
\begin{equation}
v(z)=M_{u}(\beta V(z)|z).\label{eq:int_bellman-1}
\end{equation}
Combining this with eq. (\ref{eq: expected bellman}),
\begin{equation}
v\left(z\right)=M_{u}\left(\left.\beta\int_{\mathcal{Z}}M_{u}(\beta V(z^{\prime})|z^{\prime})dF_{Z}(dz^{\prime}|z)\right|z\right)=\bar{\Gamma}(v)(z).\label{eq: integrated bellman}
\end{equation}
where
\[
\bar{\Gamma}(v)(z)=M_{u}\left(\left.\beta\int_{\mathcal{Z}}v\left(z^{\prime}\right)dF_{Z}(dz^{\prime}|z)\right|z\right).
\]
Under regularity conditions provided below, $\Gamma$ and $\bar{\Gamma}$
are contraction mappings and so $V$ and $v$ are well-defined and
unique. The above two transformations of the original problem into
the ones for either the integrated or expected value function are
particular cases of the general class of transformations analyzed
in \citet{Ma&Stachurski}.
\begin{example}
Consider the special case where $u(Z_{t+1},\varepsilon_{t+1},d)=\bar{u}(Z_{t+1},d)+\lambda\varepsilon_{t+1}\left(d\right)$
for some scale parameter $\lambda>0$ and $F_{\varepsilon}(e|z)=F_{\varepsilon}(e)$
in which case
\[
M_{u}(r|z)=\int_{\mathcal{E}}\max_{d\in\mathcal{D}}\left\{ \bar{u}(z,d)+\lambda e\left(d\right)+r(d)\right\} dF_{\varepsilon}(e)=G_{\lambda}\left(\bar{u}(z)+r\right),
\]
where $G_{\lambda}\left(r\right):=\int_{\mathcal{E}}\max_{d\in\mathcal{D}}\left\{ \lambda e\left(d\right)+r(d)\right\} dF_{\varepsilon}(e)$.
Thus,
\[
\bar{\Gamma}(v)(z)=G_{\lambda}\left(\bar{u}(z)+\beta\int_{\mathcal{Z}}v\left(z^{\prime}\right)dF_{Z}(z^{\prime}|z)\right).
\]
If $\varepsilon_{t}\left(1\right),...,\varepsilon_{t}\left(D\right)$
are mutually independent and each component follows a suitably normalized
extreme value distribution then (see, e.g., \citealp{Rustetal2002}),
\begin{equation}
G_{\lambda}(r)=\lambda\log\left[\sum_{d\in\mathcal{D}}\exp\left(\frac{r\left(d\right)}{\lambda}\right)\right].\label{eq: G social surplus}
\end{equation}
\end{example}

\section{Simulated Bellman operators\label{sec:Simulated-Bellman}}

As a first step towards a computationally feasible method for solving
for either $v$ or $V$, we here develop simulated versions of their
two Bellman operators and then introduce the smoothing device. To
allow for added flexibility and precision in the implementation and
to cover as special case a modified version of Rust's self-approximating
solution method, we employ importance sampling: Let $\Phi_{Z}\left(z^{\prime}|z,d\right)$
and $\Phi_{\varepsilon}\left(e|z\right)$ be conditional importance
sampling distribution functions as chosen by the researcher. These
have to be chosen such that $F_{Z}(\cdot|z,d)$ and $F_{\varepsilon}\left(\cdot|z\right)$
are absolutely continuous w.r.t. $\Phi_{Z}\left(\cdot|z,d\right)$
and $\Phi_{\varepsilon}\left(\cdot|z\right)$, respectively, with
Radon-Nikodym derivatives $w_{Z}\left(\cdot|z,d\right)\geq0$ and
$w_{\varepsilon}\left(\cdot|z\right)\geq0$ so that
\begin{equation}
\frac{dF_{Z}(z^{\prime}|z,d)}{d\Phi_{Z}\left(z^{\prime}|z,d\right)}=w_{Z}\left(z^{\prime}|z,d\right),\:\frac{dF_{\varepsilon}(e|z)}{d\Phi_{\varepsilon}\left(e|z\right)}=w_{\varepsilon}\left(e|z\right).\label{eq: w def}
\end{equation}
 We will throughout assume that eq. (\ref{eq: w def}) is satisfied.
In the leading case $dF_{Z}=f_{Z}d\mu_{Z}$ and $d\Phi_{Z}=\phi_{Z}d\mu_{Z}$
for some measure $\mu_{Z}$ in which case $w_{Z}=f_{Z}/\phi_{Z}$
and similar for the sampling of $\varepsilon_{t}$. The above covers
the case where $F_{Z}(z^{\prime}|z,d)$ is a continuous distribution
(in which case $\mu_{Z}$ is the Lesbesque measure), a discrete distribution
(in which case $\mu_{Z}$ is the counting measure) and the mixed case.
With discrete finite support, we could in principle compute the exact
Bellman equation and its corresponding solution and so would not need
to resort to numerical methods. But if the discrete support is large
this may still be computationally very demanding and so even in this
case the numerical methods developed below may be computationally
attractive, c.f. \citet{Arcidiaconoetal2013}.

Given the chosen importance sampler, we can rewrite $\Gamma(V)(z,d)$
as
\[
\Gamma(V)(z,d)=\int_{\mathcal{Z}}\int_{\mathcal{E}}\max_{d^{\prime}\in\mathcal{D}}\left\{ u(s^{\prime},d^{\prime})+\beta V(z^{\prime},d^{\prime})\right\} w\left(s^{\prime}|z,d\right)d\Phi(s^{\prime}|z,d)
\]
where $s^{\prime}=\left(z^{\prime},e^{\prime}\right)$ and
\[
w\left(s^{\prime}|z,d\right)=w_{\varepsilon}\left(e^{\prime}|z^{\prime}\right)w_{Z}\left(z^{\prime}|z,d\right),\:\Phi\left(s^{\prime}|z,d\right)=\Phi_{\varepsilon}(e^{\prime}|z^{\prime})\Phi_{Z}\left(z^{\prime}|z,d\right).
\]
For any given candidate $V$, we can then approximate this integral
by Monte Carlo methods: First generate $N\geq1$ i.i.d. draws, $Z_{i}\left(z,d\right)\sim\Phi_{Z}\left(\cdot|z,d\right)$
and $\varepsilon_{i}\left(z,d\right)\sim\Phi_{\varepsilon}(\cdot|Z_{i}\left(z,d\right))$,
$i=1,...,N$, and then compute 
\begin{equation}
\Gamma_{N}(V)(z,d)=\sum_{i=1}^{N}\max_{d^{\prime}\in\mathcal{D}}\left\{ u\left(S_{i}\left(z,d\right),d^{\prime}\right)+\beta V\left(Z_{i}\left(z,d\right),d^{\prime}\right)\right\} w_{N,i}\left(z,d\right),\label{eq: Gamma_N non-smooth}
\end{equation}
where $S_{i}\left(z,d\right)=\left(Z_{i}\left(z,d\right),\varepsilon_{i}\left(z,d\right)\right)$
and
\begin{equation}
w_{N,i}\left(z,d\right)=\frac{w\left(S_{i}\left(z,d\right)|z,d\right)}{\sum_{i=1}^{N}w\left(S_{i}\left(z,d\right)|z,d\right)}.\label{eq: sampling weight}
\end{equation}
Note here that we normalize the importance weights so that $\sum_{i=1}^{N}w_{N,i}\left(z,d\right)=1$.
This is done to ensure that $\Gamma_{N}$ is a contraction mapping
on $\mathbb{B}\left(\mathcal{Z}\right)^{D}$. Similarly, we approximate
$\bar{\Gamma}\left(v\right)$ by 
\begin{equation}
\bar{\Gamma}_{N}(v)(z)=\sum_{j=1}^{N}\max_{d'\in\mathcal{D}}\left\{ u(z,\varepsilon_{j}\left(z,d^{\prime}\right),d^{\prime})+\beta\sum_{i=1}^{N}v\left(Z_{i}\left(z,d^{\prime}\right)\right)w_{Z,N,i}\left(z,d^{\prime}\right)\right\} w_{\varepsilon,N,j}\left(z,d^{\prime}\right),\label{eq: Gamma_N 2 non-smooth}
\end{equation}
where again we normalize the weights to ensure $\bar{\Gamma}_{N}$
is a contraction on $\mathbb{B}\left(\mathcal{Z}\right)$,
\[
w_{Z,N,i}\left(z,d\right)=\frac{w_{Z}\left(Z_{i}\left(z,d\right)|z,d\right)}{\sum_{i=1}^{N}w_{Z}\left(Z_{i}\left(z,d\right)|z,d\right)},\:w_{\varepsilon,N,i}\left(z,d\right)=\frac{w_{\varepsilon}\left(\varepsilon_{i}\left(z,d\right)|z\right)}{\sum_{i=1}^{N}w_{\varepsilon}\left(\varepsilon_{i}\left(z\right)|z\right)}.
\]
When $\varepsilon_{t}=\emptyset$, the simulated Bellman operator
$\bar{\Gamma}_{N}$ includes as special cases the ones considered
in \citet{Rust1997curse} (who chooses $\Phi_{Z}$ as the uniform
distribution on $\mathcal{Z}$) and \citet{Pal&Stacurski2013} (who
chooses $\Phi_{Z}=F_{Z}$).

\textbf{Example 1 (continued).} Suppose we can compute the integral
w.r.t. $\varepsilon_{t}$ analytically. In this case, the following
simplified version of the simulated Bellman operator can be employed,
\begin{equation}
\bar{\Gamma}_{N}(v)(z)=G_{\lambda}\left(\bar{u}(z)+\beta\sum_{i=1}^{N}v\left(Z_{i}\left(z\right)\right)w_{Z,N,i}\left(z\right)\right),\label{eq: Gamma_N Ex 1}
\end{equation}
where $G_{\lambda}$ was defined in eq. (\ref{eq: G social surplus}).
Importantly, the $\max$-function has been replaced by its smoothed
version $G_{\lambda}\left(\cdot\right)$.

If $F_{\varepsilon}(e|z)$ and $F_{Z}(z'|z,d)$ are smooth functions
w.r.t. $z$ then $\Gamma(V)(z)$ and $\bar{\Gamma}(v)(z)$ will be
smooth functions of $z$ as well. In contrast, the general versions
of $\Gamma_{N}(V)(z,d)$ and $\bar{\Gamma}_{N}(v)(z)$ are non-smooth
due to the presence of the $\max$-function in their definitions which
does not get smoothed for finite $N$. This in turn implies that their
corresponding fixed points, $V_{N}\left(z\right)=\Gamma_{N}(V_{N})\left(z\right)$
and $v_{N}\left(z\right)=\bar{\Gamma}_{N}(v_{N})\left(z\right)$,
will be non-differentiable w.r.t. the state variables, $z$, and w.r.t.
any underlying structural parameters in the model. This is an unattractive
feature for two reasons: First, estimation and counterfactuals will
be non-smooth problems. Second, the theoretical analysis of $V_{N}$
and $v_{N}$ becomes more complicated.

To resolve this issue, we take inspiration from the additive model
in Example 1 and propose to smooth the simulated Bellman operators
by replacing the ``hard'' $\max$-function appearing in eqs. (\ref{eq: Gamma_N non-smooth})
and (\ref{eq: Gamma_N 2 non-smooth}) by its smoothed version $G_{\lambda}(r)$
defined in eq. (\ref{eq: G social surplus}). This yields the following
smoothed simulated operators,
\begin{equation}
\Gamma_{N}(V)(z;\lambda)=\sum_{i=1}^{N}G_{\lambda}\left(u\left(S_{i}\left(z,d\right)\right)+\beta V\left(Z_{i}\left(z,d\right)\right)\right)w_{N,i}\left(z\right),\label{eq: smooth Gamma_N}
\end{equation}
\begin{equation}
\bar{\Gamma}_{N}(v)(z;\lambda)=\sum_{j=1}^{N}G_{\lambda}\left(u(z,\varepsilon_{j}\left(z\right))+\beta\sum_{i=1}^{N}v\left(Z_{i}\left(z\right)\right)w_{Z,N,i}\left(z\right)\right)w_{\varepsilon,N,j}\left(z\right),\label{eq: smooth Gamma-bar_N}
\end{equation}
where $u(z,\varepsilon_{j}\left(z\right))=\left(u(z,\varepsilon_{j}\left(z,1\right),1),...,u(z,\varepsilon_{j}\left(z,D\right),D)\right)$
and other vector functions are defined similarly. Setting $\lambda=0$
in eqs. (\ref{eq: smooth Gamma_N})-(\ref{eq: smooth Gamma-bar_N}),
we recover the original non-smooth versions defined in eqs. (\ref{eq: Gamma_N non-smooth})-(\ref{eq: Gamma_N 2 non-smooth}).
Thus, the smoothed versions are generalized versions of the original
ones. The use of $G_{\lambda}(r)$ in place of $\max_{d\in\mathcal{D}}r\left(d\right)$
generates an additional bias in the approximate solutions, but this
can be controlled for by suitable choice of $\lambda$. We now interpret
$\lambda>0$ as a smoothing parameter that plays a role similar to
that of the bandwidth in kernel regression estimation. Elementary
calculations show
\begin{equation}
0\leq G_{\lambda}\left(r\right)-\max_{d\in\mathcal{D}}r\left(d\right)\leq\lambda\log D,\label{eq: G approx error}
\end{equation}
so that $G_{\lambda}\left(r\right)\rightarrow\max_{d\in D}r\left(d\right)$,
as $\lambda\rightarrow0$, uniformly in $r\in\mathbb{R}^{D}$. Thus,
the smoothing entails a bias of order $O_{P}$$\left(\lambda\right)$.
We discuss the choice of $\lambda$ in practice in the next section.

In some situations, $G_{\lambda}(r)$ appears in the Bellman operators
as an inherent feature of the model specification in which case no
smoothing bias will be present. We saw this in Example 1 and it extends
to the following class of models: Suppose that $\varepsilon_{t}=\left(\varepsilon_{t}^{\text{\ensuremath{\left(1\right)}}},\varepsilon_{t}^{\left(2\right)}\right)$
with $\varepsilon_{t}^{\text{\ensuremath{\left(1\right)}}}=\left(\varepsilon_{t}^{\text{\ensuremath{\left(1\right)}}}\left(1\right),...,\varepsilon_{t}^{\text{\ensuremath{\left(1\right)}}}\left(D\right)\right)$
are mutually independent extreme value shocks that enter the per-period
utility additively,
\begin{equation}
d_{t}=\arg\max_{d\in\mathcal{D}}\left\{ \bar{u}(Z_{t},\varepsilon_{t}^{\left(2\right)},d)+\lambda\varepsilon_{t}^{\text{\ensuremath{\left(1\right)}}}\left(d\right)+\beta V\left(Z_{t},d\right)\right\} ,\label{eq: model no bias}
\end{equation}
where as in Example 1 $\lambda>0$ is scale parameter that determines
the impact of $\varepsilon_{t}^{\text{\ensuremath{\left(1\right)}}}\left(d\right)$
on the per-period utility. By the same arguments as in Example 1,
we find that the expected value function in this case solves $\Gamma_{\lambda}\left(V\right)=V$
where
\[
\Gamma_{\lambda}(V)(z)=\int_{\mathcal{Z}}\int_{\mathcal{E}}G_{\lambda}\left(u(z^{\prime},e^{\prime}))+\beta V(z^{\prime})\right)dF_{\varepsilon^{\left(-1\right)}}(e^{\prime}|z)dF_{Z}(z'|z,d),
\]
and $\Gamma_{N,\lambda}$ in eq. (\ref{eq: smooth Gamma_N}) is clearly
an unbiased simulated version of $\Gamma_{\lambda}$. Similarly, $\bar{\Gamma}_{N,\lambda}$
is an unbiased estimator of $\bar{\Gamma}_{\lambda}$. To summarize,
if the original model of interest contains an additive extreme value
term, which is the case in many empirical papers, $G_{\lambda}$ appears
as part of the model and so no smoothing bias will be present in our
proposed simulated Bellman operators. 

The above shows that the smoothing device corresponds to adding structural
shocks to the DDP of interest. In earlier work on solving DDPs, researchers
have in some cases done the opposite and removed structural errors
in order to facilitate the numerical solution of the model; see \citet{Lumsdaineetal1992}
for one example of this. This was, however, done in the context of
discrete state variables with a small number of support points in
which case removing continuous structural errors meant that the Bellman
operators could be evaluated analytically. Our method is aimed at
models where the state variables are either continuous or have a very
large discrete support in which case simulations are required in the
first place to evaluate the Bellman operator. Once simulations are
introduced, there is little computational gains from removing shocks
from the model and instead the introduction of smoothing facilitates
solving and analyzing the corresponding solution.

\section{Approximate value functions\label{sec:Approximate-value}}

The smoothed simulated Bellman operators $\Gamma_{N}(V)\left(z,\lambda\right)$
and $\bar{\Gamma}_{N}(v)\left(z,\lambda\right)$ in eqs. (\ref{eq: smooth Gamma_N})-(\ref{eq: smooth Gamma-bar_N})
are functionals that, for given function $V$ and $v$, depend on
$\left(z,\lambda\right)$. We will here and in the following treat
them as functionals that take given function $V\left(z,\lambda\right)$
and $v\left(z,\lambda\right)$, respectively, and map them into functions
of $\left(z,\lambda\right)\in\mathcal{Z}\times\left[0,\bar{\lambda}\right]$
for some $\bar{\lambda}>0$. This simplifies the analysis of the impact
of smoothing. In particular, under suitable regularity conditions,
$\Gamma_{N}$ and $\bar{\Gamma}_{N}$ are contraction mappings on
$\mathbb{B}\left(\mathcal{Z}\times\left[0,\bar{\lambda}\right]\right)^{D}$
and $\mathbb{B}\left(\mathcal{Z}\times\left[0,\bar{\lambda}\right]\right)$,
respectively, where $\mathbb{B}\left(\mathcal{Z}\times\left[0,\bar{\lambda}\right]\right)$
denotes the space of bounded functions with domain $\mathcal{Z}\times\left[0,\bar{\lambda}\right]$.
Thus, they have unique fixed points $V_{N}\left(z,\lambda\right)$
and $v_{N}\left(z,\lambda\right)$ solving
\begin{equation}
V_{N}\left(z,\lambda\right)=\Gamma_{N}(V_{N})\left(z,\lambda\right),\:v_{N}=\bar{\Gamma}_{N,\lambda}(v_{N})\left(z,\lambda\right).\label{eq: Sim Fixed point}
\end{equation}
In practice, we will only solve for the particular value of $\lambda$
as chosen by us, but for the theory it proves helpful to treat the
solutions as mappings defined on $\left(z,\lambda\right)\in\mathcal{Z}\times\left(0,\bar{\lambda}\right)$.
However, solving these two simulated Bellman equations are not generally
feasible since these are infinite-dimensional problems. We here present
two ways to reduce the problems to finite-dimensional ones. The first
method is a generalized version of the so-called self-approximating
method proposed in \citet{Rust1997curse} while the second one uses
projection-based methods as advocated by \citet{Pal&Stacurski2013}.

\subsection{Self-approximating method}

\citet{Rust1997curse} proposed to turn the infinite-dimensional problems
in eq. (\ref{eq: Sim Fixed point}) into a finite-dimensional ones
by choosing the importance sampling to be based on marginal, instead
of conditional distributions. In our generalized version this corresponds
to restricting $\Phi_{Z}\left(z^{\prime}|z,d\right)=\Phi_{Z}\left(z^{\prime}\right)$
for some marginal distribution $\Phi_{Z}\left(\cdot\right)$ so that
the draws $Z_{i}\sim\Phi_{z}\left(\cdot\right)$ and $\varepsilon_{i}\sim\Phi_{\varepsilon}(\cdot|Z_{i})$,
$i=1,...,N$ no longer depend on $\left(z,d\right)$. In this case,
for a given value of $\lambda\in\left[0,\bar{\lambda}\right]$, the
fixed-point problems in eq. (\ref{eq: Sim Fixed point}) reduce to
the following two sets of $N$ nonlinear equations, 
\begin{equation}
V_{N,k}=\sum_{i=1}^{N}G_{\lambda}\left(u\left(S_{i}\right)+\beta V_{N,i}\right)w_{N,i}\left(Z_{k}\right),\label{eq: self-approx V}
\end{equation}
\begin{equation}
v_{N,k}=\sum_{j=1}^{N}G_{\lambda}\left(u(Z_{k},\varepsilon_{j})+\beta\sum_{i=1}^{N}v_{N,i}w_{z,i}\left(Z_{k}\right)\right)w_{\varepsilon,N,j}\left(Z_{k}\right),\label{eq: self-approx v}
\end{equation}
for $k=1,...,N$, that can be solved for w.r.t. $\left\{ V_{N,\lambda,k}:k=1,...,N\right\} $
and $\left\{ v_{N,\lambda,k}:k=1,...,N\right\} $, respectively. Here,
$V_{N,k}=V_{N}\left(Z_{k},\lambda\right)$ and $v_{N,k}=v_{N,\lambda}\left(Z_{k},\lambda\right)$,
$k=1,...,N$. Each of the two sets of equations have a unique solution
due to the contracting property of $\Gamma_{N,\lambda}$ and $\bar{\Gamma}_{N,\lambda}$.
Once, for example, eq. (\ref{eq: self-approx V}) has been solved,
the approximate expected value functions can be evaluated at any other
value $z$ by
\[
V_{N}\left(z,\lambda\right)=\sum_{i=1}^{N}G_{\lambda}\left(u\left(S_{i}\right)+\beta V_{N,i}\right)w_{N,i}\left(z\right).
\]
Note that $V_{N}\left(z,\lambda\right)$ is a smooth function even
if $\lambda=0$ as long as $w_{N,i}\left(z\right)$ is smooth and
so smoothing is not needed for this property to hold when marginal
samplers are employed. However, without smoothing, the set of equations
(\ref{eq: self-approx V}) become non-smooth w.r.t. the variables
$\left\{ V_{N,k}:k=1,...,N\right\} $ and so cannot be solved using
derivative-based methods. Thus, the numerical implementation of the
self-approximating method still benefits from smoothing.

In addition to smoothing, the above self-approximating method differs
from Rust's original proposal in two other ways: First, while \citet{Rust1997curse}
solved for the value function $\nu\left(z,\varepsilon\right)$, we
here solve for either $V\left(z,\lambda\right)$ or $v\left(z,\lambda\right)$
for a fixed value of $\lambda$. As explained in Section \ref{sec:Model},
$V$ and $v$ convey the same information as $\nu$ and at the same
time they are of lower dimension in terms of variables and are more
smooth, features which facilitate their numerical approximation. Moreover,
our formulation allows for the following generalized version of the
simulated Bellman equations for $v_{N}$,
\begin{equation}
v_{N,k}=\sum_{j=1}^{\tilde{N}}G_{\lambda}\left(u(Z_{k},\varepsilon_{j})+\beta\sum_{i=1}^{N}v_{N,i}w_{z,N,i}\left(Z_{k}\right)\right)w_{\varepsilon,N,j}\left(Z_{k}\right),\label{eq: self-approx v-1}
\end{equation}
where we allow for different number of draws from $\Phi_{\varepsilon}$
and $\Phi_{Z}$. In particular, we can choose $\tilde{N}$ as large
as we wish (thereby decreasing the variance of the problem) without
increasing the number of variables that need to be solved for ($N$).
A similar generalization of the simulated Bellman equations for $V_{N}$
is possible. Second, we here only require that the state dynamics
together with the chosen importance sampler satisfy (\ref{eq: w def});
in contrast, \citet{Rust1997curse} assumed that $S_{t}$ was continuously
distributed with compact support and chose as importance sampler the
uniform distribution with same support. Thus, our version allows for
a broader class of models and samplers. 

The self-approximating method may not always work well: First, finding
a marginal distribution $\Phi_{Z}\left(\cdot\right)$ so that (\ref{eq: w def})
holds can be difficult in some models. For example, in many specifications
with continuous dynamics, the transition density $f_{Z}\left(z^{\prime}|z,d\right)$
of $Z_{t}$ will have singularities, e.g., $\lim_{z^{\prime}\rightarrow z}f_{Z}\left(z^{\prime}|z,d\right)=+\infty$,
in which case $w_{Z}\left(z^{\prime}|z,d\right)=f_{Z}\left(z^{\prime}|z,d\right)/\phi_{Z}\left(z^{\prime}\right)$
is not well-defined no matter how we choose $\phi_{Z}\left(z^{\prime}\right)$.
And even if (\ref{eq: w def}) does hold, the use of marginal samplers
instead of conditional ones will generally lead to a larger variance
of the solutions since the ``marginal'' draws $Z_{1},...,Z_{N}$
do not adapt to the changing shape of $F_{Z}\left(\cdot|z,d\right)$
as a function of $z$. In particular, many of the draws may fall outside
of the support of $F_{Z}\left(\cdot|z,d\right)$ and so are ``wasted''
in which case a large $N$ is required to achieve a reasonable approximation;
see Section \ref{sec:Numerical} for an example of this. This issue
tends to become more severe in higher dimensions ($d_{Z}$ is ``large'')
since the volume of the support shrinks, and so the self-approximating
method will generally suffer from a built-in curse-of-dimensionality.
This curse-of-dimensionality does not appear in the subclass of models
that \citet{Rust1997curse} focused on where it was assumed that $S_{t}|S_{t-1},d_{t-1}$
has support $\left[0,1\right]^{\dim\left(S_{t}\right)}$ for all values
of $S_{t-1},d_{t-1}$.

Finally, given that $V_{N,\lambda}$ and $v_{N,\lambda}$ are solutions
to non-linear equations, a large variance in the simulated Bellman
operator translates into a large bias as is well-known from non-linear
GMM estimators. This can be controlled for by choosing $N$ large.
But large $N$ means that numerically solving either (\ref{eq: self-approx V})
or (\ref{eq: self-approx v}) becomes computationally very costly.
These issues motivate us to pursue a sieve-based solution strategy.

\subsection{Sieve-based method}

We now return to the general versions of the simulated Bellman operators
and so again allow for conditional importance samplers. Let $\mathcal{\bar{V}\subseteq\mathbb{B}\left(\mathcal{Z}\right)}$
be a suitable function space that $z\mapsto v_{N}\left(z;\lambda\right)$
defined in (\ref{eq: Sim Fixed point}) is known to lie in; see below
for more details on this. We then choose a finite-dimensional function
space (commonly called a sieve in the econometrics literature) $\mathcal{\bar{V}}_{K}=\left\{ v_{K}\left(\cdot;\alpha\right):\mathcal{Z}\mapsto\mathbb{R}|\alpha\in\mathcal{A}_{K}\right\} \subseteq\bar{\mathcal{V}}$,
where $\mathcal{A}_{K}\subseteq\mathbb{R}^{K}$ is a parameter set
with $K<\infty,$ that provides a good approximation to functions
in $\bar{\mathcal{V}}$. Similarly, we let $\mathcal{V}\subseteq\mathbb{B}\left(\mathcal{Z}\right)^{D}$
be a space of $D$-dimensional vector functions that the solution
$z\mapsto V_{N}\left(z;\lambda\right)$ to (\ref{eq: Sim Fixed point})
lie in and $\mathcal{V}_{K}=\left\{ V_{K}\left(\cdot;\alpha\right):\mathcal{Z}\mapsto\mathbb{R}^{D}|\alpha\in\mathcal{A}_{K}\right\} \subseteq\mathcal{V}$
be our sieve for this space. Let

\begin{equation}
\bar{\Pi}_{K}\left(v\right)=\arg\min_{v^{\prime}\in\mathcal{\bar{V}}_{K}}\left\Vert v-v^{\prime}\right\Vert _{\mathcal{\bar{V}}},\:\Pi_{K}\left(V\right)=\arg\min_{V^{\prime}\in\mathcal{V}_{K}}\left\Vert V-V^{\prime}\right\Vert _{\mathcal{V}},\label{eq: Projector def}
\end{equation}
be the corresponding projections for given (pseudo-) norms $\left\Vert \cdot\right\Vert _{\mathcal{\bar{V}}}$
and $\left\Vert \cdot\right\Vert _{\mathcal{V}}$ as chosen by us
as well. We then approximate $V_{N,\lambda}$ and $v_{N,\lambda}$
by the solutions to the projected Bellman equations,
\begin{equation}
\hat{v}_{N,\lambda}=\arg\min_{v\in\mathcal{\bar{V}}_{K}}\left\Vert v-\bar{\Pi}_{K}\bar{\Gamma}_{N,\lambda}(v)\right\Vert _{\mathcal{\bar{V}}},\:\hat{V}_{N,\lambda}=\arg\min_{V\in\mathcal{V}_{K}}\left\Vert V-\Pi_{K}\Gamma_{N,\lambda}(V)\right\Vert _{\mathcal{V}}.\label{eq: projected sim fixed point}
\end{equation}
These are finite-dimensional problems of size $K$. When $K$ is small
relative to $N$, which will generally be the case, the above problems
are computationally much more tractable compared to the corresponding
self-approximating ones. Note here that these projection-based approximations
are different from the least-squares approximations that would solve
$\min_{V\in\mathcal{V}_{K}}\left\Vert V-\Gamma_{N,\lambda}(V)\right\Vert _{\mathcal{V}}$
and $\min_{v\in\mathcal{\bar{V}}_{K}}\left\Vert v-\bar{\Gamma}_{N,\lambda}(v)\right\Vert _{\mathcal{\bar{V}}}$,
respectively. In particular, by suitable choice of the projection
operators, $\Pi_{K}\Gamma_{N,\lambda}$ and $\bar{\Pi}_{K}\bar{\Gamma}_{N,\lambda}$
will be contraction mappings w.r.t. $\left\Vert \cdot\right\Vert _{\infty}$
guaranteeing that $\hat{V}_{N,\lambda}$ and $\hat{v}_{N,\lambda}$
exist and are unique. The following discussion focuses on the integrated
value function approximation since it carries over with only minor
modifications to the one of the expected value function. We discuss
their numerical implementation in further detail in the subsection
below. 

The projection operator $\bar{\Pi}_{K}$ can be thought of as a function
approximator with the approximation error being $v-\bar{\Pi}_{K}\left(v\right)$
for a given function $v$. Roughly speaking, the projection-based
method approximates $v_{N}$ in (\ref{eq: Sim Fixed point}) by $\hat{v}_{N}=\bar{\Pi}_{K}\left(v_{N}\right)$
which incurs an additional sieve approximation error, $v_{N}-\bar{\Pi}_{K}\left(v_{N}\right)$.
The smoothness of $v_{N}$ here proves helpful since many well-known
sieves are able to provide good approximations of smooth functions
using a low-dimensional space (``small'' $K$). Due to these features,
our proposed projection-based solutions will generally suffer from
quite small additional biases relative to the exact simulated solution.
This is in contrast to existing projection-based solution methods,
such as the one in \citet{Pal&Stacurski2013}, that aim at approximating
the value function $\nu\left(s\right)$ which is non-differentiable.

The smoothness of $v_{N}$ here help guiding us in choosing the sieve:
It allows us to restrict $\mathcal{\bar{V}}$ to a suitable smoothness
class and then import existing approximation methods for smooth functions
as developed in the literature on numerical methods and nonparametric
econometrics. A leading example is the class of linear function approximations
where the finite-dimensional function space takes the form of $\mathcal{\bar{V}}_{K}=\left\{ \alpha^{\prime}B_{K}\left(z\right):\alpha\in\mathbb{R}^{K}\right\} $
for a set of basis functions $B_{K}\left(z\right)=\left(b_{1}\left(z\right),...,b_{K}\left(z\right)\right)^{\prime}$.
The basis functions can be chosen as, for example, Chebyshev polynomials
or B-splines that are able to approximate smooth functions well. However,
other non-linear function space are possible such as wavelets, artificial
neural networks and shrinkage-type function approximators such as
LASSO, where the additional constraints are imposed on $\alpha$;
we refer to \citet{Chen2007} for a general overview of different
function approximators and constrained sieve estimators. We also allow
for flexibility in terms of the chosen norms $\left\Vert \cdot\right\Vert _{\mathcal{\bar{V}}}$
with a leading example being $\left\Vert v\right\Vert _{\mathcal{\bar{V}}}=\sum_{i=1}^{M}v^{2}\left(z_{i}\right)$
for a set of design points $z_{1},...,z_{M}\in\mathcal{Z}.$ Very
often the $M\geq1$ design points will be chosen in conjunction with
the sieve.

The above procedure does not suffer from any of the above mentioned
issues of the self-approximating method: We can use conditional importance
samplers freely which can be designed to control the variance of the
simulated Bellman operators; and the dimension of the problem remains
$K$ irrespectively of the number of draws $N$. The main drawback
is that unique solutions to eqs. (\ref{eq: projected sim fixed point})
do not necessarily exist for a given choice of $N$ and $K$. A sufficient
condition for this to hold is that $\bar{\Pi}_{K}$ is a non-expansive
operator w.r.t $\left\Vert \cdot\right\Vert _{\infty}$, that is,
$\left\Vert \bar{\Pi}_{K}\left(v_{1}\right)-\bar{\Pi}_{K}\left(v_{2}\right)\right\Vert _{\infty}\leq\left\Vert v_{1}-v_{2}\right\Vert _{\infty}$
for any two functions $v_{1},v_{2}\in\mathcal{\bar{\mathcal{V}}}$,
since this translates into $\bar{\Pi}_{K}\Gamma_{N}$ being a contraction
mapping. However, while $\bar{\Pi}_{K}$ is non-expansive w.r.t. $\left\Vert \cdot\right\Vert _{\bar{\mathcal{V}}}$
by definition, it is not necessarily non-expansiveness w.r.t $\left\Vert \cdot\right\Vert _{\infty}$.
\citet{Pal&Stacurski2013} provide some examples of projections that
are non-expansive w.r.t. $\left\Vert \cdot\right\Vert _{\infty}$,
but these are unfortunately computationally expensive to use in general.
But $\bar{\Pi}_{K}$ will generally be close to non-expansive w.r.t
$\left\Vert \cdot\right\Vert _{\infty}$ asymptotically as $K\rightarrow\infty$
for a wide range of sieves and pseudo-norms in the sense that
\[
\left\Vert \bar{\Pi}_{K}\right\Vert _{op,\infty}:=\sup_{v\in\bar{\mathcal{V}},\left\Vert v\right\Vert =1}\left\Vert \bar{\Pi}_{K}\left(v\right)\right\Vert _{\infty}\leq\sup_{v\in\bar{\mathcal{V}},\left\Vert v\right\Vert =1}\left\Vert \bar{\Pi}_{K}\left(v\right)-v\right\Vert _{\infty}+1,
\]
where the first term in the last expression will go to zero in great
generality as $K\rightarrow\infty$ for many popular sieves (see next
subsection for details). Given that $\bar{\Gamma}_{N}$ is a contraction
with Lipschitz coefficient $\beta<1$, this in turn implies that $\bar{\Pi}_{K}\bar{\Gamma}_{N}$
will be a contraction mapping for all $K$ large enough. This will
be used in our asymptotic analysis of the algorithm. Unfortunately,
it is generally not known how large $K$ should be chosen to ensure
$\bar{\Pi}_{K}\bar{\Gamma}_{N}$ is a contraction. But in our numerical
experiments we did not experience any convergence problems.

\subsection{\label{subsec:Numerical-implementation}Numerical implementation
of the two methods}

We here discuss in more detail the numerical implementation of the
self-approximating and projection-based methods. First, the researcher
has to choose the importance sampling distributions, the smoothing
parameter $\lambda$ and, in the case of the projection-based method,
the function approximation method. Second, given these choices, either
eq. (\ref{eq: self-approx v}) or (\ref{eq: projected sim fixed point})
has to be solved for. As before, the discussion here focuses on solving
for the integrated value function since most results and arguments
for this case carries over with very minor modifications to the expected
value function. 

\subsubsection{Importance sampler}

The choices of $\Phi_{Z}$ and $\Phi_{\varepsilon}$ determine the
variance of $\bar{\Gamma}_{N}$ and should ideally be tailored to
minimize it. In the case of projection-based methods, where we can
choose $\Phi_{Z}$ and $\Phi_{\varepsilon}$ as conditional distributions,
we can rely on the already existing theory for efficient importance
sampling for how to do so; see Chapter 3 in \citet{Robert&Casella2004}
for an introduction. In our numerical experiments, we did not experiment
with different choices and throughout set $\Phi_{Z}=F_{Z}$ and $\Phi_{\varepsilon}=F_{\varepsilon}$.

In the case of the self-approximating method, the choice of $\Phi_{Z}$
is restricted to the class of marginal distributions. Generally, this
entails a large variance of the corresponding simulated Bellman operators.
It will hold in great generality that $\left(Z_{t},d_{t}\right)$
has a stationary distribution, say, $F_{S}^{*}\left(z,d\right)$.
In this case, a suitable choice would be the marginal of this, $\Phi_{Z}\left(z\right)=F_{S}^{*}\left(z,D\right)$.
However, the stationary distribution depends on the value function
and so is rarely available on closed form; so this strategy requires
an initial exploration of the model and its solution. Alternatively,
one can try to construct a good approximation of the stationary approximation
through a mixture Markov model on the form $\Phi_{Z}\left(z^{\prime}\right)=\sum_{d\in\mathcal{D}}\int\omega_{d}\left(z\right)F_{Z}\left(z^{\prime}|z,d\right)d\mu_{Z}\left(z\right)$
for a set of pre-specified mixture weights $\omega_{d}\left(z\right)\geq0$.
In the numerical experiments, we follow \citet{Rust1997curse} and
choose $\Phi_{Z}\left(z\right)$ as the uniform distribution on $\mathcal{Z}$
which we conjecture is far from optimal in many cases, and so more
research in this direction is needed.

\subsubsection{Smoothing}

The use of $G_{\lambda}\left(r\right)$ in place of $\max_{d\in\mathcal{D}}r\left(d\right)$
generally generates an additional bias in the corresponding integrated
value function of order $O\left(\lambda\right)$. At the same time,
the variance of $v_{N}$ is an increasing function of $\lambda$.
Thus, ideally we would like to choose $\lambda$ to balance these
two effects. A natural criterion would be to minimize the so-called
integrated mean-square-error, $\lambda^{*}=\arg\min_{\lambda\geq0}E\left[\int_{\mathcal{Z}}\left\Vert v_{N,\lambda}\left(z\right)-v\left(z\right)\right\Vert ^{2}dF_{Z}\left(z\right)\right]$,
where $F_{Z}\left(z\right)$ is a suitably chosen distribution such
as the stationary one of $Z_{t}$. Since $v\left(z\right)$ is unknown
and we cannot evaluate the expectations, $\lambda^{*}$ cannot be
solved for but cross-validation methods can be used instead. This
could in principle be done along the same lines as bandwidth selection
for smoothed empirical cdfs, see \citet{Bowmanetal1998}. However,
this is computationally somewhat burdensome. Moreover, in our numerical
experiments we found that the quality of the approximate value function
was quite insensitive to the choice of $\lambda$ and so in practice
we recommend using very little smoothing such as $\lambda=0.01$.

\subsubsection{\label{subsec:Function-approximation}Function approximation}

As mentioned earlier, many approximation architectures are available
in the literature. In our numerical experiments we focus on the class
of linear function approximators where $\bar{\mathcal{V}}_{K}=\left\{ \alpha'B_{K}\left(z\right):\alpha\in\mathbb{R}^{K}\right\} $
for a set of pre-specified basis functions $B_{K}\left(z\right)\in\mathbb{R}^{K}$.
For a given set of $M\geq1$ design points in $\mathcal{Z}$, $z_{1},...,z_{M}$,
eq. (\ref{eq: Projector def}) then becomes
\begin{equation}
\bar{\Pi}_{K}\left(v\right)\left(z\right)=B_{K}\left(z\right)^{\prime}\left[\sum_{i=1}^{M}B_{K}\left(z_{i}\right)B_{K}\left(z_{i}\right)^{\prime}\right]^{-1}\sum_{i=1}^{M}B_{K}\left(z_{i}\right)v\left(z_{i}\right).\label{eq: Pi least-squares}
\end{equation}
The design points may either be random or deterministic and can be
chosen relative to the basis functions to ensure that $\bar{\Pi}_{K}$
is easy to compute and provides a good approximation for a broad class
of functions. The performance of most function approximations will
depend on the smoothness of the function of interest. 

A standard smooth function class often considered in approximation
theory is the following: For any vector $a=\left(a_{1},...,a_{d_{Z}}\right)\in\mathbb{N}_{0}^{d_{X}}$,
let $D^{a}f\left(x\right)=\partial^{\left|a\right|}f\left(x\right)/\left(\partial x_{1}^{\alpha_{1}}\cdots\partial x_{d_{Z}}^{a_{d_{z}}}\right)$,
where $\left|a\right|=a_{1}+\cdots+a_{d_{z}}$, be the corresponding
partial derivative. For $\alpha>0$, let $\underline{\alpha}\geq0$
be the greatest integer smaller than $\alpha$. For any $\underline{\alpha}$
times differentiable function $f\left(x\right)$, we then define
\begin{equation}
\Vert f\Vert_{\alpha,\infty}=\max_{\left|a\right|\leq\underline{\alpha}}\Vert D^{a}f\Vert_{\infty}+\max_{\left|a\right|=\underline{\alpha}}\sup_{x_{1}\neq x_{2}}\frac{\left|D^{a}f\left(x_{1}\right)-D^{a}f\left(x_{2}\right)\right|}{\left\Vert x_{1}-x_{2}\right\Vert ^{\alpha-\underline{\alpha}}},\label{eq: hoelder norm}
\end{equation}
and let $\boldsymbol{\mathbb{C}}_{r}^{\alpha}\left(\mathcal{X}\right)$
be the space of all $\underline{\alpha}\geq0$ times continuously
differentiable functions $f:\mathcal{X}\mapsto\mathbb{R}$ with $\Vert f\Vert_{\alpha,\infty}<r$.
Due to smoothing, $v_{N}\in\boldsymbol{\mathbb{C}}_{r}^{\alpha}\left(\mathcal{Z}\right)$,
for some $r<\infty$, if $u$ and $F_{Z}$ are sufficiently smooth
(see Theorem \ref{Thm: contraction}). We can therefore import existing
results for approximation methods for functions in $\boldsymbol{\mathbb{C}}_{r}^{\alpha}\left(\mathcal{Z}\right)$:
\begin{example}
Polynomial interpolation using tensor products. Suppose we use $J$th
order Chebyshev interpolation with $M\geq J$ nodes in each of the
$d_{z}$ dimensions, or a $J$th order B-spline interpolation with
$M\geq J$ number of nodes in each of the $d_{z}$ dimensions (see
Appendix \ref{sec:Sieve-spaces} for their precise expressions). Let
$p_{1},...,p_{J}$ denote the $J$ polynomials; we then have
\[
B_{K}\left(z\right)=\left\{ p_{j_{1}}\left(z_{1}\right)\cdots p_{j_{d_{z}}}\left(z_{d_{z}}\right):j_{1},...,j_{d_{z}}=1,...,J\right\} ,
\]
 which is of dimension $K=J^{d_{Z}}$. Choosing $J\geq\underline{\alpha}$,
where $\underline{\alpha}\geq1$ denotes the number of derivatives
of $v\left(z\right)$, both interpolation schemes satisfy, for any
radius $r<\infty$,
\[
\sup_{v\in\boldsymbol{\mathbb{C}}_{r}^{\alpha}\left(\mathcal{Z}\right)}\left\Vert \bar{\Pi}_{K}\left(v\right)-v\right\Vert _{\infty}=O\left(\frac{\log\left(J\right)}{J^{\alpha}}\right)=O\left(\frac{\log\left(K\right)}{K^{\alpha/d}}\right);
\]
see p.14 in  \citet{Rivlin1990} for Chebyshev interpolation and \citet{Schumaker2007}
for B-splines. If $v\left(z\right)$ is analytic ($\underline{\alpha}=\infty$),
the above result holds for any (large) $J,\alpha<\infty$.
\end{example}
As can be seen from the above example, standard polynomial tensor
product approximations suffer from the well-known computational curse
of dimensionality: To reach a given level of error tolerance, the
total number of basis functions $K$ has to grow exponentially as
$d_{Z}$ increases. This issue can be partially resolved by using
more advanced function approximation methods:
\begin{example}
Interpolation with sparse grids. Instead of using tensor-product basis
functions to approximate a given function, where the total number
of basis function and interpolation points will have to grow exponentially
with $d_{Z}$ to control the approximation error, one can instead
use so-called Smolyak sparse grids; see, e.g., \citet{Juddetal2014}
and \citet{Brumm&Scheidegger2017}. Using these, the number of grid
points needed to obtain a given error tolerance are reduced from $O\left(M^{d_{Z}}\right)$
to $O\left(M\left(\log M\right)^{d_{Z}}\right)$ with only slightly
deteriorated accuracy. 
\end{example}
\begin{example}
Variable selection, shape constraints, shrinkage estimators, and machine
learning. An alternative way of breaking the curse of dimensionality
appearing in Example 2 is to select the basis functions judiciously.
This could, for example, be done using standard variable selection
methods; one example of this approach can be found in \citet{Chen1999}.
Alternatively, one can in some cases show that the value functions
satisfy certain shape constraints that can then be imposed on the
sieve; see, for example, \citet{Cai&Judd2013}. Other automated selection
methods include shrinkage methods where a penalization term is added
to the least-squares criterion. Again this leads to a more sparse
representation which is able to break the curse-of-dimensionality.
Finally, machine learning algorithms, such as neural networks, may
potentially be useful in approximating the value functions; see, for
example, \citet{Chen&White1999}. On the other hand, these methods
are generally computationally more expensive compared to the least-squares
projection method in (\ref{eq: Pi least-squares}) and require that
the value function satisfies certain sparsity. We will investigate
the performance of such more advanced projection operators in future
work.
\end{example}
As noted earlier, there is no guarantee that a given function approximator
is non-expansive. But this can, in principle, be examined numerically
for a given choice of $\bar{\Pi}_{K}$. For the least-squares projection,
this amounts to solving, for a given choice of basis functions and
grid points,
\begin{equation}
\left\Vert \bar{\Pi}_{K}\right\Vert _{op,\infty}=\sup_{v\in\mathbb{R}^{M},\left\Vert v\right\Vert =1}\sup_{z\in\mathit{\mathcal{Z}}}\left|B_{K}\left(z\right)^{\prime}\left[\sum_{i=1}^{M}B_{K}\left(z_{i}\right)B_{K}\left(z_{i}\right)^{\prime}\right]^{-1}\sum_{i=1}^{M}B_{K}\left(z_{i}\right)v_{i}\right|.\label{eq: Pi-norm}
\end{equation}
When $M$ and/or $\dim\mathcal{Z}$ is large this may be computationally
demanding and instead one can obtain a lower bound by restricting
$z$ to only take values on the chosen set of grid points: With $\mathbf{B}_{K,M}\in\mathbb{R}^{K\times M}$containing
the basis functions evaluated at the grid points, we can represent
$\bar{\Pi}_{K}$ when only evaluated at chosen grid points $z_{1},....,z_{M}$
in terms of
\[
\mathbf{P}_{K,M}=\mathbf{B}_{K,M}'\left[\mathbf{B}_{K,M}\mathbf{B}_{K,M}'\right]^{-1}\mathbf{B}_{K,M}\in\mathbb{R}^{M\times M}.
\]
In particular, it is easily checked that with the supremum in (\ref{eq: Pi-norm})
being only taken over $z\in\left\{ z_{1},...,z_{M}\right\} $, $\left\Vert \bar{\Pi}_{K}\right\Vert _{op,\infty}=\left\Vert \mathbf{\mathbf{P}_{K,M}}\right\Vert _{op,\infty}$.
Furthermore, $\left\Vert \mathbf{P}_{K,M}\right\Vert _{op,\infty}\leq1$
if and only if
\[
\max_{i=1,...,M}\sum_{j=1}^{M}\left|p_{ij}\right|\leq1,
\]
where $p_{ij}$ is the $\left(i,j\right)$th element of $\mathbf{P}$,
c.f. \citet{Lizotte2011}.

\subsubsection{Solving for the approximate value functions}

Computing the simulated self-approximating solution or the projection-based
one can be done using three different numerical algorithms: Successive
approximation (SA), Newton-Kantorovich (NK), or a combination of the
two. The latter corresponds to the hybrid solution method proposed
in \citet{Rust1988Siam}. We here discuss the implementation of these
algorithms with focus on the sieve-based approximation of $v$; the
implementations of the sieve approximation of $V$ and the self-approximating
solutions of either of the two follow along the same lines. The main
difference between solving for $V$ or $v$ is that the latter involves
smaller computational burden since it is a scalar function while the
former is a $D$-dimensional vector function. 

SA utilizes that (for $K$ chosen large enough), $\bar{\Pi}_{K}\bar{\Gamma}_{N,\lambda}$,
is a contraction mapping which guarantees that the following algorithm
will converge towards the solution to (\ref{eq: self-approx V}),
\begin{equation}
\hat{v}_{N}^{\left(k\right)}=\bar{\Pi}_{K}\bar{\Gamma}_{N}(\hat{v}_{N}^{\left(k-1\right)}),\label{eq: V-hat succesive}
\end{equation}
for $k=1,2,...$, given some initial guess $\hat{v}_{N}^{\left(0\right)}$.
In the leading case of (\ref{eq: Pi least-squares}), this can be
expressed as a sequence of least-squares problems that are easily
computed: $\hat{v}_{N}^{\left(k\right)}\left(z\right)=\hat{\alpha}_{k}^{\prime}B_{K}\left(z\right)$
where
\[
\hat{\alpha}_{k}=\left[\sum_{i=1}^{M}B_{K}\left(z_{i}\right)B_{K}\left(z_{i}\right)^{\prime}\right]^{-1}\sum_{i=1}^{M}B_{K}\left(z_{i}\right)\bar{\Gamma}_{N,\lambda}(\hat{\alpha}_{k-1}^{\prime}B_{K})\left(z_{i}\right)^{\prime}\in\mathbb{R}^{K},
\]
for $k=1,2,...$, given some initial guess $\hat{\alpha}_{0}$. In
the case where $\varepsilon_{t}=\emptyset$ or when the model is on
the form eq. (\ref{eq: model no bias}) with $\varepsilon_{t}^{\text{(1}}$
being extreme-valued distributed and $\varepsilon_{t}^{\left(-1\right)}=\emptyset$,
\[
\bar{\Gamma}_{N}(\alpha^{\prime}B_{K})(z_{i};\lambda)=G_{\lambda}\left(u\left(z\right)+\beta\alpha^{\prime}\sum_{j=1}^{N}B_{K}\left(Z_{j}\left(z_{i}\right)\right)w_{Z,N,j}\left(z_{i}\right)\right),
\]
and so $\sum_{j=1}^{N}B_{K}\left(Z_{j}\left(z_{i}\right)\right)w_{z,N,j}\left(z_{i}\right)$,
$i=1,...,M,$ only need to be computed once and then recycled in each
iteration; in contrast, the simulated averages appearing in $\Gamma_{N}(\hat{\alpha}_{k-1}'B_{K})\left(z_{i}\right)$,
$i=1,...,M$, have to be recomputed in each step of the SA algorithm.
Thus, in this special case, it is faster to (approximately) solve
for $v_{N,\lambda}$ instead of $V_{N,\lambda}$. While SA is guaranteed
to converge globally when $\bar{\Pi}_{K}\bar{\Gamma}_{N}$ is a contraction,
the rate of convergence will be slow with the error vanishing at rate
$\beta^{k}$,
\begin{equation}
\left\Vert \hat{v}_{N}^{\left(k\right)}-v_{N}\right\Vert _{\infty}\leq\frac{\beta^{k}\left(1+\beta\right)}{1-\beta}\left\Vert \hat{v}_{N}^{\left(0\right)}-v_{N}\right\Vert _{\infty}.\label{eq: SA conv rate}
\end{equation}

To speed up convergence, we therefore follow \citet{Rust1988Siam}
and combine SA with NK iterations since NK converges with a quadratic
rate once a given guess of the value function is close enough to the
fixed point. Moreover, in situations where $\bar{\Pi}_{K}\bar{\Gamma}_{N}$
is expansive, NK is still guaranteed to converge locally. Since both
the self-approximating and sieve-based methods solve finite-dimensional
problems, the NK algorithm for these are equivalent to Newton's method.
First consider the sieve-based method where we focus on the least-squares
projection as given in (\ref{eq: Pi least-squares}). We are then
seeking $\hat{\alpha}$ solving the following $K$ equations,
\[
\bar{S}_{N,K}\left(\alpha;\lambda\right)=0,
\]
with
\[
\bar{S}_{N,K}\left(\alpha;\lambda\right)=\alpha-\left[\sum_{i=1}^{M}B_{K}\left(z_{i}\right)B_{K}\left(z_{i}\right)^{\prime}\right]^{-1}\sum_{i=1}^{M}B_{K}\left(z_{i}\right)\bar{\Gamma}_{N}\left(\alpha^{\prime}B_{K}\right)\left(z_{i};\lambda\right).
\]
The corresponding derivatives of the left-hand side as a function
w.r.t. $\alpha$ can be expressed in terms of the Hadamard differential
of $\bar{\Gamma}_{N}$ w.r.t. $v$,
\begin{eqnarray*}
\nabla\bar{\Gamma}_{N}(v)\left[dv\right](z;\lambda) & = & \beta\sum_{d\in\mathcal{D}}\sum_{j=1}^{N}\dot{G}_{d,\lambda}\left(u\left(z,\varepsilon_{j}\left(z\right)\right)+\beta\sum_{i=1}^{N}v\left(Z_{i}\left(z\right);\lambda\right)w_{Z,N,i}\left(z\right)\right)\\
 & \times & \left(\sum_{k=1}^{N}dv\left(Z_{k}\left(z,d\right);\lambda\right)w_{Z,N,k}\left(z,d\right)\right)w_{\varepsilon,N,j}\left(z\right),
\end{eqnarray*}
where $dv:\mathcal{Z}\times\left[0,\bar{\lambda}\right]\mapsto\mathbb{R}$
is the direction and
\begin{equation}
\dot{G}_{\lambda,d}^{\left(r\right)}(r)=\frac{\partial G_{\lambda}(r)}{\partial r\left(d\right)}=\frac{\exp\left(\frac{r\left(d\right)}{\lambda}\right)}{\sum_{d^{\prime}\in\mathcal{D}}\exp\left(\frac{r\left(d^{\prime}\right)}{\lambda}\right)}.\label{eq: G r-deriv}
\end{equation}
The partial derivatives of $\bar{S}_{N,K}\left(\alpha;\lambda\right)$
then becomes
\[
\bar{H}_{N,K}\left(\alpha;\lambda\right)=I_{K}-\left[\sum_{i=1}^{M}B_{K}\left(z_{i}\right)B_{K}\left(z_{i}\right)^{\prime}\right]^{-1}\sum_{i=1}^{M}B_{K}\left(z_{i}\right)\nabla\bar{\Gamma}_{N}\left(\alpha^{\prime}B_{K}\right)\left[B_{K}\right]\left(z_{i};\lambda\right)^{\prime}\in\mathbb{R}^{K\times K}.
\]
With these definitions, the NK algorithm takes the form
\[
\hat{\alpha}_{k}=\hat{\alpha}_{k-1}-\bar{H}_{N,K}^{-1}\left(\hat{\alpha}_{k-1};\lambda\right)\bar{S}_{N,K}\left(\hat{\alpha}_{k-1};\lambda\right).
\]

The NK algorithm for the self-approximating method is on the same
form, except that we now solve directly for the value function at
the $N$ draws. With slight abuse of notation, let $v_{N}=\left\{ v_{N,\lambda}\left(Z_{k};\lambda\right):k=1,...,N\right\} $
be the vector of integrated values across the set of draws solving
$\bar{S}_{N}\left(v_{N};\lambda\right)=0$ where
\begin{equation}
\bar{S}_{N,k}\left(v_{N};\lambda\right)=v_{N,k}-\sum_{j=1}^{N}G_{\lambda}\left(u(Z_{k},\varepsilon_{j})+\beta\sum_{i=1}^{N}v_{N,i}w_{z,N,i}\left(Z_{k}\right)\right)w_{\varepsilon,N,j}\left(Z_{k}\right),\label{eq: self-approx v-2}
\end{equation}
for $k=1,...,N$. The corresponding derivatives is $\bar{H}_{N}\left(\alpha;\lambda\right)=\left(\bar{H}_{N,1}\left(\alpha;\lambda\right),...,\bar{H}_{N,N}\left(\alpha;\lambda\right)\right)^{\prime}\in\mathbb{R}^{N\times N}$
where, with $\mathbf{1}_{N}=\left(1,...,1\right)^{\prime}\in\mathbb{R}^{N}$,
\[
\bar{H}_{N,k}\left(\alpha;\lambda\right)=I_{N}-\nabla\bar{\Gamma}_{N}\left(v_{N}\right)\left[\mathbf{1}_{N}\right]\left(z_{k};\lambda\right)\in\mathbb{R}^{N}.
\]
Finally, we note that the NK algorithm for the expected value function
takes a similar form with the functional differential of $\Gamma_{N}$
given by
\begin{equation}
\nabla\Gamma_{N}(V)\left[dV\right](z;\lambda)=\beta\sum_{d\in\mathcal{D}}\sum_{i=1}^{N}\dot{G}_{\lambda,d}^{\left(r\right)}(r)\left(u\left(S_{i}\left(z\right)\right)+\beta V\left(Z_{i}\left(z\right);\lambda\right)\right)dV\left(Z_{i}\left(z\right),d\right)w_{N,i}\left(z\right),\label{eq: dGamma_N def}
\end{equation}
where $dV\left(z\right)=\left(dV\left(z,1\right),...,dV\left(z,D\right)\right)^{\prime}.$

Comparing the NK algorithm for the self-approximating and the sieve-based
method, we note that the former involves inverting a $N\times N$-matrix
while the latter a $K\times K$-matrix. As pointed out earlier, the
self-approximating method generally needs $N$ to be chosen quite
large to achieve a precise simulated version of the Bellman operator,
in particular in higher dimensions, and so the NK algorithm for this
method may become numerically infeasible in some cases. While the
projection-based method also suffers from a curse of dimensionality,
since the number of basis functions, $K$, has to be quite large in
higher dimensions to achieve a reasonable approximation, it is less
severe and is implementable for higher-dimensional models. If more
advanced function approximation methods are employed, even better
performance can be achieved.

\section{Theory\label{sec:Convergence}}

We here develop an asymptotic theory for the self-approximating and
sieve-based methods. We first establish some important properties
of the smoothed simulated Bellman operators and their exact solutions,
$v_{N}$ and $V_{N}$ defined in (\ref{eq: Sim Fixed point}). These
are then used in the asymptotic analysis of the self-approximating
solution method and the sieve-based one. This analysis will rely on
two general results for estimated solutions to fixed point problems
as stated in Theorems \ref{thm: General rate} and \ref{thm: general dist}
in the appendix. The asymptotic analysis will mostly focus on $V_{N}$
and $\hat{V}_{N}$ since our results for these easily translate into
similar results for the approximate integrated value function. For
example, $v_{N}(z;\lambda)=M_{N,u}\left(\beta V_{N}\left(z;\lambda\right)|z;\lambda\right)$,
where
\[
M_{N,u}\left(r|z;\lambda\right)=\sum_{j=1}^{N}G_{\lambda}\left(u(z,\varepsilon_{j}\left(z\right))+r\right)w_{\varepsilon,N,j}\left(z\right),
\]
and so the asymptotic results for $V_{N}$ in conjunction with the
functional Delta method can be used to obtain similar results for
$v_{N}$. 

Without loss of generality, we assume that the draws can be written
as
\begin{equation}
Z_{i}\left(z\right)=\psi_{Z}\left(U_{i};z\right)\in\mathcal{Z},\:\varepsilon_{i}\left(z\right)=\psi_{\varepsilon}\left(U_{i};z\right)\in\mathcal{E},\label{eq: psi def}
\end{equation}
for some i.i.d. draws $U_{i}\sim P_{U}$, $i=1,...,N$, and some functions
$\psi_{Z}$ and $\psi_{\varepsilon}$. We then define, with $\psi=\left(\psi_{Z},\psi_{\varepsilon}\right)$
and for any given function $V\left(z,\lambda\right)$,
\begin{equation}
u_{\psi}\left(U;z\right):=u\left(\psi\left(U;z\right)\right),\:w_{\psi}\left(U;z\right)=w\left(\psi\left(U;z\right)|z\right),\:V_{\psi}(U;z,\lambda)=V\left(\psi_{Z}\left(u;z\right),\lambda\right)\label{eq: mu def}
\end{equation}
 so that
\begin{equation}
\Gamma(V)(z,\lambda)=E\left[G_{\lambda}\left(u_{\psi}\left(U;z\right)+\beta V_{\psi}(U;z,\lambda)\right)w_{\psi}\left(U;z\right)\right]\label{eq: generalized Gamma}
\end{equation}
where expectations are taking over $U\sim P_{U}$, and
\begin{equation}
\Gamma_{N}(V)(z,\lambda)=\sum_{i=1}^{N}G_{\lambda}\left(u_{\psi}\left(U_{i};z\right)+\beta V_{\psi}(U_{i};z,\lambda)\right)w_{\psi,N}\left(U_{i};z\right),\label{eq: generalised Gamma_N}
\end{equation}
where $w_{\psi,N}\left(U_{i};z,d\right)=w_{\psi}\left(U_{i};z,d\right)/\sum_{j=1}^{N}w_{\psi}\left(U_{j};z,d\right).$
Here, and in the following, we let $V_{0}$$\left(z,\lambda\right)$
denote the exact solution to 
\begin{equation}
V_{0}\left(z,\lambda\right)=\Gamma\left(V_{0}\right)\left(z,\lambda\right)\label{eq: V_0 def}
\end{equation}
As explained earlier, we here define the two operators to take a given
function $V\left(z,\lambda\right)$, $\left(z,\lambda\right)\in\mathcal{Z}\times\left[0,\bar{\lambda}\right]$,
for some given $\bar{\lambda}>0$, and map them into another function
with domain $\mathcal{Z}\times\left[0,\bar{\lambda}\right]$. In particular,
$V_{0}$$\left(z,0\right)$ and $V_{N}\left(z,0\right)$ are the non-smoothed
($\lambda=0$) exact and simulated solutions, respectively. We then
impose the following regularity conditions on the model and chosen
importance sampler, where we recall the function norm defined in eq.
(\ref{eq: hoelder norm}) and the function set $\boldsymbol{\mathbb{C}}_{r}^{\alpha}\left(\mathcal{Z}\right)$
defined below this equation:
\begin{assumption}
\label{assu: mom bound}The support $\mathcal{Z}$ is a compact set;
$\bar{u}_{\psi}\left(u\right):=\sup_{z\in\mathcal{Z}}\left\Vert u_{\psi}\left(u;z\right)\right\Vert $
and $\bar{w}_{\psi}\left(u\right):=\sup_{z\in\mathcal{Z}}w_{\psi}\left(u;z\right)$
satisfy $E\left[\bar{u}_{\psi}^{2}\left(U\right)\bar{w}_{\psi}^{2}\left(U\right)\right]<\infty$. 
\end{assumption}
\begin{assumption}
\label{Ass: smoothness}For some $\alpha>0$, $z\mapsto u_{\psi}\left(U;\cdot\right)$
and $z\mapsto w_{\psi}\left(U;z\right)$ belong to $\boldsymbol{\mathbb{C}}_{\infty}^{\alpha}\left(\mathcal{Z}\right)$
$P_{U}$-almost surely with $E\left[\Vert u_{\psi}\left(U;\cdot\right)\Vert_{\alpha,\infty}^{2}\Vert w_{\psi}\left(U;\cdot\right)\Vert_{\alpha,\infty}^{2}\right]<\infty$.
\end{assumption}
Assumption \ref{assu: mom bound} share some similarities with the
regularity conditions found in \citet{Rust1988Siam} who considered
an additive version of our general model. Importantly, we only require
that $Z_{t}$ has bounded support while $\varepsilon_{t}$ can have
potentially unbounded support. This is in contrast to \citealp{Pal&Stacurski2013}
and \citealp{Rust1997curse} who require both components to be bounded.
We conjecture that the subsequent results can be generalized to also
hold in the case of $\mathcal{Z}$ unbounded but then our conditions
and arguments would have to be changed. For example, the existence
of unique fixed points would have to be verified in a function space
equipped with a weighted $\sup$-norm and with additional moment conditions
on $Z_{t}$, see, e.g., \citet{Norets2010differentiability}. Similarly,
our empirical process results would need to be established using bracketing
conditions with weighted norms and additional moment conditions, see,
e.g., Section 2.10.4 in \citealp{VW}. Similar to the results in \citet{Rust1988Siam},
Assumption \ref{assu: mom bound} implies $\Gamma(V)\in\boldsymbol{\mathbb{B}}\left(\mathcal{Z}\times\left(0,\bar{\lambda}\right)\right)^{D}$
for all $V\in\mathcal{\boldsymbol{\mathbb{B}}}\left(\mathcal{Z}\times\left(0,\bar{\lambda}\right)\right)^{D}$,
see below. This particular result actually holds under the weaker
requirement that $E\left[\Vert u_{\psi}\left(U;\cdot\right)\Vert_{0,\infty}\Vert w_{\psi}\left(U;\cdot\right)\Vert_{,\infty}\right]<\infty$
but the existence of the second order moment is needed for the subsequent
asymptotic analysis of $V_{N}\left(z;\lambda\right)$ and so we impose
this restriction throughout. 

Assumption \ref{Ass: smoothness} impose smoothness conditions on
the model and the chosen samplers in terms of the state variables
$Z_{t}$. These conditions imply that the expected and integrated
value functions will be smooth too. Note here that the degree of smoothness
$\alpha$ is left unrestricted at this stage and so the functions
are not required to be differentiable, merely Lipschitz. While the
focus is on models with continuous state variables our theory also
covers models with discrete state space. In this case, we can dispense
of Assumption \ref{Ass: smoothness} and instead rely on the more
general Theorem \ref{thm: Master} which implies that the subsequent
results still go through when $Z$$_{t}$ is discrete.

We first establish existence and uniqueness of the (generally) infeasible
simulated solutions and show that they inherit the smoothness properties
of $z\mapsto u_{\psi}\left(U;\cdot\right)$ and $z\mapsto w_{\psi}\left(U;z\right)$.
This feature of the approximate solutions is important for two reasons:
First, it allows us to show uniform convergence of certain functionals
as part of our proof of weak convergence. Second, we can control the
approximation error due to the use of sieves later on.
\begin{thm}
\label{Thm: contraction}Suppose Assumption \ref{assu: mom bound}
holds and, for a given $N\geq1$, $\inf_{z\in\mathcal{Z}}\sum_{i=1}^{N}w_{\psi}\left(U_{i};z\right)>0$.
Then the operators $\Gamma$ and $\Gamma_{N}$ in eqs. (\ref{eq: generalized Gamma})
and (\ref{eq: generalised Gamma_N}) are almost surely contraction
mappings on $\boldsymbol{\mathbb{B}}\left(\mathcal{Z}\times\left(0,\bar{\lambda}\right)\right)^{D}$
and so $V_{0}:\mathcal{Z}\times\left(0,\bar{\lambda}\right)\mapsto\mathbb{R}^{D}$
and $V_{N}:\mathcal{Z}\times\left(0,\bar{\lambda}\right)\mapsto\mathbb{R}^{D}$
exist and are unique. If furthermore Assumption \ref{Ass: smoothness}
holds then $V_{0}\in\boldsymbol{\mathbb{C}}_{r_{0}}^{\alpha}\left(\mathcal{Z}\times\left(0,\bar{\lambda}\right)\right)$
for some constant $r_{0}<\infty$ while $V_{N}\in\boldsymbol{\mathbb{C}}_{r_{N}}^{\alpha}\left(\mathcal{Z}\times\left(0,\bar{\lambda}\right)\right)$
for some $r_{N}<\infty$ $P_{U}$-almost surely.
\end{thm}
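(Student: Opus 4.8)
The plan is to split the argument into a Banach fixed-point part (existence and uniqueness of $V_0$ and $V_N$) and a regularity part (that these fixed points lie in a Hölder ball). For the first part I would verify the two Blackwell-type requirements for $\Gamma$ and $\Gamma_N$ of eqs.~(\ref{eq: generalized Gamma})--(\ref{eq: generalised Gamma_N}): that they map $\boldsymbol{\mathbb{B}}\left(\mathcal{Z}\times\left(0,\bar\lambda\right)\right)^{D}$ into itself, and that they are $\beta$-contractions for $\|\cdot\|_\infty$. Boundedness follows from $|G_\lambda(r)|\le\|r\|_\infty+\bar\lambda\log D$ (a consequence of eq.~(\ref{eq: G approx error})) together with the fact that $w_\psi\ge0$ is a Radon--Nikodym density, so that $E[w_\psi(U;z)]=\int dF(\cdot|z)=1$ and $\sum_i w_{\psi,N,i}(z)=1$, and with the moment bound $E[\bar u_\psi\bar w_\psi]<\infty$ implied by Assumption~\ref{assu: mom bound} and Cauchy--Schwarz: plugging $r=u_\psi(U;z)+\beta V_\psi(U;z,\lambda)$ into the first inequality and averaging (over $U$, resp.\ over the normalized weights) bounds $\|\Gamma(V)\|_\infty$ and $\|\Gamma_N(V)\|_\infty$ uniformly in $(z,\lambda)$, the latter on the $P_U$-probability-one event $\{\inf_z\sum_i w_\psi(U_i;z)>0,\ \bar w_\psi(U_i)<\infty\ \forall i\}$. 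Contraction follows because the gradient $(\partial G_\lambda(r)/\partial r(d))_{d\in\mathcal{D}}$ is a probability vector (eq.~(\ref{eq: G r-deriv})), so $G_\lambda$ — and $\max$ at $\lambda=0$ — is $1$-Lipschitz in $\|\cdot\|_\infty$ on $\mathbb{R}^D$; hence $|G_\lambda(u_\psi+\beta V_{1,\psi})-G_\lambda(u_\psi+\beta V_{2,\psi})|\le\beta\|V_1-V_2\|_\infty$, and averaging against the unit-mass weights gives $\|\Gamma(V_1)-\Gamma(V_2)\|_\infty\le\beta\|V_1-V_2\|_\infty$ and likewise for $\Gamma_N$. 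Since $\beta<1$, Banach's fixed-point theorem delivers the unique $V_0$ and the $P_U$-a.s.\ unique $V_N$.

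For the regularity part I would run an order-by-order bootstrap inside closed Hölder balls. Fixing a realization of the draws in the a.s.\ event of Assumption~\ref{Ass: smoothness} (so each $z\mapsto u_\psi(U_i;z)$ and $z\mapsto w_\psi(U_i;z)$ lies in $\boldsymbol{\mathbb{C}}_\infty^\alpha(\mathcal{Z})$), the integrand $R_U(z,\lambda):=G_\lambda\!\left(u_\psi(U;z)+\beta V_\psi(U;z,\lambda)\right)w_\psi(U;z)$ is a product/composition built from (a) $G_\lambda$, which is $C^\infty$ jointly in $(r,\lambda)$ on $\mathbb{R}^D\times(0,\bar\lambda)$ with all partials bounded on bounded $r$-sets (its order-$m$ $r$-derivatives carry factors $\lambda^{-(m-1)}$, while $\partial_\lambda G_\lambda$ equals the Shannon entropy of the induced choice probabilities and so stays bounded by $\log D$ as $\lambda\to0$), (b) the $\boldsymbol{\mathbb{C}}_\infty^\alpha$ functions $u_\psi(U;\cdot)$ and $w_\psi(U;\cdot)$, and (c) $V$. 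A Fa\`a~di~Bruno/Leibniz expansion then bounds $\|R_U\|_{\alpha,\infty}$ by a polynomial in $\|u_\psi(U;\cdot)\|_{\alpha,\infty}$, $\|w_\psi(U;\cdot)\|_{\alpha,\infty}$ and the lower-order norms $\|V\|_{k,\infty}$, $k<\underline\alpha$, plus a single top-order term $\le\beta\cdot(\text{bounded})\cdot\|D^{\underline\alpha}V_\psi\|$ together with its Hölder-quotient analogue; that is, the highest-order derivative of $V$ appears only multiplied by $\beta$ times a bounded coefficient. Moving the norm inside $E[\cdot]$ (differentiation under the integral, legitimized by the integrable bound) and using Cauchy--Schwarz with the $L^2$ integrability of the Hölder norms in Assumption~\ref{Ass: smoothness} yields, inductively in $k=0,\dots,\underline\alpha$, an estimate $\|\Gamma(V)\|_{\alpha,\infty}\le C_k+\beta\|V\|_{\alpha,\infty}$ in which $C_k$ depends only on the model and on $\|V\|_{k,\infty}$ for $k<\underline\alpha$ (the $C^0$ case being the bound from the first part). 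Since a closed Hölder ball is a closed — hence $\|\cdot\|_\infty$-complete — subset of $\boldsymbol{\mathbb{B}}(\mathcal{Z}\times(0,\bar\lambda))^D$ by Arzel\`a--Ascoli, and the estimate shows a ball of large enough radius is mapped into itself, Banach's theorem applied within it forces the (already unique) $V_0$ to lie in $\boldsymbol{\mathbb{C}}_{r_0}^\alpha$; the identical argument with the finite sum gives $V_N\in\boldsymbol{\mathbb{C}}_{r_N}^\alpha$ $P_U$-a.s.

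The main obstacle is the Fa\`a~di~Bruno bookkeeping in the second part. One has to make precise that at each differentiation order every occurrence of $V$ other than the single leading term involves only strictly lower-order derivatives of $V$ — this ``triangular'' structure is exactly what lets the bootstrap close with contraction factor $\beta$ at the top rather than diverge — and one has to control the $G_\lambda$-derivatives jointly in $(r,\lambda)$ and check that the products arising in the expansion are $P_U$-integrable using only the second-moment conditions of Assumptions~\ref{assu: mom bound}--\ref{Ass: smoothness}; this forces the expansion to be grouped so that the resulting bound is at most bilinear in $\|u_\psi(U;\cdot)\|_{\alpha,\infty}$ and $\|w_\psi(U;\cdot)\|_{\alpha,\infty}$, after which Cauchy--Schwarz suffices. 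The remaining ingredients — the sup-norm bounds, the $1$-Lipschitz property of $G_\lambda$, completeness of Hölder balls, and differentiation under the integral sign — are routine.
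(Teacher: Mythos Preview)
Your proposal is correct and follows essentially the same route as the paper. For the contraction part, the only cosmetic difference is that you obtain the $1$-Lipschitz property of $G_\lambda$ from its gradient being a probability vector (eq.~(\ref{eq: G r-deriv})), whereas the paper uses the translation identity $G_\lambda(r+c\mathbf 1_D)=G_\lambda(r)+c$ together with monotonicity (a Blackwell-type argument); both yield $\|\Gamma_N(V_1)-\Gamma_N(V_2)\|_\infty\le\beta\|V_1-V_2\|_\infty$ via the unit-mass normalized weights. For the regularity part, the paper simply asserts that $\mathbb C^s(\mathcal Z\times(0,\bar\lambda))^D$ is a closed subset of $\mathbb B(\mathcal Z\times(0,\bar\lambda))^D$ preserved by $\Gamma_N$, so the fixed point must lie there by iterating from a smooth initial guess. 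Your bootstrap inside a fixed H\"older ball $\mathbb C_r^\alpha$ is the more careful version of the same idea, since $\mathbb C^s$ itself is \emph{not} sup-norm closed for $s\ge 1$ whereas a H\"older ball is (by Arzel\`a--Ascoli on the compact domain). The Fa\`a~di~Bruno bookkeeping you flag is precisely what is needed to show the ball is invariant; the paper bypasses it with a one-line appeal to the chain rule, so your sketch in fact fills a small gap in the published argument.
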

The above is a fixed $N$ result with the bound on $V_{N}$, $r_{N}$,
being random since it depends on the particular set of draws. We derive
a deterministic bound on $r_{N}$ as $N\rightarrow\infty$ below.
The condition $\inf_{z\in\mathcal{Z}}\sum_{i=1}^{N}w_{\psi}\left(U_{i};z\right)>0$
will hold with probability approaching 1 (w.p.a.1) as $N\rightarrow\infty$
and so can be dropped in our asymptotic analysis. Next, we analyze
the effect of smoothing on the exact and simulated value function:
\begin{thm}
\label{thm: smoothing}Under the conditions of Theorem \ref{Thm: contraction},
the following hold: $\left\Vert V_{0}\left(\cdot;\lambda\right)-V_{0}\left(\cdot;0\right)\right\Vert _{\infty}=O\left(\lambda\right)$
and $\left\Vert V_{N}\left(\cdot;\lambda\right)-V_{N}\left(\cdot;0\right)\right\Vert _{\infty}=O_{P}\left(\lambda\right)$
for any given $N\geq1.$
\end{thm}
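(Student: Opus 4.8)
The plan is to exploit that $V_{0}(\cdot;\lambda)$ and $V_{0}(\cdot;0)$ are fixed points of two operators differing only through the appearance of $G_{\lambda}$ in place of $G_{0}(r)=\max_{d\in\mathcal{D}}r(d)$, and then to invoke the standard perturbation bound for fixed points of contractions (collected among the auxiliary fixed-point results in Appendix \ref{app:Auxiliary-Results}): if $T_{1},T_{2}$ are self-maps of a Banach space with $T_{2}$ a $\beta$-contraction and fixed points $x_{1},x_{2}$, then $\left\Vert x_{1}-x_{2}\right\Vert \le(1-\beta)^{-1}\left\Vert T_{1}x_{1}-T_{2}x_{1}\right\Vert $. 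Since $\Gamma(V)(z;\lambda)$ depends on $V$ only through $V(\cdot;\lambda)$ (and likewise for $\Gamma_{N}$), Theorem \ref{Thm: contraction} in particular implies that for each frozen $\lambda\in[0,\bar{\lambda}]$ the section maps $T_{\lambda}(W)(z):=E[G_{\lambda}(u_{\psi}(U;z)+\beta W(\psi_{Z}(U;z)))\,w_{\psi}(U;z)]$ and its simulated analogue $T_{N,\lambda}(W)(z):=\sum_{i=1}^{N}G_{\lambda}(u_{\psi}(U_{i};z)+\beta W(\psi_{Z}(U_{i};z)))\,w_{\psi,N}(U_{i};z)$ are $\beta$-contractions on $\mathbb{B}(\mathcal{Z})^{D}$, with fixed points $V_{0}(\cdot;\lambda)$ and $V_{N}(\cdot;\lambda)$ from \eqref{eq: V_0 def} and \eqref{eq: Sim Fixed point}.

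\textbf{Exact value function.} Write $r_{z}(U):=u_{\psi}(U;z)+\beta V_{0}(\psi_{Z}(U;z);0)\in\mathbb{R}^{D}$. Applying the perturbation bound with $x_{1}=V_{0}(\cdot;0)$ (the fixed point of $T_{0}$) and $T_{2}=T_{\lambda}$ gives
\[
\left\Vert V_{0}(\cdot;\lambda)-V_{0}(\cdot;0)\right\Vert _{\infty}\le\frac{1}{1-\beta}\,\bigl\Vert T_{\lambda}\bigl(V_{0}(\cdot;0)\bigr)-T_{0}\bigl(V_{0}(\cdot;0)\bigr)\bigr\Vert _{\infty},
\]
and, by \eqref{eq: generalized Gamma},
\[
T_{\lambda}\bigl(V_{0}(\cdot;0)\bigr)(z)-T_{0}\bigl(V_{0}(\cdot;0)\bigr)(z)=E\Bigl[\bigl\{G_{\lambda}(r_{z}(U))-\max_{d\in\mathcal{D}}r_{z}(U)(d)\bigr\}\,w_{\psi}(U;z)\Bigr].
\]
The uniform bound \eqref{eq: G approx error} gives $0\le G_{\lambda}(r)-\max_{d}r(d)\le\lambda\log D$ pointwise, so the right-hand side is bounded in absolute value by $\lambda\log D\cdot E[w_{\psi}(U;z)]=\lambda\log D$, where the last equality uses that $w_{\psi}(\cdot;z)$ is a (nonnegative) importance weight with unit mean under $P_{U}$. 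Hence $\left\Vert V_{0}(\cdot;\lambda)-V_{0}(\cdot;0)\right\Vert _{\infty}\le\lambda\log D/(1-\beta)=O(\lambda)$, as claimed.

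\textbf{Simulated value function.} The argument is identical in structure. Working on the probability-one event of Theorem \ref{Thm: contraction} on which $\inf_{z\in\mathcal{Z}}\sum_{i=1}^{N}w_{\psi}(U_{i};z)>0$ — so that $T_{N,\lambda}$ is genuinely a $\beta$-contraction — the perturbation bound with $x_{1}=V_{N}(\cdot;0)$ yields
\[
\left\Vert V_{N}(\cdot;\lambda)-V_{N}(\cdot;0)\right\Vert _{\infty}\le\frac{1}{1-\beta}\,\bigl\Vert T_{N,\lambda}\bigl(V_{N}(\cdot;0)\bigr)-T_{N,0}\bigl(V_{N}(\cdot;0)\bigr)\bigr\Vert _{\infty},
\]
and by \eqref{eq: generalised Gamma_N} the inner difference at $z$ equals $\sum_{i=1}^{N}\{G_{\lambda}(\cdot)-\max_{d}(\cdot)\}\,w_{\psi,N}(U_{i};z)$, which by \eqref{eq: G approx error} is at most $\lambda\log D\sum_{i=1}^{N}w_{\psi,N}(U_{i};z)=\lambda\log D$ because the simulated weights are normalized to sum to one. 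Therefore $\left\Vert V_{N}(\cdot;\lambda)-V_{N}(\cdot;0)\right\Vert _{\infty}\le\lambda\log D/(1-\beta)$ on an event of probability one, which is $O_{P}(\lambda)$ (indeed $O(\lambda)$ almost surely) for any fixed $N$.

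\textbf{Main obstacle.} There is no deep difficulty here; the two points requiring care are (i) recording that the contraction property of Theorem \ref{Thm: contraction} holds for each frozen $\lambda$ separately, since $\Gamma$ and $\Gamma_{N}$ act ``diagonally'' in $\lambda$ — this is exactly what lets us compare the two fixed points evaluated at a common argument — and (ii) in the simulated case restricting to the almost-sure event on which the weight-normalization denominator stays bounded away from zero, so that $T_{N,\lambda}$ is a contraction; outside that event the claim holds vacuously in the $O_{P}$ sense.
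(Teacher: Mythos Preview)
Your proof is correct and follows essentially the same approach as the paper: bound $\|\Gamma(V)(\cdot;\lambda)-\Gamma(V)(\cdot;0)\|_{\infty}$ uniformly by $\lambda\log D$ via \eqref{eq: G approx error}, then apply the contraction perturbation inequality (which is exactly the content of the first part of Theorem~\ref{thm: General rate}). The paper states the argument for $V_{0}$ and declares the $V_{N}$ case analogous, while you write both out; the only cosmetic difference is that you spell out the perturbation bound directly rather than citing Theorem~\ref{thm: General rate}.
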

This shows that the smoothing can be controlled for by suitable choice
of $\lambda$ both asymptotically ($N=+\infty$) and for any finite
number of simulations ($N<\infty$). Also note that the above result
holds independently of the smoothness properties of the unsmoothed
exact and simulated solutions.

\subsection{Self-approximating method}

In this section, we provide an analysis of the smoothed fixed point,
$V_{N}$, to $\Gamma_{N}$ defined in eq. (\ref{eq: generalised Gamma_N})
thereby allowing for general importance samplers. As a special case,
we obtain an asymptotic theory for the self-approximating solution
method (where $\Phi_{z}\left(z^{\prime}|z,d\right)$ is restricted
to be marginal distribution). The results for $V_{N}$ will then in
turn be used in the analysis of the corresponding sieve-based methods
in the next section.
\begin{thm}
\label{Thm: Rust approx rate}Suppose Assumptions \ref{assu: mom bound}-\ref{Ass: smoothness}
hold for some $\alpha>0$. Then $V_{N}$ solving $\Gamma_{N}(V_{N})=V_{N}$
satisfies $\Vert V_{N}-V_{0}\Vert_{\infty}=O_{P}(1/\sqrt{N})$. 

If Assumption \ref{Ass: smoothness} holds with $\alpha\geq1$, then
$V_{0},V_{N}\in\boldsymbol{\mathbb{C}}_{r}^{1}\left(\mathcal{Z}\times\left[0,\bar{\lambda}\right]\right)$
w.p.a.1 for some constant $r<\infty$. Moreover, $\sup_{z\in\mathcal{Z}}\left\Vert \partial V_{N}\left(z,\lambda\right)/\left(\partial z_{j}\right)-\partial V_{0}\left(z,\lambda\right)/\left(\partial z_{j}\right)\right\Vert =O_{P}\left(\sqrt{N}/\lambda\right)$
uniformly over $\lambda\in$$\left(0,\bar{\lambda}\right)$.
\end{thm}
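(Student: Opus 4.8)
\emph{Proof strategy.} Both assertions follow the pattern ``contraction of the operator $+$ uniform control of the simulation error $\Rightarrow$ bound on the solution'', so the plan is to combine the contraction property of Theorem~\ref{Thm: contraction} (Lipschitz coefficient $\beta<1$) with the abstract fixed-point perturbation bound of Theorem~\ref{thm: General rate}, reducing everything to empirical-process statements about the function class generated by $G_{\lambda}$, $u_{\psi}$ and $w_{\psi}$. For the first claim, write $V_{N}-V_{0}=[\Gamma_{N}(V_{N})-\Gamma_{N}(V_{0})]+[\Gamma_{N}(V_{0})-\Gamma(V_{0})]$; since $\Gamma_{N}$ is (w.p.a.1) a $\beta$-contraction on $\mathbb{B}(\mathcal{Z}\times(0,\bar{\lambda}))^{D}$ by Theorem~\ref{Thm: contraction}, the first bracket has sup-norm $\le\beta\|V_{N}-V_{0}\|_{\infty}$, so $(1-\beta)\|V_{N}-V_{0}\|_{\infty}\le\|\Gamma_{N}(V_{0})-\Gamma(V_{0})\|_{\infty}$, and it remains to show $\|\Gamma_{N}(V_{0})-\Gamma(V_{0})\|_{\infty}=O_{P}(1/\sqrt{N})$ uniformly over $(z,\lambda)\in\mathcal{Z}\times[0,\bar{\lambda}]$. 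Writing $\Gamma_{N}(V_{0})$ in the self-normalised form of eq.~\eqref{eq: generalised Gamma_N}, I would (i) dispose of the normalisation by noting $N^{-1}\sum_{j}w_{\psi}(U_{j};z)\to E[w_{\psi}(U;z)]\equiv1$ (the limit is $1$ because $w_{\psi}$ is a Radon--Nikodym derivative) uniformly in $z$ at rate $\sqrt{N}$; and (ii) show that the class $\{(u,z,\lambda)\mapsto G_{\lambda}(u_{\psi}(u;z)+\beta V_{0}(\psi_{Z}(u;z),\lambda))\,w_{\psi}(u;z)\}$, indexed by the finite-dimensional set $\mathcal{Z}\times[0,\bar{\lambda}]$, is Donsker. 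Step (ii) uses that $G_{\lambda}$ is $1$-Lipschitz in $r$ and jointly smooth in $(r,\lambda)$, with $0\le G_{\lambda}(r)-\max_{d}r(d)\le\lambda\log D$ (eq.~\eqref{eq: G approx error}); that $z\mapsto u_{\psi}(U;\cdot)$, $z\mapsto w_{\psi}(U;z)$ lie in $\mathbb{C}^{\alpha}$ with square-integrable envelope (Assumptions~\ref{assu: mom bound}--\ref{Ass: smoothness}); and that $z\mapsto V_{0}(\psi_{Z}(U;z),\lambda)$ is Lipschitz (Theorem~\ref{Thm: contraction}); together these give an $L^{2}$ envelope of order $\bar{u}_{\psi}(U)\bar{w}_{\psi}(U)$ and a Lipschitz index, hence the Donsker property and the $\sqrt N$ rate.

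For the second claim, Theorem~\ref{Thm: contraction} already yields $V_{0}\in\mathbb{C}^{\alpha}_{r_{0}}$ and $V_{N}\in\mathbb{C}^{\alpha}_{r_{N}}$ with $r_{N}$ random, and with $\alpha\ge1$ both are $z$-differentiable. Differentiating $V_{N}=\Gamma_{N}(V_{N})$ with respect to $z_{j}$ and using the Hadamard differential of eq.~\eqref{eq: dGamma_N def}, the gradient $g_{N}:=\nabla_{z}V_{N}$ solves a \emph{linear} fixed-point equation $g_{N}=\mathcal{B}_{N}+\mathcal{L}_{N}[g_{N}]$: the forcing term $\mathcal{B}_{N}$ collects the $z_{j}$-derivatives of $u_{\psi}$, of the sampling transformation $\psi_{Z}$, and of the normalised weights $w_{\psi,N}$, each weighted by $\dot{G}_{\lambda,d}\in[0,1]$ (eq.~\eqref{eq: G r-deriv}), while $\mathcal{L}_{N}$ is the linear operator built from $\dot{G}_{\lambda,d}$, the weights and $\partial_{z}\psi_{Z}$, with operator norm $\le\beta\sup_{u,z}\|\partial_{z}\psi_{Z}(u;z)\|_{op}<1$ under the maintained smoothness, so that $\mathrm{id}-\mathcal{L}_{N}$ and its population analog $\mathrm{id}-\mathcal{L}_{0}$ have uniformly bounded inverse w.p.a.1. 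Since $\mathcal{B}_{N}$ is a self-normalised average of a quantity with square-integrable envelope $C\|u_{\psi}(U;\cdot)\|_{1,\infty}\|w_{\psi}(U;\cdot)\|_{1,\infty}$ (Assumption~\ref{Ass: smoothness}, $\alpha\ge1$), $\|\mathcal{B}_{N}\|_{\infty}=O_{P}(1)$, hence $\|g_{N}\|_{\infty}=O_{P}(1)$, i.e.\ $r_{N}=O_{P}(1)$, which gives the ``$\mathbb{C}^{1}_{r}$ w.p.a.1 for a constant $r$'' statement.

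For the derivative rate, subtract the gradient equation for $g_{0}:=\nabla_{z}V_{0}$ (built from $\Gamma$, $V_{0}$) from that for $g_{N}$ and apply the bounded inverse of $\mathrm{id}-\mathcal{L}_{N}$:
\[
\|g_{N}-g_{0}\|_{\infty}\;\lesssim\;\|\mathcal{B}_{N}-\mathcal{B}_{0}\|_{\infty}+\bigl\|(\mathcal{L}_{N}-\mathcal{L}_{0})[g_{0}]\bigr\|_{\infty}.
\]
Each right-hand term decomposes into (a) an empirical-process piece, evaluated at the common target $V_{0}$, which is $O_{P}(1/\sqrt{N})$ by the Donsker argument of the first paragraph applied to the differentiated --- but still Lipschitz --- class; and (b) a plug-in piece comparing $\dot{G}_{\lambda,d}$ at $u_{\psi}+\beta V_{N}(\psi_{Z}(\cdot))$ versus $u_{\psi}+\beta V_{0}(\psi_{Z}(\cdot))$. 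Since $\dot{G}_{\lambda,d}$ is Lipschitz in its argument with modulus $\|\ddot{G}_{\lambda}\|_{\infty}=O(1/\lambda)$ (differentiate eq.~\eqref{eq: G r-deriv}), piece (b) is $O_{P}(\|V_{N}-V_{0}\|_{\infty}/\lambda)=O_{P}(1/(\sqrt{N}\lambda))$ by the first part, and this $\lambda^{-1}$-inflated piece dominates (a), delivering the stated derivative rate uniformly over $\lambda\in(0,\bar{\lambda})$.

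The main obstacle is the derivative step. First, one must establish that $\mathrm{id}-\mathcal{L}_{N}$ is invertible with inverse bounded uniformly in $\lambda$ and w.p.a.1: the chain-rule factors $\partial_{z}\psi_{Z}$ prevent $\mathcal{L}_{N}$ from collapsing to the $\beta$-contraction $\nabla\Gamma_{N}(V_{N})[\cdot]$, so one needs both a quantitative control of $\beta\sup\|\partial_{z}\psi_{Z}\|_{op}$ (equivalently, invertibility of the population $\mathrm{id}-\mathcal{L}_{0}$) and a uniform-in-$(z,\lambda)$ operator-norm convergence $\mathcal{L}_{N}\to\mathcal{L}_{0}$ on the relevant function space. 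Second, one must track the smoothing parameter carefully: $\ddot{G}_{\lambda}$ diverges like $\lambda^{-1}$ as $\lambda\to0$, so the differentiated classes are not uniformly Lipschitz-in-$z$ uniformly in $\lambda$, and it must be verified that the $\lambda^{-1}$ factor enters only through the single plug-in comparison in (b) and not through the genuine simulation-noise terms in (a), since otherwise a worse power of $\lambda$ would result. The level result is comparatively routine once the Donsker property of the basic class is in hand.
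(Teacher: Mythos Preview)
Your first-part argument is exactly the paper's: invoke the $\beta$-contraction of $\Gamma_{N}$ (Theorem~\ref{Thm: contraction}) together with Theorem~\ref{thm: General rate} to reduce to $\|\Gamma_{N}(V_{0})-\Gamma(V_{0})\|_{\infty}=O_{P}(1/\sqrt{N})$, dispose of the self-normalisation via $W_{N}\to 1$, and verify the Donsker property of the class $\mathcal{G}$ through Lipschitz-in-$(z,\lambda)$ bounds on $G_{\lambda}$, $u_{\psi}$, $w_{\psi}$ and $V_{0}$ (the paper phrases this as a ``Type~IV class'' in the sense of Andrews, 1994). Your derivative-rate argument also coincides with the paper's: both obtain the $1/\lambda$ factor from $\|\ddot G_{\lambda}\|_{\infty}=O(1/\lambda)$ in the plug-in comparison of $V_{N}$ against $V_{0}$, and you correctly arrive at $O_{P}(1/(\sqrt{N}\lambda))$ (the exponent on $N$ in the theorem statement is evidently a typo).

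The one substantive difference is the linearised gradient equation. The paper differentiates $V_{N}=\Gamma_{N}(V_{N})$ and writes the result as
\[
\partial_{z_{j}}V_{N}=\nabla\Gamma_{N}(V_{N})\bigl[\partial_{z_{j}}V_{N}\bigr]+\dot\Gamma_{N,j}^{(z)}(V_{N}),
\]
with $\nabla\Gamma_{N}(V_{N})$ the \emph{functional} derivative in~$V$ (eq.~\eqref{eq: dGamma_N def}), whose operator norm is $\le\beta$ automatically; invertibility of $I-\nabla\Gamma_{N}(V_{N})$ then comes for free, with no condition on $\partial_{z}\psi_{Z}$. Your version is more careful: a direct chain-rule differentiation of $V_{N}(\psi_{Z}(U_{i};z))$ produces Jacobian factors $\partial_{z}\psi_{Z}$, and the resulting linear operator $\mathcal{L}_{N}$ need not have norm $\le\beta$ under Assumptions~\ref{assu: mom bound}--\ref{Ass: smoothness} alone. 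The obstacle you flag is therefore genuine; the paper's displayed gradient equation and its forcing term $\dot\Gamma_{N,j}^{(z)}$ contain no $\partial_{z}\psi_{Z}$ contribution, so the paper simply sidesteps the issue by taking the linear part to be $\nabla\Gamma_{N}(V_{N})$ rather than the operator that chain-rule differentiation actually delivers. Apart from this point---on which you are more scrupulous than the paper---your plan and the paper's execution coincide.
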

The first part of this theorem is similar to results found in \citet{Rust1997curse}
and \citet{Pal&Stacurski2013} who also show $\sqrt{N}$-convergence
of their value function approximation. Importantly, the convergence
result holds uniformly over the smoothing parameter $\lambda$ and
so there is no first-order effect from smoothing if $\lambda$ vanishes
sufficiently fast. Specifically, for any sequence $\lambda_{N}$ satisfying
$\sqrt{N}\lambda_{N}\rightarrow0$, Theorems \ref{thm: smoothing}
and \ref{Thm: Rust approx rate} yield $\sup_{z\in\mathcal{Z}}\left\Vert V_{N}\left(z,\lambda_{N}\right)-V\left(z,0\right)\right\Vert =O_{p}(1/\sqrt{N})$.
This is similar to convergence of smoothed empirical cdf where the
indicator function is replaced by a smoothed version; this also does
not affect the convergence rate as long as the smoothing bias is controlled
for. The second part of the theorem appears to be a new result and
shows that if the problem is smooth enough, the first-order partial
derivatives of $V_{N}\left(z,\lambda\right)$ also converge uniformly
over $z$ with rate $\sqrt{N}/\lambda$. Since we need $\lambda\rightarrow0$
to kill the smoothing bias, this could seem to imply that the first-order
derivatives converge with slower than $\sqrt{N}$-rate. However, we
conjecture that the derived rate is not sharp and that $\sqrt{N}$-convergence
does actually hold. The proof of this appears to require a more delicate
and refined arguments, however, and so we leave this for future research.

The above result is then in turn used to derive the asymptotic distribution
of $V_{N}\left(z,\lambda\right)$ uniformly in $\left(z,\lambda\right)\in\mathcal{Z}\times\left(0,\bar{\lambda}\right)$.
Here, the smoothing proves important since it allows us to generalize
the standard arguments used in the analysis of finite-dimensional
extremum estimators to our setting: We first expand the ``first-order
condition'', $V_{N}-\Gamma_{N}(V_{N})=0$, around $V_{0}=\Gamma(V_{0})$
to obtain, with $\nabla\Gamma_{N}$ defined in (\ref{eq: dGamma_N def}),
\begin{eqnarray*}
0 & = & \Gamma(V_{0})-\Gamma_{N}(V_{0})+\left\{ I-\nabla\Gamma_{N}(V_{0})\right\} \left[V_{N}-V_{0}\right]+o_{P}\left(1/\sqrt{N}\right),
\end{eqnarray*}
where the rate of the remainder term follows from Theorem \ref{Thm: Rust approx rate}.
Next, employing empirical process theory, we show that $\sqrt{N}\left\{ \Gamma(V_{0})-\Gamma_{N}(V_{0})\right\} \rightsquigarrow\mathbb{G}$
in $\boldsymbol{\mathbb{B}}\left(\mathcal{Z}\times\left(0,\bar{\lambda}\right)\right)^{D}$
for a Gaussian process $\mathbb{G}\left(z,\lambda\right)$ with covariance
kernel
\begin{equation}
\Omega\left(z_{1},\lambda_{1},z_{2},\lambda_{2}\right)=E_{U}\left[g\left(U;z_{1},\lambda_{1}\right)g\left(U;z_{2},\lambda_{2}\right)^{\prime}\right],\label{eq: Omega def}
\end{equation}
\begin{equation}
g\left(U;z,\lambda\right)=\left\{ G_{\lambda}\left(u_{\psi}\left(U;z\right)+\beta V_{0}\left(\psi_{Z}\left(U;z\right),\lambda\right)\right)-\Gamma(V_{0})\left(z,\lambda\right)\right\} w_{\psi}\left(U;z\right).\label{eq: g def}
\end{equation}
Finally, we show that $\nabla\Gamma_{N}(V_{0})\left[dV\right]\rightarrow^{P}\nabla\Gamma(V_{0})\left[dV\right]$
uniformly over $dV\in\boldsymbol{\mathbb{C}}_{2r}^{1}\left(\mathcal{Z}\times\left[0,\bar{\lambda}\right]\right)^{D}$
with $r<\infty$ given in Theorem \ref{Thm: Rust approx rate}. Since
$V_{N}-V_{0}\boldsymbol{\in\mathbb{C}}_{2r}^{1}\left(\mathcal{Z}\times\left[0,\bar{\lambda}\right]\right)^{D}$,
we conclude that:
\begin{thm}
\label{Thm: Rust normal}Suppose Assumptions \ref{assu: mom bound}
and \ref{Ass: smoothness} hold with $\alpha\geq1$. Then, $\sqrt{N}\{V_{N}-V_{0}\}\rightsquigarrow\mathbb{G}_{V}$
on $\boldsymbol{\mathbb{B}}\left(\mathcal{Z}\times\left(0,\bar{\lambda}\right)\right)^{D}$
where $\mathbb{G}_{V}\left(z,\lambda\right)=\left\{ I-\nabla\Gamma(V_{0})\right\} ^{-1}\left[\mathbb{G}\right]\left(z,\lambda\right)$
is a $D$-dimensional Gaussian process.
\end{thm}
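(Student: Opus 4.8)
The plan is to derive Theorem~\ref{Thm: Rust normal} from the general distributional result for approximate fixed points (Theorems~\ref{thm: General rate} and \ref{thm: general dist}), using as inputs the preliminary rate $\Vert V_N-V_0\Vert_\infty=O_P(1/\sqrt N)$ and the $\boldsymbol{\mathbb{C}}^1$-regularity of $V_N-V_0$ already supplied by Theorem~\ref{Thm: Rust approx rate}. Starting from $V_N-\Gamma_N(V_N)=0=V_0-\Gamma(V_0)$ and a functional Taylor expansion of $\Gamma_N$ around $V_0$ in the direction $V_N-V_0$, one obtains
\[
\{I-\nabla\Gamma_N(V_0)\}[V_N-V_0]=\Gamma_N(V_0)-\Gamma(V_0)+R_N ,
\]
with $R_N$ a second-order remainder. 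The conclusion follows once three facts hold: (i) $\sqrt N\{\Gamma_N(V_0)-\Gamma(V_0)\}\rightsquigarrow\mathbb{G}$ in $\boldsymbol{\mathbb{B}}(\mathcal{Z}\times(0,\bar\lambda))^D$ with covariance kernel $\Omega$ of (\ref{eq: Omega def}); (ii) $\nabla\Gamma_N(V_0)\rightarrow^{P}\nabla\Gamma(V_0)$ uniformly over directions in a $\boldsymbol{\mathbb{C}}_{2r}^1$-ball, so that, since $V_N-V_0$ lies in such a ball with probability approaching one by Theorem~\ref{Thm: Rust approx rate}, $\nabla\Gamma_N(V_0)[V_N-V_0]=\nabla\Gamma(V_0)[V_N-V_0]+o_P(1/\sqrt N)$; and (iii) $I-\nabla\Gamma(V_0)$ is boundedly invertible on $\boldsymbol{\mathbb{B}}(\mathcal{Z}\times(0,\bar\lambda))^D$. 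Granting also $\sqrt N R_N=o_P(1)$, rearranging gives $\sqrt N\{V_N-V_0\}=\{I-\nabla\Gamma(V_0)\}^{-1}[\sqrt N\{\Gamma_N(V_0)-\Gamma(V_0)\}]+o_P(1)$, and the continuous mapping theorem for the bounded linear operator $\{I-\nabla\Gamma(V_0)\}^{-1}$ yields $\sqrt N\{V_N-V_0\}\rightsquigarrow\{I-\nabla\Gamma(V_0)\}^{-1}[\mathbb{G}]=\mathbb{G}_V$, which is Gaussian as a continuous linear image of a Gaussian process.

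For (i) I would first absorb the self-normalization of the weights $w_{\psi,N}(U_i;z)=w_\psi(U_i;z)/(N\bar w_N(z))$, $\bar w_N(z)=N^{-1}\sum_j w_\psi(U_j;z)$, via a uniform law of large numbers (valid by Assumption~\ref{assu: mom bound} and a Glivenko--Cantelli argument, with $\inf_z\bar w_N(z)$ bounded away from zero w.p.a.1). This reduces $\sqrt N\{\Gamma_N(V_0)-\Gamma(V_0)\}$, up to $o_P(1)$ uniformly in $(z,\lambda)$, to $N^{-1/2}\sum_{i=1}^N\{g(U_i;z,\lambda)-E_U[g(U;z,\lambda)]\}$ viewed as a process over $\mathcal{G}=\{g(\cdot;z,\lambda):(z,\lambda)\in\mathcal{Z}\times(0,\bar\lambda)\}$ with $g$ as in (\ref{eq: g def}). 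I would then check that $\mathcal{G}$ is $P_U$-Donsker via a bracketing-entropy bound: Assumption~\ref{Ass: smoothness} makes $z\mapsto u_\psi(U;\cdot)$ and $z\mapsto w_\psi(U;z)$ Hölder of order $\alpha$, $V_0\in\boldsymbol{\mathbb{C}}_{r_0}^\alpha$ by Theorem~\ref{Thm: contraction}, and $G_\lambda$ is $1$-Lipschitz in $r$ with $0\leq G_\lambda(r)-\max_d r(d)\leq\lambda\log D$ by (\ref{eq: G approx error}); hence $(z,\lambda)\mapsto g(U;z,\lambda)$ is Hölder with an $L^2(P_U)$-integrable modulus, the envelope $\bar u_\psi\bar w_\psi+\Vert\Gamma(V_0)\Vert_\infty\bar w_\psi$ being square-integrable by Assumption~\ref{assu: mom bound}. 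Standard entropy calculus for Hölder classes on the compact index set $\mathcal{Z}\times(0,\bar\lambda)$ then gives a finite bracketing integral, and a direct computation of $E_U[g\,g^{\prime}]$ identifies the limit covariance as $\Omega$.

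Fact (ii) is a uniform law of large numbers: after the same weight-normalization, $\nabla\Gamma_N(V_0)[dV](z,\lambda)$ is an average of $\dot G_{\lambda,d}(u_\psi(U_i;z)+\beta V_{0,\psi}(U_i;z,\lambda))\,dV(\psi_Z(U_i;z),d)\,w_\psi(U_i;z)$ indexed by $(z,\lambda,dV)$; since $\dot G_{\lambda,d}\in[0,1]$, a $\boldsymbol{\mathbb{C}}_{2r}^1$-ball of directions is totally bounded in the sup-norm, the remaining factors are Hölder in $(z,\lambda)$ with an $L^1(P_U)$-integrable envelope, so the indexing class is Glivenko--Cantelli and the convergence is uniform in $(z,\lambda,dV)$. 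Fact (iii) is immediate from the contraction: from the explicit form of $\nabla\Gamma(V_0)$, $\sum_d\dot G_{\lambda,d}=1$, and the weights summing to one, one reads off $\Vert\nabla\Gamma(V_0)[dV]\Vert_\infty\leq\beta\Vert dV\Vert_\infty$, so $\nabla\Gamma(V_0)$ has operator norm $\leq\beta<1$ and $\{I-\nabla\Gamma(V_0)\}^{-1}=\sum_{k\geq0}\nabla\Gamma(V_0)^k$ converges in operator norm.

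The main obstacle is controlling $R_N$ uniformly in $(z,\lambda)$ and showing $\sqrt N R_N=o_P(1)$. The difficulty is that the second differential of $G_\lambda$ scales like $1/\lambda$, so a crude second-order bound only yields $R_N=O_P(\lambda^{-1}\Vert V_N-V_0\Vert_\infty^2)=O_P(1/(N\lambda))$, which is not $o_P(1/\sqrt N)$ as $\lambda\downarrow0$. The remedy is to avoid differentiating $G_\lambda$ twice: write $R_N=(\nabla\Gamma_N(\tilde V_N)-\nabla\Gamma_N(V_0))[V_N-V_0]$ for an intermediate $\tilde V_N$ and bound it through the $\boldsymbol{\mathbb{C}}^1$-regularity and derivative rate of $V_N-V_0$ from the second part of Theorem~\ref{Thm: Rust approx rate} together with the empirical-process control used for (ii), turning the remainder into a stochastic-equicontinuity term rather than a pointwise second-order one. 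A secondary point is that the bracketing and entropy bounds in (i)--(ii) must be uniform over $\lambda\in(0,\bar\lambda)$; here the uniform bounds $0\leq G_\lambda(r)-\max_d r(d)\leq\lambda\log D$ and $\sup_{r,\lambda}|\dot G_{\lambda,d}(r)|\leq1$ are what keep the $\lambda$-direction benign for the leading term, even though it is not benign for $R_N$.
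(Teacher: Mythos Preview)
Your route is essentially the paper's: its proof applies the general distributional lemma for approximate fixed points (Theorem~\ref{thm: general dist}) through the master Theorem~\ref{thm: Master}, checking (a) the Donsker property of $\mathcal{G}$ via the Lipschitz parametrization $(z,\lambda)\mapsto g(U;z,\lambda)$ (this is the argument in the proof of Theorem~\ref{Thm: Rust approx rate}, and matches your bracketing-entropy sketch), (b) the Glivenko--Cantelli property of $\mathcal{G}'$ and of $\partial\mathcal{V}=\boldsymbol{\mathbb{C}}_r^1(\mathcal{Z})^D$, hence of their product (your totally-bounded-ball argument is equivalent), and (c) $\|\nabla\Gamma(V_0)\|_{\mathrm{op}}\le\beta$ from $\sum_d\dot G_{\lambda,d}=1$. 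Steps (i)--(iii) of your outline thus coincide with the paper.

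Where you go further is the remainder $R_N$. The paper does not analyse it separately; it is subsumed in condition~(iii) of Theorem~\ref{thm: general dist}, and for $\lambda$ bounded away from zero the smoothness of $r\mapsto G_\lambda(r)$ yields $R_N=O_P(\lambda^{-1}\|V_N-V_0\|_\infty^2)=o_P(\|V_N-V_0\|_\infty)$ directly. Your worry that this bound is not uniform over $\lambda\in(0,\bar\lambda)$ is legitimate, but the remedy you propose does not work: rewriting $R_N=(\nabla\Gamma_N(\tilde V_N)-\nabla\Gamma_N(V_0))[V_N-V_0]$ via a mean-value step does not ``avoid differentiating $G_\lambda$ twice''---controlling $\nabla\Gamma_N(\tilde V_N)-\nabla\Gamma_N(V_0)$ is precisely a second-derivative estimate, governed by $\ddot G_\lambda=O(1/\lambda)$, and the derivative rate in the second part of Theorem~\ref{Thm: Rust approx rate} carries the same $1/\lambda$ factor for the same reason. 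Stochastic equicontinuity does not rescue this either: the relevant indexing class has a modulus of continuity in the $V$-direction that deteriorates like $1/\lambda$, so its entropy is not controlled uniformly in $\lambda$. In short, your argument matches the paper on every verified step; the uniformity-in-$\lambda$ issue you flag is real but is not resolved by your sketch, and the paper's own proof does not address it either.
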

The above result implies, for example, that $\sqrt{N}\left\{ V_{N}\left(z,\lambda\right)-V_{0}\left(z,\lambda\right)\right\} \rightarrow^{d}N\left(,\Omega_{V}\left(z,\lambda,z,\lambda\right)/N\right)$
as $N\rightarrow\infty$ for any given $\left(z,\lambda\right)$,
where
\[
\Omega_{V}\left(z_{1},\lambda_{1},z_{2},\lambda_{2}\right)=\int\int r^{*}\left(z_{1}^{\prime},\lambda_{1}^{\prime}|z_{1},\lambda_{1}\right)\Omega\left(z_{1}^{\prime},\lambda_{1}^{\prime},z_{2},\lambda_{2}\right)r^{*}\left(z_{2}^{\prime},\lambda_{2}^{\prime}|z_{2},\lambda_{2}\right)^{\prime}d\left(z_{1}^{\prime},\lambda_{1}^{\prime}\right)d\left(z_{2}^{\prime},\lambda_{2}^{\prime}\right),
\]
where $r^{*}$ is the Riesz representer of $dV\mapsto\left\{ I-\nabla\Gamma(V_{0})\right\} ^{-1}\left[dV\right]\left(z,\lambda\right)$.
Thus, it allows us to construct (pointwise or uniform) confidence
bands for the expected value function. We expect that the result will
also be useful in analyzing the impact of value function approximation
when used in estimation. This could be done by combining the above
weak convergence result with, e.g., the results for approximate estimators
found in \citet{Kristensen&Salanie2017}.

The proof of Theorem \ref{Thm: Rust normal} proceeds by verifying
the two high-level conditions of the ``master'' Theorem \ref{thm: Master}
where the same weak convergence result is obtained under more general
conditions. Theorem \ref{thm: Master} allows us to replace the smoothness
conditions in Assumption \ref{Ass: smoothness} with some other conditions
implying that $V_{N}-V_{0}$ is situated in a function set with finite
entropy. One example would be to impose restrictions on $u$ and $w$
so that the value function and its estimator are both monotone functions,
c.f. \citealp{Pal&Stacurski2013}, in which case we could then appeal
to Theorem 2.7.5 in \citealp{VW} to obtain the results of Theorem
\ref{Thm: Rust normal}.

We conjecture that a similar weak convergence result will hold for
the non-smoothed value function approximation ($\lambda=0$). However,
the proof of such a result would require different arguments and seemingly
stronger assumptions. In particular, the current proof only requires
the empirical process $\left(z,\lambda\right)\mapsto\Gamma_{N}(V_{0})\left(z,\lambda\right)$
to converge weakly. To allow for non-smooth value function approximation,
we conjecture that we would now need to show that the empirical process
$\left(V,z\right)\mapsto\Gamma_{N}(V)\left(z,0\right)$ converges
weakly over a suitable function set that the estimated non-smooth
solution, $V_{N}$, would be situated in. For this to hold, the uniform
entropy of the function set would need to be finite. Standard choices
of function sets are smooth classes, but $V_{N}$ and its limit $V_{0}$
are both non-smooth now and so the proof appears to be rather delicate.

Finally, for a complete analysis that takes into account the smoothing
bias, we state the following corollary to Theorem \ref{Thm: Rust normal}:
For any $\lambda_{N}\rightarrow0$ such that $\lambda_{N}\sqrt{N}\rightarrow0$,
$\sqrt{N}\{V_{N}\left(\cdot;\lambda_{N}\right)-V_{0}\left(\cdot;0\right)\}\rightsquigarrow\mathbb{G}_{V}\left(\cdot,0\right).$

\subsection{\label{subsec:Projection-based-approximation}Sieve-based approximation
of value functions}

We now proceed to analyze the asymptotic properties of the sieve-based
approximate value function, $\hat{V}_{N}$. To this end, we use the
following decomposition of the over-all error,
\begin{equation}
\hat{V}_{N}-V_{0}=\left\{ \hat{V}_{N}-V_{N}\right\} +\left\{ V_{N}-V_{0}\right\} ,\label{eq: V-hat decomp}
\end{equation}
where the second term converges weakly towards a Gaussian process,
c.f. Theorem \ref{Thm: Rust normal}. What remains is to control the
first term which is due to the sieve approximation; this is done by
imposing the following high-level assumption on the projection operator
when applied to a function set $\mathcal{V}$ which is chosen so that
$V_{N}\in\mathcal{V}$ w.p.a.1.:
\begin{assumption}
\label{ass: projection 1}The projection operator $\Pi_{K}$ satisfies
$\sup_{v\in\boldsymbol{\mathbb{C}}_{r_{0}}^{\alpha}\left(\mathcal{Z}\right)}\left\Vert \Pi_{K}\left(v\right)-v\right\Vert _{\infty}=O_{P}\left(\rho_{K}\right)$
for some sequence $\rho_{K}\rightarrow0$, where $\alpha$ is given
in Assumption \ref{Ass: smoothness} and $r_{0}<\infty$ in Theorem
\ref{Thm: contraction}.
\end{assumption}
This is a high-level condition that requires the chosen function approximation
method to have a uniform error rate over the function class $\boldsymbol{\mathbb{C}}_{r}^{\alpha}\left(\mathcal{Z}\right)$
which we know $z\mapsto V_{0}\left(z,\lambda\right)$ belongs uniformly
in $\lambda$ under Assumption \ref{Ass: smoothness}. As discussed
earlier, one could replace Assumption \ref{Ass: smoothness} with
other regularity conditions that ensure $z\mapsto V_{0}\left(z,\lambda\right)$
is sufficiently regular (e.g., monotonic) in which case $\boldsymbol{\mathbb{C}}_{r}^{\alpha}\left(\mathcal{Z}\right)$
in Assumption \ref{ass: projection 1} should be modified accordingly.
Assumption \ref{ass: projection 1} is satisfied for standard polynomial
approximators with $\rho_{K}=\log\left(K\right)/K^{\left(s+1\right)/d}$,
c.f. Section \ref{subsec:Numerical-implementation}. Compared to results
on sieve approximations of value functions found elsewhere in the
literature, our rate is better since we are here seeking to approximate
the expected value function that is situated in $\boldsymbol{\mathbb{C}}_{r}^{\alpha}\left(\mathcal{Z}\right)$.
In contrast, sieve-based approximations developed in other papers,
such as \citet{Munos&Szepesvari2008} and \citet{Pal&Stacurski2013},
try to approximate the value function which is at most Lipschitz and
for such functions the approximation error will be larger in general.
In the case of $\mathcal{Z}$ being finite, we have $\sup_{V\in\mathcal{V}}\left\Vert \Pi_{K}\left(V\right)-V\right\Vert _{\infty}=0$
for $K>\left|\mathcal{Z}\right|$ under great generality and so there
will be no asymptotic bias component due to sieve approximations in
this case. 

The second part of Theorem \ref{thm: General rate} together with
the fact that $\Gamma_{N,\lambda}(V_{\lambda})-\Gamma_{\lambda}(V_{\lambda})=O_{P}\left(1/\sqrt{N}\right)$,
c.f. Proof of Theorem \ref{Thm: Rust approx rate}, now yield the
following result:
\begin{thm}
\label{thm: proj rate}Suppose Assumptions \ref{assu: mom bound}-\ref{ass: projection 1}
hold. Then $\hat{V}_{N}$, defined as the solution to $\Pi_{K}\Gamma_{N}(\hat{V}_{N})=\hat{V}_{N}$,
satisfies $\Vert\hat{V}_{N}-V_{0}\Vert_{\infty}=O_{p}(1/\sqrt{N})+O_{P}\left(\rho_{K}\right)$.
Suppose in addition that $\alpha\geq1$ in Assumption \ref{Ass: smoothness}.
Then, if $\sqrt{N}\rho_{K}\rightarrow0$, $\sqrt{N}\{\hat{V}_{N}-V_{0}\}\rightsquigarrow\mathbb{G}_{V}.$
\end{thm}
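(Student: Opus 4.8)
The strategy is to use the decomposition in eq.~(\ref{eq: V-hat decomp}), $\hat{V}_{N}-V_{0}=\{\hat{V}_{N}-V_{N}\}+\{V_{N}-V_{0}\}$: the second term is already handled by Theorems \ref{Thm: Rust approx rate} and \ref{Thm: Rust normal}, so everything reduces to controlling the sieve term $\hat{V}_{N}-V_{N}$. First I would record that $\hat{V}_{N}$ is well defined with probability approaching one. By Theorem \ref{Thm: contraction}, $\Gamma_{N}$ is a $\beta$-contraction on $\boldsymbol{\mathbb{B}}\left(\mathcal{Z}\times\left(0,\bar{\lambda}\right)\right)^{D}$ w.p.a.1 (the condition $\inf_{z}\sum_{i}w_{\psi}\left(U_{i};z\right)>0$ holding w.p.a.1), and, as discussed in Section \ref{sec:Approximate-value}, for the sieves covered by Assumption \ref{ass: projection 1} the projection $\Pi_{K}$ maps into $\mathcal{V}_{K}$ with operator norm $\|\Pi_{K}\|_{op,\infty}\le1+o\left(1\right)$; since $\rho_{K}\to0$ forces $K=K_{N}\to\infty$, the composition $\Pi_{K}\Gamma_{N}$ is, for all $N$ large, a contraction with modulus $\gamma_{N,K}\le\beta\left(1+o\left(1\right)\right)<1$, so it has a unique fixed point coinciding with the minimizer in eq.~(\ref{eq: projected sim fixed point}).

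For the rate I would either invoke the second part of the general fixed-point result, Theorem \ref{thm: General rate}, or argue directly: subtracting $V_{0}=\Gamma\left(V_{0}\right)$ from $\hat{V}_{N}=\Pi_{K}\Gamma_{N}(\hat{V}_{N})$ and using $\Pi_{K}\Gamma\left(V_{0}\right)=\Pi_{K}\left(V_{0}\right)$,
\[
\hat{V}_{N}-V_{0}=\bigl(\Pi_{K}\Gamma_{N}(\hat{V}_{N})-\Pi_{K}\Gamma_{N}(V_{0})\bigr)+\Pi_{K}\bigl(\Gamma_{N}(V_{0})-\Gamma(V_{0})\bigr)+\bigl(\Pi_{K}(V_{0})-V_{0}\bigr),
\]
so the contraction property and $\|\Pi_{K}\|_{op,\infty}\le1+o(1)$ give $\|\hat{V}_{N}-V_{0}\|_{\infty}\le(1-\gamma_{N,K})^{-1}\bigl\{(1+o(1))\|\Gamma_{N}(V_{0})-\Gamma(V_{0})\|_{\infty}+\|\Pi_{K}(V_{0})-V_{0}\|_{\infty}\bigr\}$. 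The simulation term is $O_{P}(1/\sqrt{N})$ uniformly in $\lambda$ (an empirical-process average of the $g$-type summands, as in the proof of Theorem \ref{Thm: Rust approx rate}), and $z\mapsto V_{0}(z,\lambda)\in\boldsymbol{\mathbb{C}}_{r_{0}}^{\alpha}(\mathcal{Z})$ uniformly in $\lambda$ by Theorem \ref{Thm: contraction}, so $\|\Pi_{K}(V_{0})-V_{0}\|_{\infty}=O_{P}(\rho_{K})$ by Assumption \ref{ass: projection 1}. Since $(1-\gamma_{N,K})^{-1}$ is bounded, this yields the first claim $\|\hat{V}_{N}-V_{0}\|_{\infty}=O_{P}(1/\sqrt{N})+O_{P}(\rho_{K})$.

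For the distributional statement assume $\alpha\ge1$. Now subtract $V_{N}=\Gamma_{N}(V_{N})$ instead; since $\Pi_{K}\Gamma_{N}(V_{N})=\Pi_{K}(V_{N})$ the simulation error drops out and the same contraction argument gives $\|\hat{V}_{N}-V_{N}\|_{\infty}\le(1-\gamma_{N,K})^{-1}\|\Pi_{K}(V_{N})-V_{N}\|_{\infty}$. By Theorems \ref{Thm: contraction} and \ref{Thm: Rust approx rate}, $z\mapsto V_{N}(z,\lambda)$ lies, w.p.a.1 and uniformly in $\lambda$, in a \emph{fixed} (deterministic) ball of the smoothness class appearing in Assumption \ref{ass: projection 1}, so $\|\Pi_{K}(V_{N})-V_{N}\|_{\infty}=O_{P}(\rho_{K})$ and hence $\|\hat{V}_{N}-V_{N}\|_{\infty}=O_{P}(\rho_{K})$. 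Consequently $\sqrt{N}\|\hat{V}_{N}-V_{N}\|_{\infty}=O_{P}(\sqrt{N}\rho_{K})=o_{P}(1)$ under the hypothesis $\sqrt{N}\rho_{K}\to0$. Writing $\sqrt{N}\{\hat{V}_{N}-V_{0}\}=\sqrt{N}\{V_{N}-V_{0}\}+\sqrt{N}\{\hat{V}_{N}-V_{N}\}$, applying Theorem \ref{Thm: Rust normal} to the first summand and Slutsky's lemma for weak convergence in the metric space $\boldsymbol{\mathbb{B}}(\mathcal{Z}\times(0,\bar{\lambda}))^{D}$ to absorb the (uniformly) $o_{P}(1)$ second summand, we conclude $\sqrt{N}\{\hat{V}_{N}-V_{0}\}\rightsquigarrow\mathbb{G}_{V}$.

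The main obstacle is not a new probabilistic ingredient — the $\sqrt{N}$-rate and the Gaussian limit are already supplied by Theorems \ref{Thm: Rust approx rate} and \ref{Thm: Rust normal} — but the bookkeeping around the sieve term: one must (i) verify that $\Pi_{K}\Gamma_{N}$ is a contraction w.p.a.1 with $(1-\gamma_{N,K})^{-1}$ bounded uniformly in $\lambda$, so that $\hat{V}_{N}$ exists and the perturbation identities are valid for all large $N$, and (ii) ensure that the deterministic Hölder ball over which Assumption \ref{ass: projection 1} controls $\Pi_{K}$ actually contains $z\mapsto V_{N}(z,\lambda)$ w.p.a.1, uniformly in $\lambda$ — which is obtained by combining the almost-sure smoothness bound for $V_{N}$ in Theorem \ref{Thm: contraction} with the deterministic-radius refinement established, for $\alpha\ge1$, in Theorem \ref{Thm: Rust approx rate}.
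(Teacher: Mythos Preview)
Your proposal is correct and follows essentially the same route as the paper: the rate is obtained via the second part of Theorem \ref{thm: General rate} (your ``direct'' argument is just its proof), and the weak convergence via the decomposition (\ref{eq: V-hat decomp}), bounding $\Vert\hat{V}_{N}-V_{N}\Vert_{\infty}=O_{P}(\rho_{K})=o_{P}(1/\sqrt{N})$ through Assumption \ref{ass: projection 1} applied to $V_{N}$, and then invoking Theorem \ref{Thm: Rust normal} with Slutsky. If anything, you are more careful than the paper in spelling out why $\Pi_{K}\Gamma_{N}$ is a contraction w.p.a.1 and why $V_{N}$ sits in the deterministic H\"older ball needed for Assumption \ref{ass: projection 1}.
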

The discussions following Theorems \ref{Thm: Rust approx rate} and
\ref{Thm: Rust normal} carry over to the above result. In particular,
the rate result still goes through when no smoothing is employed ($\lambda=0$)
but the current proof of the asymptotic distribution result requires
smoothing ($\lambda>0$). Compared to the rate results for $V_{N,\lambda}$,
the projection-based method suffers from an additional error due to
the sieve approximation, $O_{P}\left(\rho_{K}\right)$. This can be
interpreted as a bias term, while $O_{p}(1/\sqrt{N})$ is its variance
component which is shared with $V_{N,\lambda}$. The requirement that
$\sqrt{N}\rho_{K}\rightarrow0$ is used to kill the sieve bias term
so that $\hat{V}_{N}$ is centered around $V_{0}$.

The above result provides a refinement over existing results where
a precise rate for the bias is not available; see, e.g., Lemma 5.2
in \citet{Pal&Stacurski2013}. It shows that there is an inherent
computational curse-of-dimensionality built into our projection-based
value function approximation when polynomial interpolation is employed:
In high-dimensional models, a large number of basis functions are
needed which in turn increases the computational effort. In the case
of polynomial approximations, the rate condition becomes $\sqrt{N}\log\left(K\right)/K^{\left(s+1\right)/d}\rightarrow0$
and so, as $d$ increases, we need $K$ to increase faster with $N$
to kill the sieve bias component. However, also note that $K$ as
no first-order effect on the variance and so there is no bias-variance
trade-off present. In particular, we can let $K$ increase with $N$
as fast as we wish and so our procedure should in principle also work
for models with high-dimensional state space. That is, we can achieve
$\sqrt{N}$-rate regardless of the dimension of the problem and so
our method does not suffer from any statistical curse-of-dimensionality.
However, this requires choosing $K$ large enough in order to control
the sieve approximation bias which will increase computation time
as the dimension grows. Thus, there is a potential computational curse-of-dimensionality.

\section{\label{sec:Numerical}Numerical results}

In this section we examine the numerical performance of the proposed
solution algorithms with focus on how the theoretical results derived
in the previous sections translate into practice and how different
features of model and implementation affect their performances. 

We focus exclusively on approximating the integrated value function,
$v\left(z\right)$, and measure the performance of a given approximate
solution, say, $\tilde{v}\left(z\right)$ in terms of its pointwise
bias, variance and mean-square error (MSE) defined as $Bias\left(z\right):=E[\tilde{v}\left(z\right)]-v\left(z\right)$,
$Var\left(z\right):=Var\left(\tilde{v}\left(z\right)\right)=E\left[(\tilde{v}\left(z\right)-E[\tilde{v}\left(z\right)])^{2}\right]$
and $MSE\left(z\right)=Bias^{2}\left(z\right)+Var\left(z\right)$,
respectively. As overall measures we use uniform bias, variance and
MSE, $\left\Vert Bias\right\Vert $$_{\infty}=\sup_{z\in\mathcal{Z}}\left|Bias\left(z\right)\right|$,
$\left\Vert Var\right\Vert $$_{\infty}=\sup_{z\in\mathcal{Z}}\left|Var\left(z\right)\right|$
and $\left\Vert MSE\right\Vert $$_{\infty}=\sup_{z\in\mathcal{Z}}\left|MSE\left(z\right)\right|$.
Given that the exact solution $v\left(z\right)$ is unknown, we replace
this by a very precise approximate solution computed in the following
way: First, instead of using simulations in the computation of the
Bellman operator, we utilize that the state transitions follow a Beta
distribution in the chosen model (see below) and so we can use nodes
and weights based on Jacobi polynomials to compute it using numerical
integration. We then implement the sieve method using $K=60$ Chebyshev
polynomials and $N=60$ sets of quadrature nodes and weights. The
``exact'' solution was computed by successive approximation until
a contraction tolerance of machine precision was reached. We approximate
the point-wise bias and variance of a given method through $S\geq1$
independent replications of it: Let $\tilde{v}_{1}\left(z\right),....,\tilde{v}_{S}\left(z\right)$
be the solutions obtained across the $S$ replications, where $S$
generally was chosen to 2,000. We then approximate the mean by $\hat{E}[\tilde{v}\left(z\right)]=\frac{1}{S}\sum_{s=1}^{S}\tilde{v}_{s}\left(z\right)$
which in turn is used to obtain the following pointwise bias and variance
estimates, $\hat{Bias}\left(z\right)=\hat{E}[\tilde{v}\left(z\right)]-v_{0}\left(z\right)$,
and $\hat{Var}\left(z\right)=\frac{1}{S}\sum_{s=1}^{S}(\tilde{v}_{s}\left(z\right)-\hat{E}[\tilde{v}\left(z\right)])^{2}$.
Based on these, we approximate $\left\Vert Bias\right\Vert $$_{\infty}$
and $\left\Vert Var\right\Vert $$_{\infty}$ by the maximum pointwise
biases and variances over a uniform grid over $[0,1000]$ of size
500 in the univariate case and a uniform grid over $[0,1000]^{2}$
of size 250.

To implement the sieve-based method, we need to choose the sieve space
used in constructing $\Pi_{K}$. We here focus on Chebyshev basis
functions and B-Splines as discussed in Section \ref{subsec:Numerical-implementation}\footnote{For more details on their implementation, see appendix \ref{sec:Sieve-spaces}
.}.

\subsection{\label{subsec: Rust model}A model of optimal replacement}

To provide a test bed for comparison of the sieve-based approximation
method, we use the well-known engine replacement model by \citet{Rust1987}.
Rust's model has become the basic framework for modeling dynamic discrete-choice
problems and has been extensively used in other studies to evaluate
the performance of alternative solution algorithms and estimators.
While the model and its solution is well described in many papers,
for completeness we briefly describe our variation of it below. 

We consider the optimal replacement of a durable asset (such as a
bus engine) whose controlled state $Z_{t}\in\mathbb{R}_{+}$ is summarized
by the accumulated utilization (mileage) since last replacement. In
each period, the decision maker faces the binary decision $d_{t}\in\mathcal{D=\mathrm{\left\{ 0,1\right\} }}$
whether to keep ($d_{t}=0)$ or replace $(d_{t}=1$) the durable asset
with a fixed replacement cost $RC>0$. If the asset is replaced, accumulated
usage $Z_{t}$ regenerates to zero. The maintenance/operating costs
are assumed to be linear in usage $Z_{t}$, $c(Z_{t})=\theta_{c}\cdot0.001\cdot Z_{t}$.
The state and decision dependent per period utility is then given
by $\bar{u}(Z_{t},d_{t})+\varepsilon_{t}(d_{t})$ where $\bar{u}(Z_{t},d_{t})=\left(RC+c(0)\right)I\left\{ d_{t}=0\right\} +c\left(Z_{t}\right)I\left\{ d_{t}=1\right\} $
and the utility shocks $\varepsilon_{t}=\left(\varepsilon_{t}(0),\varepsilon_{t}(1)\right)$
are i.i.d. extreme value and fully independent of $Z_{t}$. This specification
is a special case of Example 1 with $\lambda=\sigma_{\varepsilon}$
and so the simulated Bellman operator takes the form (\ref{eq: Gamma_N Ex 1})
where $G_{\sigma_{\varepsilon}}\left(\cdot\right)$ appears as part
of the model. Thus, there is no smoothing bias present in the baseline
model. In Section \ref{subsec:smoothng}, we investigate the effect
of smoothing by pretending that we are not able to integrate out $\varepsilon_{t}$
analytically in the baseline model and instead we simulate both $Z_{t}$
and $\varepsilon_{t}$ and then include our smoothing device in the
computation of the simulated Bellman operator.

We assume that $Z_{t}$ (in absence of the replacement decision) follows
a mixture of a discrete distribution with a probability mass $\pi>0$
at zero and a linearly transformed Beta distribution with shape parameters
$a$ and $b$ and scale parameter $\sigma_{\varepsilon}>0$. Thus,
\begin{equation}
F_{Z}\left(z^{\prime}|z,d\right)=\pi I\left\{ z^{\prime}=z\right\} +(1-\pi)F_{+}\left(z^{\prime}|z,d\right),\label{eq: milage transition}
\end{equation}
where $F_{+}\left(z^{\prime}|z,d\right)$ has density $f_{+}\left(z^{\prime}|z,d\right)=f_{\beta}\left((z^{\prime}-z)/\sigma_{Z};a,b\right)/\sigma_{Z}$,
$\pi>0$ is the probability of no usage and $f_{\beta}(x;a,b)$ is
the probability density function of the Beta distribution with shape
parameters $a,b$. Note here that it has bounded support $\left(0,\sigma_{Z}\right)$
so that $f_{+}\left(z^{\prime}|z,d\right)=0$ for $z^{\prime}<z$
or $z^{\prime}-z>\sigma_{Z}$. This is in line with the discretized
model in the original formulation in \citet{Rust1987} where monthly
mileage were only allowed to take a few discrete values and monthly
mileage is naturally bounded above and below (busses never drives
backwards and there are limits how far a bus can drive within a month).\textbf{
}We introduce probability mass $\pi$ at \textbf{$z^{\prime}=z$ }to
allow for the possibility that the asset is not used in a given period
and thereby can end in the same state with positive probability when
$\pi>0$. As explained below, this feature turns out to be quite important
for the applicability of the self-approximating method of Rust (1997).
Note that the support of $Z_{t}$ is unbounded (the positive half
line) and therefore the theory does not apply directly, since we throughout
assumed bounded support. However, we expect that the theory extends
to the unbounded case after suitable modifications, c.f. discussion
following Assumptions \ref{assu: mom bound}-\ref{Ass: smoothness}.

In the numerical illustrations below we use the following set of benchmark
parameter values unless otherwise specified: We set replacement cost
to $RC=10$ and the cost function parameter to $\theta_{c}=2$ so
that $RC$ is 5 times as large as $c(1000)$. This implies a large
variation in the probability of replacement over $Z_{t}$ compared
to \citet{Rust1987} and a more curved value function. The parameters
indexing the transition density $f_{+}\left(z^{\prime}|z,d\right)$
are set to $\sigma_{Z}=15$, $a=2$, $b=5$ and $\pi=10^{-10}$ as
default. This implies a quite sparse transition density, which is
similar to the fitted model in \citet{Rust1987}. In Figure \ref{fig: exact_solution_1d}
we plot the corresponding ``exact'' solution as described earlier.
 Importantly, since the transition density is an analytic function
the value function is also analytic and so well-approximated by polynomial
interpolation methods.

\begin{figure}
\caption{Fine Approximation as ``Exact'' Solution\label{fig: exact_solution_1d}}

\includegraphics[width=0.5\textwidth]{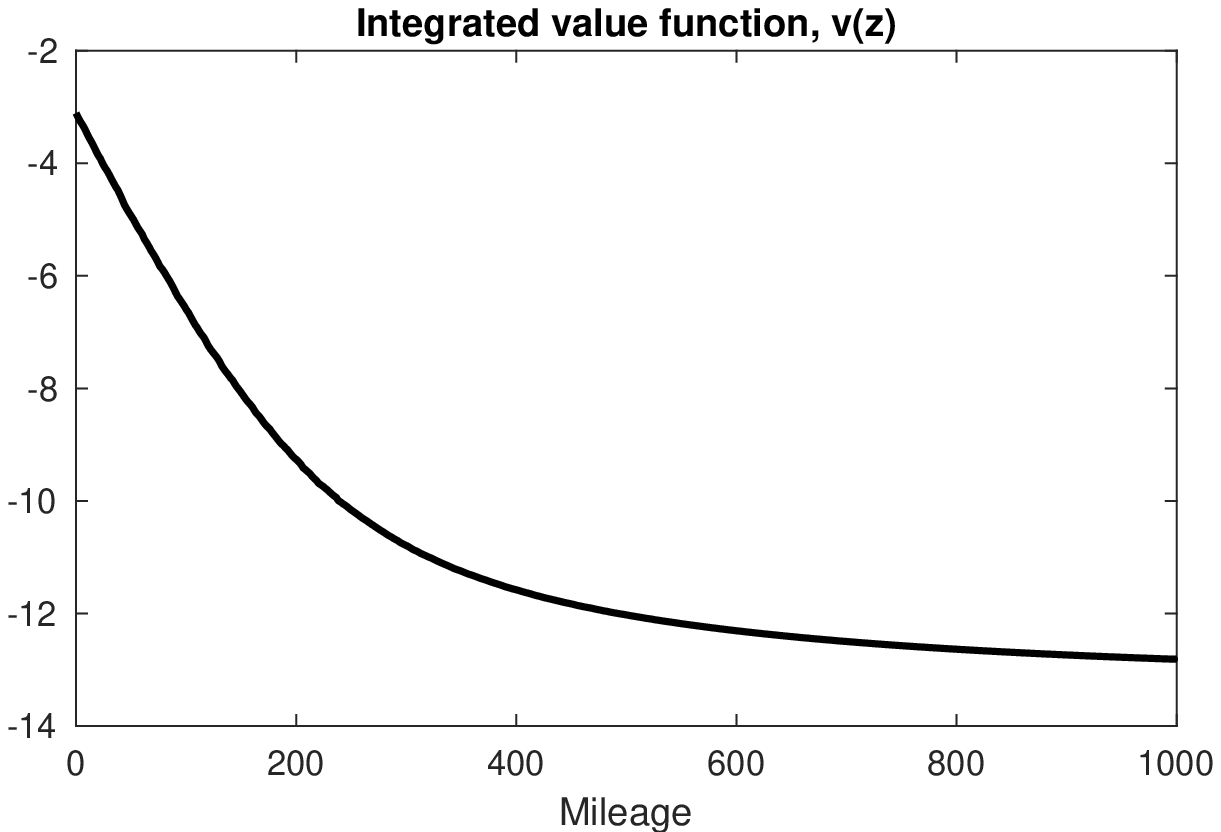}\includegraphics[width=0.5\textwidth]{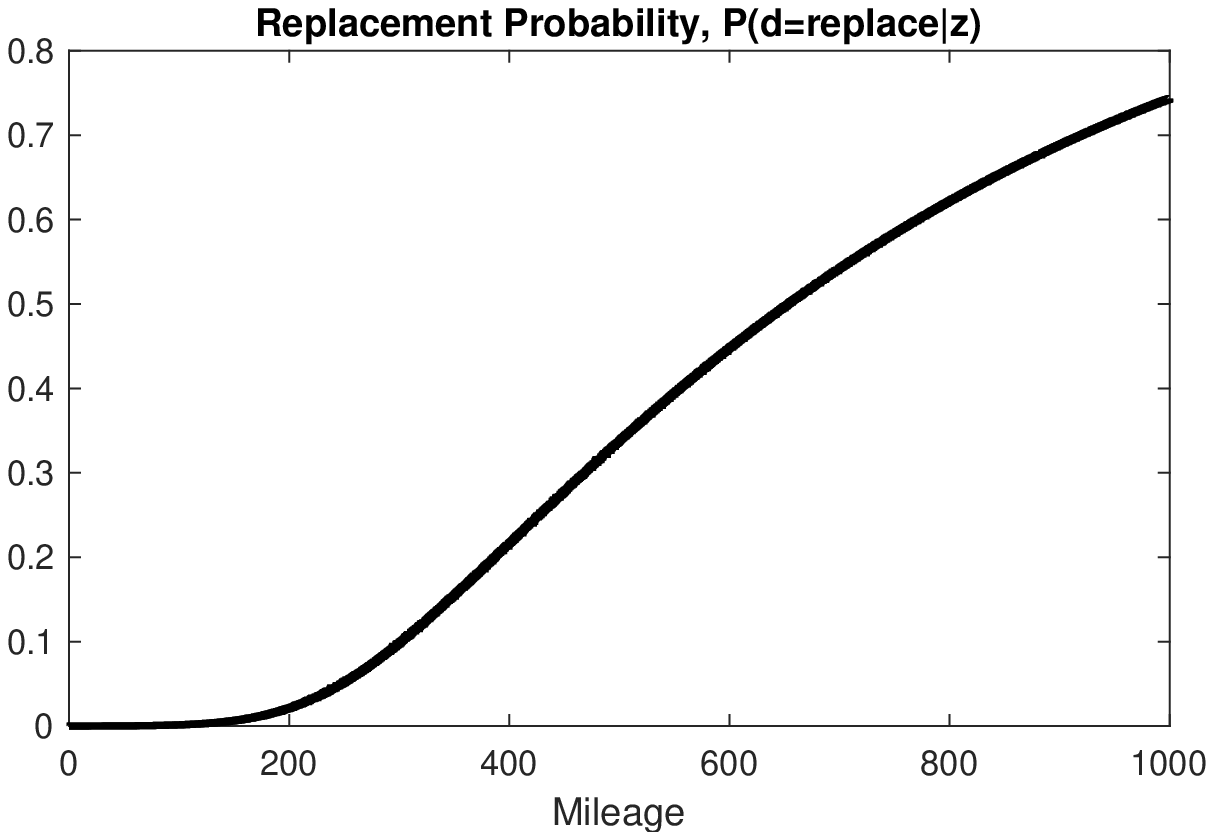}

{\footnotesize{}Notes: Discount factor is $\text{\ensuremath{\beta}}=0.95$,
utility function parameters are $\theta_{c}=2$, $RC=1$, $\lambda=1$
and transition parameters are $\sigma_{Z}=15$, $a=2$, $b=5$ and
$\pi=0$.000000001. }{\footnotesize\par}
\end{figure}

\subsection{\label{subsec:Implementation-of-Bellman}Numerical implementation
of simulated Bellman operators}

The simulated Bellman operators in (\ref{eq: smooth Gamma_N}) and
(\ref{eq: smooth Gamma-bar_N}) require the user to choose an importance
sampling distribution. For the self-approximating solution method
we need to choose a marginal sampler, $d\Phi_{Z}\left(z^{\prime}|z,d\right)=\phi_{Z}\left(z^{\prime}\right)dz^{\prime}$.
We follow \citet{Rust1997curse} and choose $\phi_{Z}\left(z^{\prime}\right)=I\left\{ 0<z^{\prime}<z^{\max}\right\} $
as a uniform density with support support $\left[0,z^{\max}\right]$
for some truncation point $0<z^{\max}<\infty$ chosen by us. First
note that this entails that the simulated Bellman operator used for
the self-approximating value function is biased since we do not sample
from the full support $\mathcal{Z}=\mathbb{R}_{+}$; however, this
bias can be controlled by choosing $z^{\max}$ large enough. We will
explain below why we do not choose $\phi_{Z}\left(z^{\prime}\right)$
as a density with support $\mathbb{R}_{+}$. Using a uniform sampler,
the corresponding Radon-Nikodym derivative takes the form $w_{Z}\left(z^{\prime}|z,d\right)=\pi\delta\left(z^{\prime}-z\right)+(1-\pi)f_{+}\left(z^{\prime}|z,d\right)$
where $\delta\left(\cdot\right)$ denotes Dirac's delta function.
We approximate this by $\hat{w}_{Z}\left(z^{\prime}|z,d\right)=\pi I\left\{ z^{\prime}=z\right\} +(1-\pi)f_{+}\left(z^{\prime}|z,d\right)$
which entails another small approximation error. For the sieve-based
version, we simply choose $\Phi_{Z}\left(z^{\prime}|z,d\right)=F_{Z}\left(z^{\prime}|z,d\right)$
and so $w_{Z}\left(z^{\prime}|z,d\right)=1$. 

As explained in Section \ref{subsec:Numerical-implementation}, using
a marginal importance sampler creates issues since it fails to adapt
to the particular shape of the support of $F_{Z}\left(z^{\prime}|z,d\right)$.
In particular, for a given choice of $z$, many of the draws from
$\phi_{Z}\left(z^{\prime}\right)$ will tend to fall outside the support
of $f_{Z}\left(z^{\prime}|z,d\right)$ and so will not contribute.
In contrast, when $\phi_{Z}\left(z^{\prime}|z,d\right)=f_{Z}\left(z^{\prime}|z,d\right)$,
the draws from $\phi_{Z}$ will by construction fall within the support
of $f_{Z}\left(z^{\prime}|z,d\right)$. This can be seen in Figure
\ref{fig:Ggrids} where we have plotted the random draws obtained
from the two different importance samplers used for the sieve-based
and self-approximating solutions together with the actual support
of $f_{Z}\left(z^{\prime}|z,d\right)$. In the left-hand side panel
we have plotted pairs of the uniform draws, $\left(Z_{i},Z_{j}\right)$
for $i,j=1,...,N$, used for Rust's self-approximating method with
$N=400$ and $z^{max}=1,000$, while in the right-hand side we have
plotted $\left(z_{i},Z_{j}\left(z_{i},d\right)\right)$ where $z_{i}$
are uniform draws and $Z_{j}\left(z,d\right)\sim f_{Z}\left(\cdot|z,d\right)$.
In both cases, we have marked the pairs for which the corresponding
density, $f_{Z}\left(Z_{j}|Z_{i},d\right)$ and $f_{Z}\left(Z_{j}\left(z_{i},d\right)|z_{i},d\right)$,
respectively, is positive. Clearly, the use of a marginal importance
sampling density leads to very poor coverage of the actual support
of $f_{Z}\left(z^{\prime}|z,d\right)$ as $z$ varies while by construction
$\Phi_{Z}\left(z^{\prime}|z,d\right)=F_{Z}\left(z^{\prime}|z,d\right)$
does an excellent job. This translates into the former simulated Bellman
operator exhibiting much larger variance compared to the latter.

\begin{figure}
\caption{Random Grids\label{fig:Ggrids}}

\includegraphics[width=0.5\textwidth]{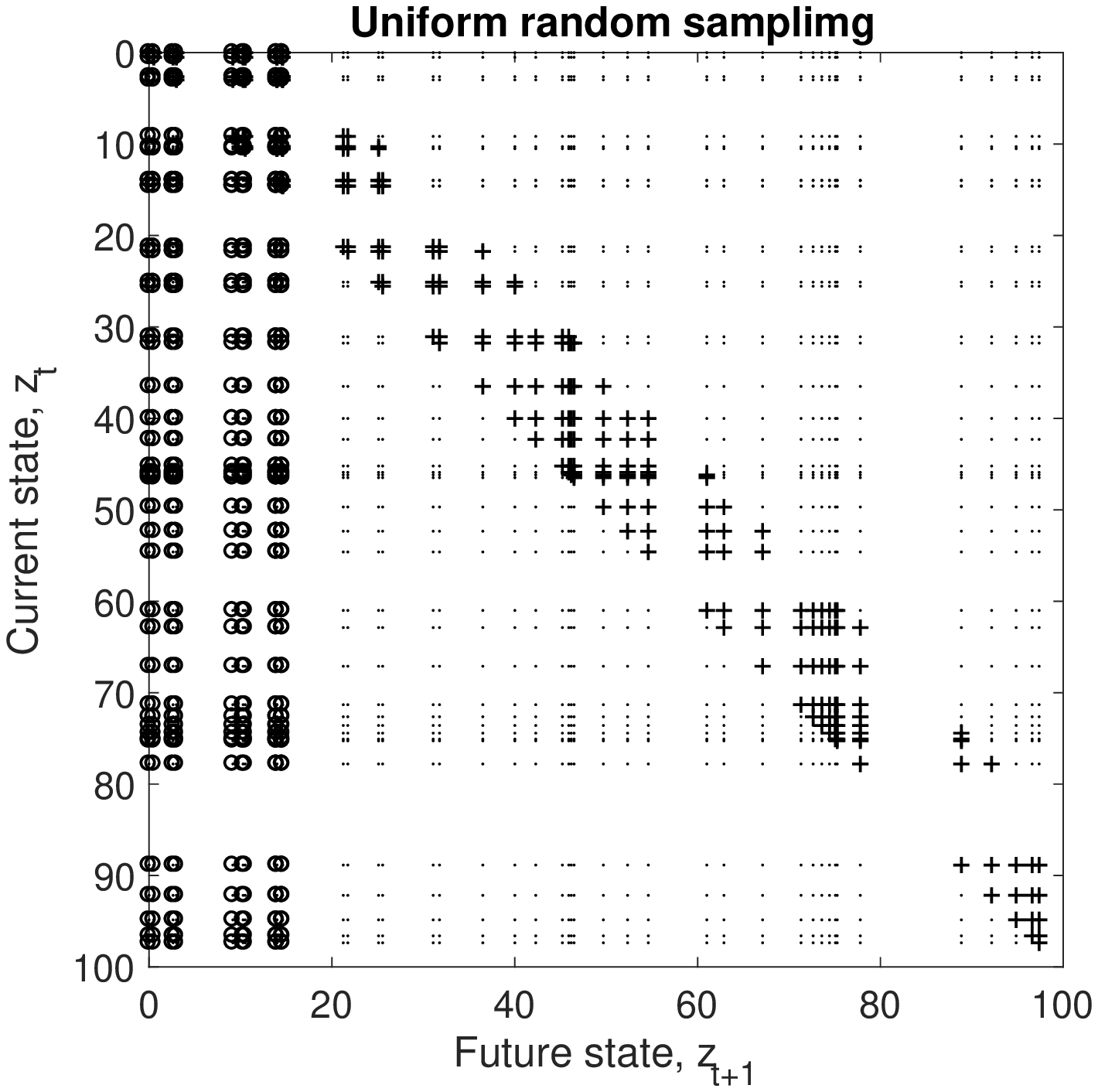}\includegraphics[width=0.5\textwidth]{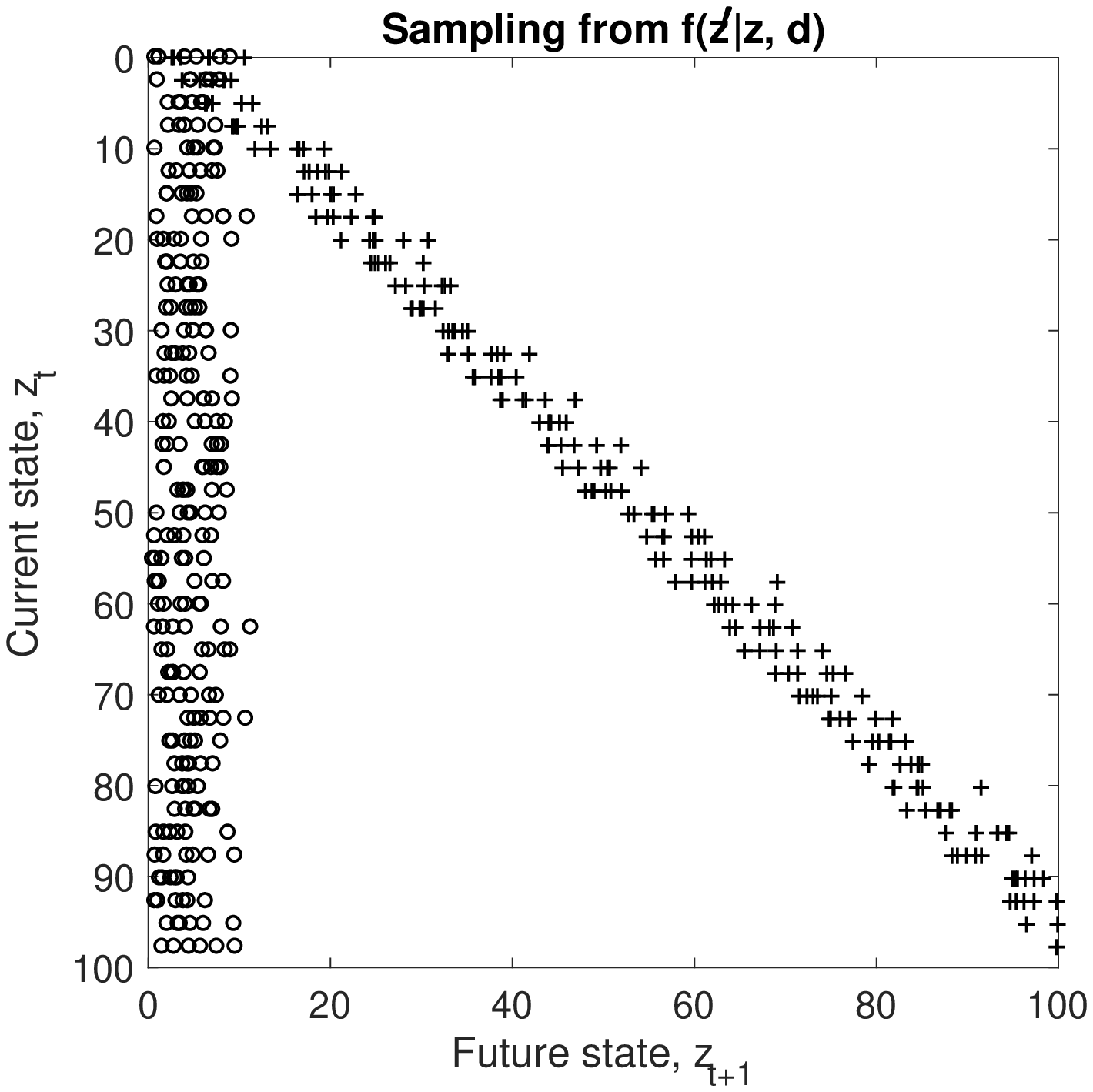}

{\footnotesize{}Notes: In the left panel we present the grids used
for the self-approximate random Bellman operator. We have uniformly
sampled a random grid, $\{Z_{1},...,Z_{N}\}$ on the interval $[0;1000]$
with $N=$400. Dots (.) mark sampled grid points in $R^{2}$: $Z_{N}\times Z_{N}$,
plus (+) mark grid points where $f(z_{j}|z_{i},d=0)>0$ and circles
(o) mark points where $f(z_{j}|z_{i},d=1)>0$. In the right panel,
we plot the grid the projected random Bellman operator, where we have
sampled directly from the conditional transition density in each of
the $M=400$ uniformly spaced evaluation points. To have equally many
grid-points with non-zero transition density we only need $N=400*\sigma_{Z}/\max(Z_{N})=9$
random grids for each of the $M=400$ evaluation points. Both figures
show only a subset of the state space, $(z,z')\in[0;100]^{2}$. Parameters
are $\sigma_{Z}=15$, $a=2$, $b=5$ and $\pi=0.0000000001$.}{\footnotesize\par}
\end{figure}

This issue is further amplified when we introduce the normalization
given in eq. (\ref{eq: sampling weight}): Suppose that we had not
included a discrete component $\pi I\left\{ z^{\prime}=z\right\} $
in the model. Then, with $Z_{i}\sim U\left[0,z^{\max}\right]$, $w_{N,Z,i}(Z_{j},d)=f_{+}\left(Z_{i}|Z_{j},d\right)/{\textstyle \sum}_{k=1}^{N}f_{+}\left(Z_{k}|Z_{j},d\right).$
Since $f_{+}(z^{\prime}|z,d)$ has bounded support, it often happens
that ${\textstyle \sum}_{k=1}^{N}f_{+}\left(Z_{k}|Z_{j},d\right)=0$
for even large values of $N$ and so the simulated Bellman operator
is not even well-defined. This issue will vanish as $N\rightarrow\infty$,
but this on the other hand increases the computational burden since
the self\textendash approximating method require us to solve for the
value function at the $N$ draws. Introducing the discrete component
in the model resolves this issue since now $w_{Z,N,i}(Z_{j},d)=\hat{w}_{Z}\left(Z_{i}|Z_{j},d\right)/{\textstyle \sum}_{k=1}^{N}\hat{w}_{Z}\left(Z_{k}|Z_{j},d\right),$
where ${\textstyle \sum}_{k=1}^{N}\hat{w}_{Z}\left(Z_{k}|Z_{j},d\right)>0$
for all $j=1,...,N$ by construction. Thus, $\pi>0$ functions as
a regularization device.

\begin{figure}
\noindent \begin{centering}
\caption{Truncation bias due to $z^{max}$ being too low.\label{fig:Truncation-bias}}
\par\end{centering}
\noindent \centering{}\input{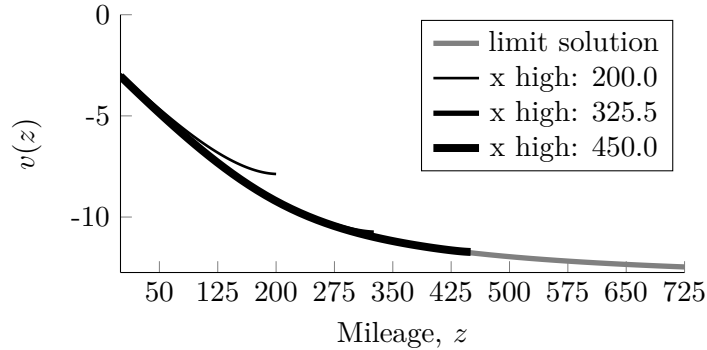}
\end{figure}

Why not choose $\phi_{Z}\left(z\right)$ as a density with unbounded
support in order to avoid the issue of truncation? In our initial
experimentation, we did try out sampling from distributions with unbounded
support, but the above numerical issues became even more severe in
this case since the resulting draws are even more dispersed. Figure
\ref{fig:Truncation-bias} shows how the solution depends on $z^{\max}$.
The effect of the truncation $z^{\max}$ will be model specific and
in practice experimentation is required. If we, for example, simply
set $z^{\max}=1,000,000$, the variance of the simulated Bellman operator
becomes very large for a given $N$ due to the issue with undefined
sample weights $w_{N,i}\left(z,d\right)$ mentioned above. At the
same time, choosing $z^{\max}$ too small leads to a large bias. To
balance the bias and variance, we ended up using $z^{\max}=1000$
which all subsequent numerical results for the self-approximating
method is based on. Finally, we would like to stress that none of
these issues appear for the sieve-based method.

\subsection{Convergence properties and computation times}

We first investigate the convergence properties of our solution methods
for given choice of $K$ and $N$. Do they converge and if so how
fast?

\subsubsection*{Global convergence properties of sieve method}

As demonstrated in Theorem \ref{Thm: contraction}, the simulated
Bellman operators are always contraction mappings and so the self-approximating
method is guaranteed to converge using successive approximations.
In contrast, $\Pi_{K}\bar{\Gamma}_{N,\lambda}$ is not necessarily
a contraction and so global convergence of the sieve method may fail,
c.f. discussion in Section \ref{subsec:Projection-based-approximation}.
A sufficient condition for global convergence is $||\Pi_{K}||_{op,\infty}<1/\beta$
and we saw that $||\mathbf{P}_{K}||_{op,\infty}>1$ implies $||\Pi_{K}||_{op,\infty}>1$.
However, even if $||\mathbf{P}_{K}||_{op,\infty}>1$, successive approximation
may still converge: Across various parameter values of model, choices
of sieve spaces and number of simulations, we did not encounter any
failure of the sieve method to converge and the resulting approximate
solution was well-behaved. This finding held across various initializations
of the solution algorithms (initial choice of sieve coefficients).
For example, we implemented the sieve method using $M=64$ evaluation
points and using either $K=1$ or $K=4$ Chebyshev basis functions.
We found that $||\mathbf{P}_{1}||_{op,\infty}=1$ while $||\mathbf{P}_{4}||_{op,\infty}>1.78$
and so the sieve method was guaranteed to converge for $K=1$ but
not for $K=4$. Nevertheless, the method of successive approximations
did in fact converge to a tolerance of $10^{-12}$ for both $K=1$
and $K=4$.

\subsubsection*{Successive approximation versus Newton-Kantorovich}

In Section \ref{subsec:Numerical-implementation} we advocated a hybrid
of successive approximation (SA) and Newton-Kantorovich (NK) where
we start with SA to ensure global convergence, and switch to NK iterations
once the domain of attraction has been reached since NK generally
converges faster. We illustrate this attractive feature of the NK
algorighm in Figure \ref{fig:DISCOUNT-1} where we have plotted the
log residual error of the current value function approximation (relative
to the ``exact'' solution) against the iteration count for the SA
and NK algorithms, respectively, for four different values of $\beta$.
As expected, the convergence of the SA algorithm requires a very large
number of iterations ($>1000$) with computation time increasing in
$\beta$, where as NK converges after less than 10 iterations and
with the value of $\beta$ having little effect on its performance.

\begin{figure}
\caption{Convergence and discount factor \label{fig:DISCOUNT-1}}

\noindent \begin{centering}
\includegraphics[width=0.5\textwidth]{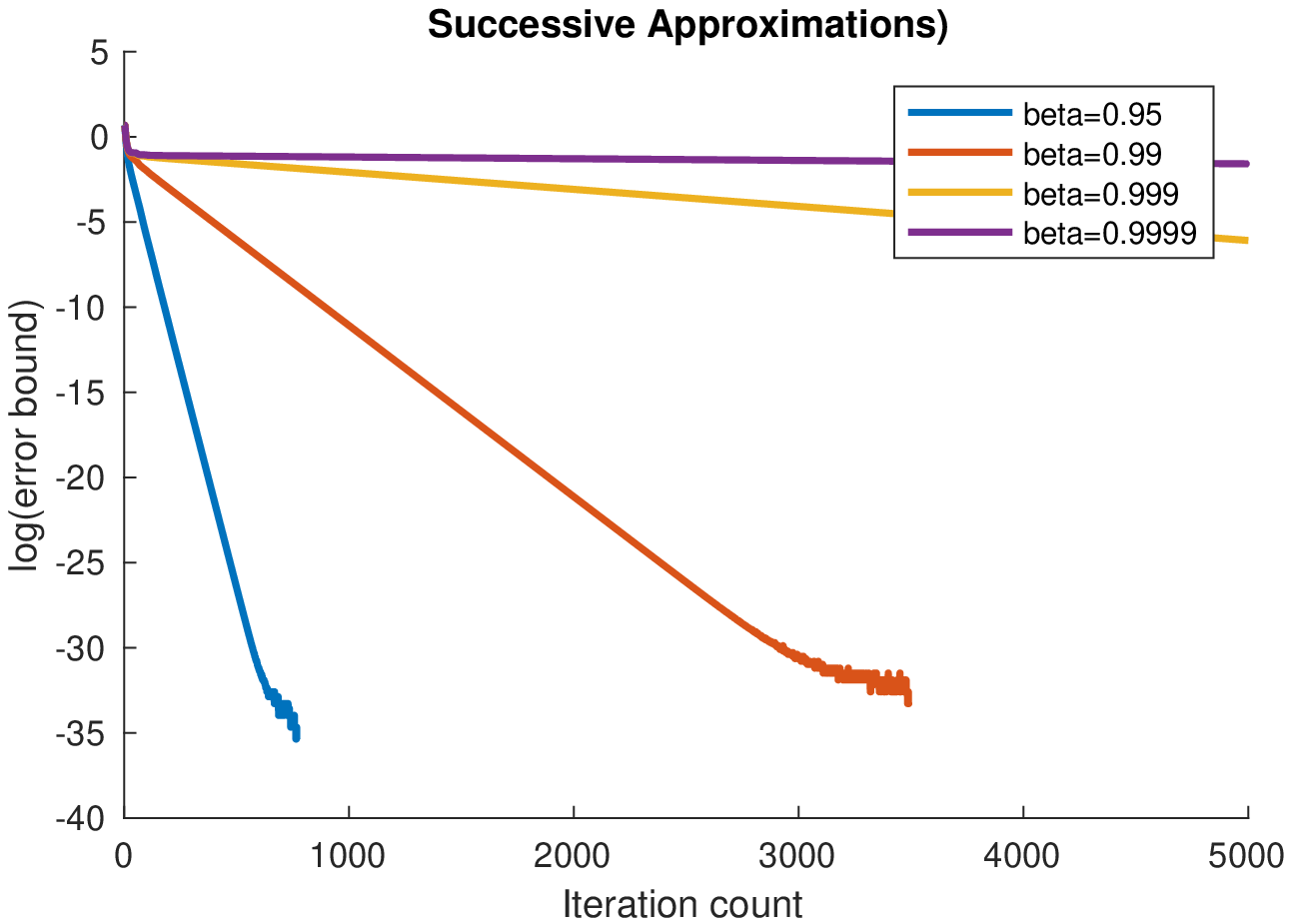}\includegraphics[width=0.5\textwidth]{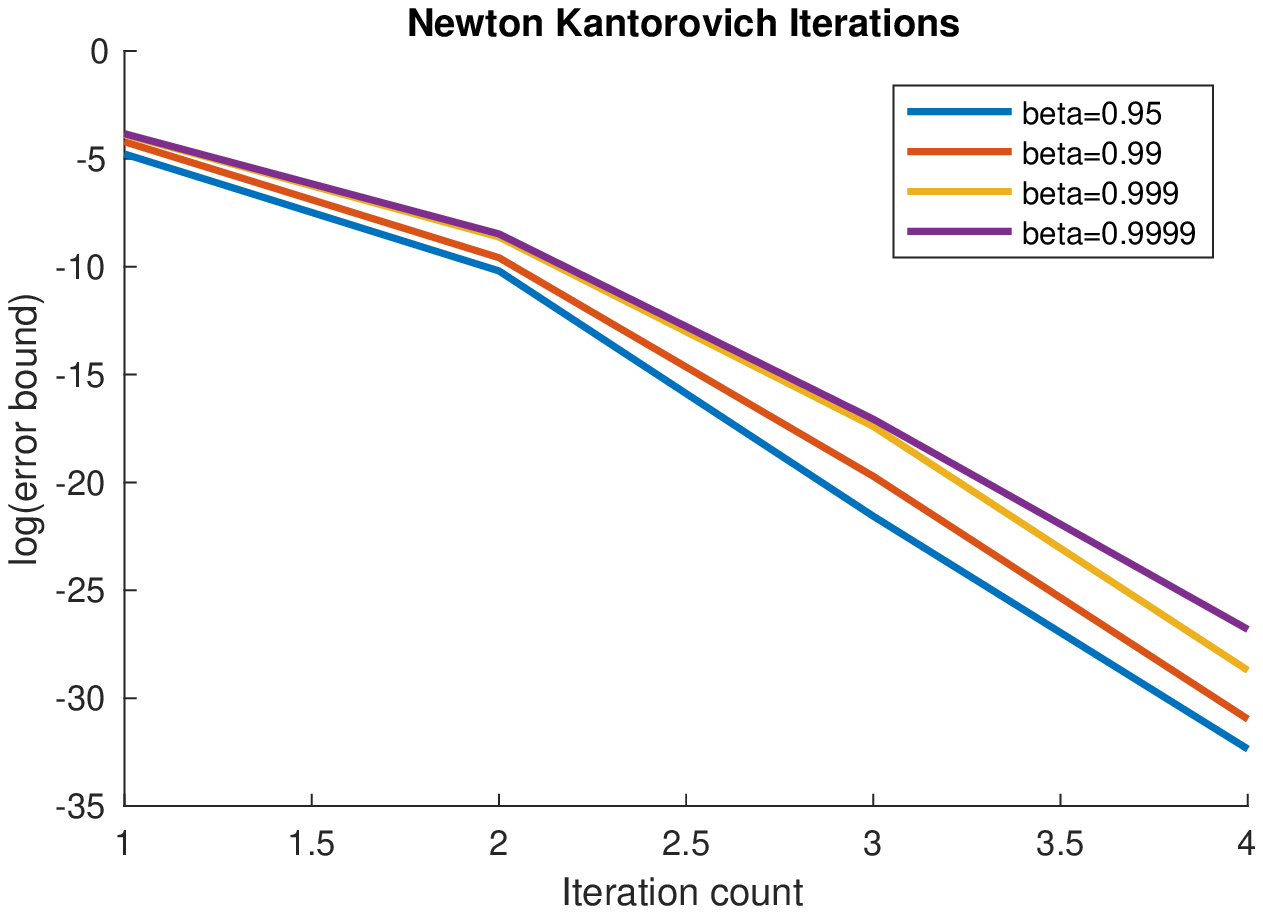}
\par\end{centering}
{\footnotesize{}Notes: Discount factor is $\text{\ensuremath{\beta}}\in\{0.95,0.99,0.999,0.9999\}$,
utility function parameters are $\theta_{c}=2$, $RC=1$, $\lambda=1$
and transition parameters are $\sigma_{\varepsilon}=15$, $a=2$,
$b=5$ and $\pi=0$.000000001.}{\footnotesize\par}
\end{figure}

Figure \ref{fig:DISCOUNT-1} is silent about the over-all computation
time of SA relative to NK. Compared to SA, each NK iteration is more
expensive since the former only requires computing the simulated Bellman
operator evaluated at the value function obtained in the previous
step while the latter, in addition, requires computing its functional
derivative and inverting a $K\times K$ dimensional matrix for the
integrated value function and a $KD\times KD$ dimensional matrix
for the expected value function, c.f. Section \ref{subsec:Numerical-implementation}.
With $K$ large, one could therefore fear that NK would become computationally
too expensive. 

In Figure \ref{fig:speed-1-1} we report best of 10 run-times for
various levels of $K$ and $\beta$ and tolerance levels of SA and
NK where we also include set-up time (time spent on initial computations
before starting the actual algorithm). As expected, we find that NK
is the faster of the two algorithms when $\beta$ is relatively large
and $K$ is relatively small. With $K=5$ NK is faster across all
levels of $\beta$ while for $K=100$ and $K=500$, SA is faster for
moderate values of $\beta$. However, as we shall subsequently see,
with $K=5$ the sieve method carries almost no bias and so choosing
$K$ larger (such as 100 or 500) is actually unnecessary here and
is only included here to illustrate potential issues with NK for models
where a large number of sieve terms are needed to obtain a good approximation
of the value function. Moreover, in most empirical applications, $\beta$
is chosen to be larger than 0.99 in which case NK still dominates
SA even with $K=500$. 

\begin{figure}
\caption{Run-times (incl setup times) for SA (dotted lines) and NK (drawn lines)
algorithms.}
\label{fig:speed-1-1}\input{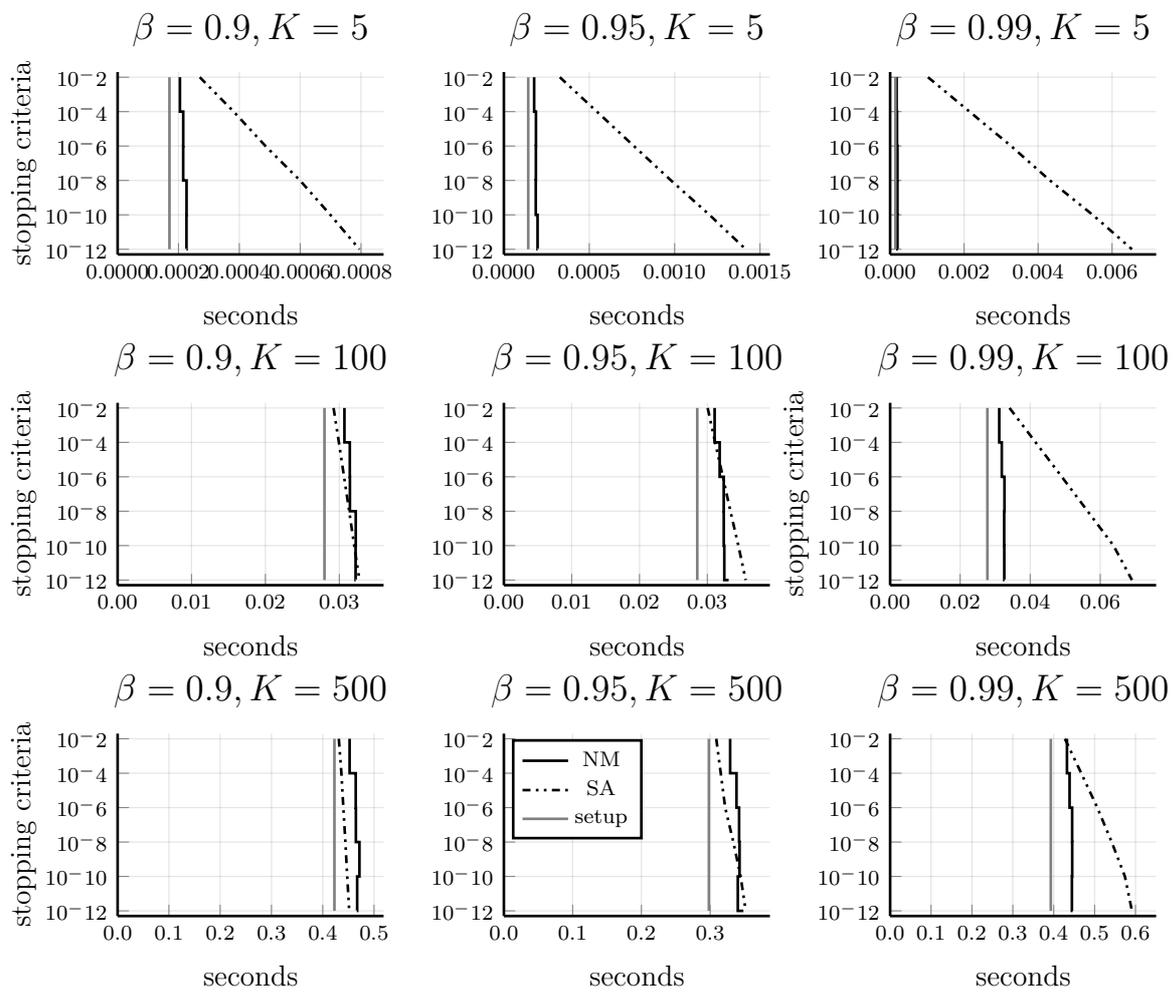}
\end{figure}

\subsection{Approximation quality}

We here investigate how the approximate value function is affected
by the number of draws and the chosen projection basis. The goal is
to demonstrate the rate results of the theoretical sections, and to
compare the two types of basis functions spaces that we described
above. We will take a partial approach and first fix $N$ to study
the role of $K$, and then afterwards fix $K$ to study the role of
$N$. All subsequent results are for the case of $\beta=0.95$. This
is to save space. We implemented the methods for other values of $\beta$
and since the numerical results were qualitatively the same, we have
left these out. The main difference in the numerical results is that
higher values of $\beta$ tend to shift the overall level of the value
function upwards and add more curvature to it. This in turn generally
leads to an increase in the absoute bias and variance numbers. However,
in terms of percentage bias and variance, the performance of the methods
were very similar across different values of $\beta$.

\subsubsection*{Effect of varying $K$ for projection-based value function approximation}

\begin{figure}
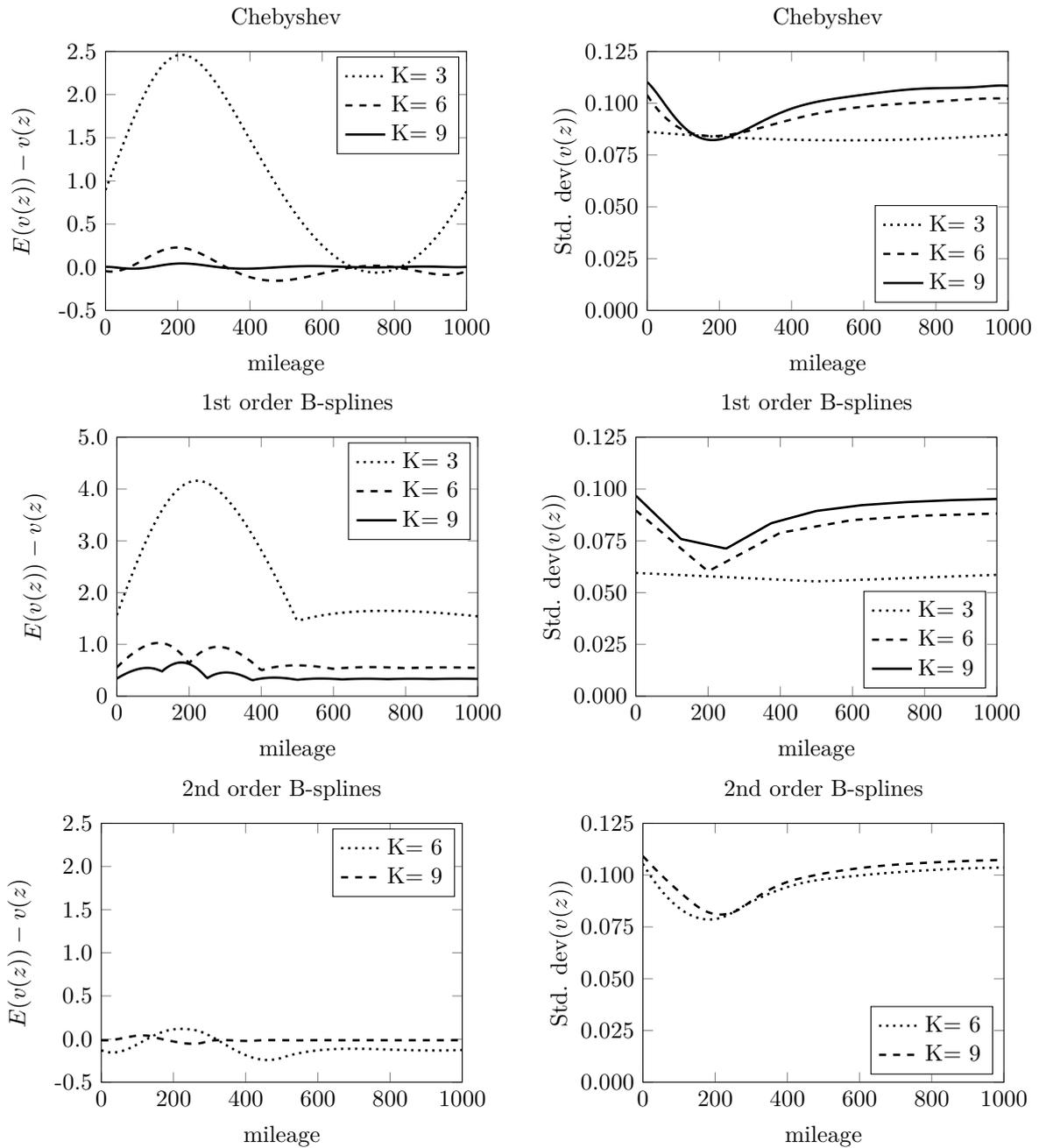

\caption{Point-wise bias and standard deviation of solutions for various choices
of $K$ using different interpolation schemes, $N=200$, $S=200$,
$\sigma_{Z}=15$.\label{fig:Point-wise-bias}}

\noindent \begin{centering}
\input{"fig/chebyshev_biasstd_200.tex"}
\par\end{centering}
\noindent \begin{centering}
\input{"fig/bspline2_biasstd_200.tex"}
\par\end{centering}
\noindent \centering{}\input{"fig/bspline3_biasstd_200.tex"}
\end{figure}

The theory for the projection-based value function approximation informs
us that the choice of the basis functions will have a first-order
effect on the bias while only a second-order effect on the variance.
In particular, we expect $Bias\left(z\right)$, as defined in the
beginning of this section, to satisfy $Bias\left(z\right)\cong\Pi_{K}\left(v\right)\left(z\right)-v\left(z\right)$,
c.f. discussion following Theorem \ref{thm: proj rate}, while $Var\left(z\right)$
should be much less affected by $K$. The actual size of the bias
obviously depends on the curvature and smoothness of $v_{0}$ and
the particular choice of basis functions. But we know that $v$ is
an analytic function and with only moderate curvature, c.f. Figure
\ref{fig: exact_solution_1d} and so expect it to be well-approximated
by a small number of polynomial basis functions. 

This is confirmed by the pointwise bias and standard deviation, $\sqrt{Var\left(z\right)}$,
reported in Figure \ref{fig:Point-wise-bias}: First, as can be seen
in the left-hand side panel of Figure \ref{fig:Point-wise-bias},
first-order B-splines lead to significantly larger point-wise bias
compared to the other two sieve bases, namely second-order B-splines
and Chebyshev polynomials. This is accordance with theory since we
know that a smooth function is better approximated by higher-order
polynomials, c.f. the error rates reported in Example 1 as a function
of $s$. At the same time, second-order B-splines and Chebyshev polynomials
exhibit very similar biases for a given choice of $K$.

The right-hand side panel of Figure \ref{fig:Point-wise-bias} shows
the point-wise standard deviation across different choices of $K$
for the three different sieve bases. Consistent with the theory, the
standard deviation of the value function approximation is not very
sensitive to the particular choice of the sieve basis and the number
of basis functions uses. That is, the sieve basis mostly affect the
bias with only minor impact on the variance.

Finally, we examine how the bias behaves as we further increase $K$.
Figure \ref{fig:sup-Bias} plots $\left\Vert Bias\right\Vert _{\infty}$
as a function of $K$. Similar to Figure \ref{fig:Point-wise-bias}
we see much more rapid convergence when smooth basis functions are
used, and with little improvement for $K$ greater than 9. This is
not surprising given the reported shape of $v$. The second-order
B-splines and Chebyshev basis functions produce very similar fits,
even if they are evaluated on different grids and the B-splines have
very different properties compared to Chebyshev polynomials. Indeed,
the curves are practically overlapping. This is in accordance with
the asymptotic theory that predicts that higher-order B-splines and
Chebyshev polynomials should lead to similar biases. Moreover, the
theory informs us that if $v$ is analytic, and this is the case in
this particular implementation, we should expect the bias to vanish
with rate $O(K^{-K})$ when using polynomial interpolation. The bias
indeed does go to zero very quickly and so the numerical results support
the theory.

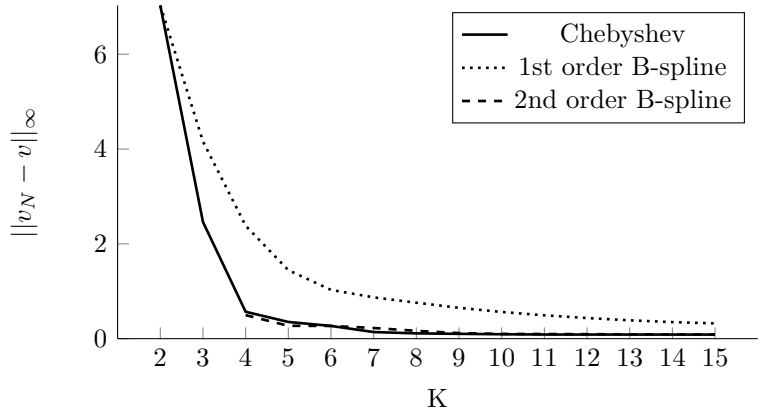
\begin{figure}
\caption{Sup-norm of bias of solutions for various choices of $K$ using various
interpolation schemes, $N=200$, $S=200$.\label{fig:sup-Bias}}

\noindent \centering{}\begin{tikzpicture}[]

    \tikzstyle{every node}=[font=\small]
\begin{axis}[height = {60.0mm}, ylabel = {$||v_{N}-v||_\infty$}, xmin = {1}, xmax = {16}, ymax = {7.028193399866775}, xlabel = {K}, {unbounded coords=jump, scaled x ticks = false, xticklabel style={rotate = 0}, xmajorgrids = false, xtick = {2.0,3.0,4.0,5.0,6.0,7.0,8.0,9.0,10.0,11.0,12.0,13.0,14.0,15.0}, xticklabels = {2,3,4,5,6,7,8,9,10,11,12,13,14,15}, xtick align = inside, axis lines* = left, scaled y ticks = false, yticklabel style={rotate = 0}, ymajorgrids = false, ytick = {0.0,2.0,4.0,6.0}, yticklabels = {0,2,4,6}, ytick align = inside,     xshift = 0.0mm,
    yshift = 0.0mm,
    axis background/.style={fill={rgb,1:red,1.00000000;green,1.00000000;blue,1.00000000}}
}, ymin = {0}, width = {100.0mm}]\addplot+ [color = {rgb,1:red,0.00000000;green,0.00000000;blue,0.00000000},
draw opacity=1.0,
line width=1,
solid,mark = none,
mark size = 2.0,
mark options = {
    color = {rgb,1:red,0.00000000;green,0.00000000;blue,0.00000000}, draw opacity = 1.0,
    fill = {rgb,1:red,0.00000000;green,0.00000000;blue,0.00000000}, fill opacity = 1.0,
    line width = 1,
    rotate = 0,
    solid
}]coordinates {
(2.0, 7.028193399866751)
(3.0, 2.461881956928099)
(4.0, 0.567558878125906)
(5.0, 0.352882106071749)
(6.0, 0.2688068197407547)
(7.0, 0.13954473308480128)
(8.0, 0.11194480349633702)
(9.0, 0.10228151324126658)
(10.0, 0.09322206222850193)
(11.0, 0.08760069278015038)
(12.0, 0.08704772197385634)
(13.0, 0.08689549971762307)
(14.0, 0.08654074673356835)
(15.0, 0.08650186339201507)
};
\addlegendentry{Chebyshev}
\addplot+ [color = {rgb,1:red,0.00000000;green,0.00000000;blue,0.00000000},
draw opacity=1.0,
line width=1,
dotted,mark = none,
mark size = 2.0,
mark options = {
    color = {rgb,1:red,0.00000000;green,0.00000000;blue,0.00000000}, draw opacity = 1.0,
    fill = {rgb,1:red,0.00000000;green,0.00000000;blue,0.00000000}, fill opacity = 1.0,
    line width = 1,
    rotate = 0,
    solid
}]coordinates {
(2.0, 7.028193399866775)
(3.0, 4.159644341957215)
(4.0, 2.382268862862936)
(5.0, 1.4447884084798222)
(6.0, 1.0291656822383892)
(7.0, 0.8713446088911839)
(8.0, 0.7586581480594561)
(9.0, 0.6510941651145218)
(10.0, 0.5633563436505626)
(11.0, 0.4927340840715479)
(12.0, 0.4352408434493137)
(13.0, 0.38744508990743265)
(14.0, 0.34937837078985395)
(15.0, 0.32232349357482554)
};
\addlegendentry{1st order B-spline}
\addplot+ [color = {rgb,1:red,0.00000000;green,0.00000000;blue,0.00000000},
draw opacity=1.0,
line width=1,
dashed,mark = none,
mark size = 2.0,
mark options = {
    color = {rgb,1:red,0.00000000;green,0.00000000;blue,0.00000000}, draw opacity = 1.0,
    fill = {rgb,1:red,0.00000000;green,0.00000000;blue,0.00000000}, fill opacity = 1.0,
    line width = 1,
    rotate = 0,
    solid
}]coordinates {
(4.0, 0.49653939640422246)
(5.0, 0.2715050342965163)
(6.0, 0.26871184823151717)
(7.0, 0.2256258082327038)
(8.0, 0.1653127873050981)
(9.0, 0.11802822877789032)
(10.0, 0.1008402786874866)
(11.0, 0.09794024801473636)
(12.0, 0.09497083000946764)
(13.0, 0.09197189334189011)
(14.0, 0.08960400861697)
(15.0, 0.08816108755673163)
};
\addlegendentry{2nd order B-spline}
\end{axis}

\end{tikzpicture}
\end{figure}

\subsubsection*{Simulation errors, rates of convergence and asymptotic normality}

We now compare the errors due to simulations and the rates with which
these vanish for the two solution methods. For both methods, theory
tells us that $N$ should have a first-order effect on the variance
of the approximate value function which is supposed to vanish at rate
$1/N$, c.f. Theorems \ref{thm: proj rate} and \ref{Thm: Rust approx rate}.
Our asymptotic theory is, on the other hand, silent about the size
of simulation bias and the rate with which it should vanish with.
However, we can think of both the sieve-based and self-approximating
method as a nonlinear GMM-estimator where the simulated Bellman operator
defines the sample moments. Importing results for GMM estimators,
see, e.g., \citet{Newey&Smith2004}, we should expect the simulation
bias to be of order $1/N$.

In Figure \ref{fig:RANDOMIZED} we investigate this prediction by
plotting $\left\Vert Bias\right\Vert $$_{\infty}$ and $\left\Vert \sqrt{Var}\right\Vert $$_{\infty}$
for the sieve-based method (left panels) and for the self-approximating
method (right panels) for two different choices of $\sigma_{\varepsilon}$
and for across different values of $N$. To examine the rate with
which the simulation bias and variance vanish we estimate the following
an exponential regressions by NLS $||\sqrt{Var}||_{\infty}=\exp(\alpha_{SD}+\rho_{SD}\ln(N))$
and $||Bias||_{\infty}=\exp(\alpha_{Bias}+\rho_{Bias}\ln(N))$ where
$\rho_{SD}$ and $\rho_{Bias}$ measures the rate that $||\sqrt{Var}||_{\infty}$
and $||Bias||_{\infty}$ vanishes with respectively. The resulting
regression fit estimates are reported in both Figure \ref{fig:RANDOMIZED}
as well as in Table \ref{tab:ConvRates_by_K}. In Table \ref{tab:ConvRates_by_K}
we present bias and standard deviation for $N=500$ as well as their
rates of convergence both methods; with various values of $K$ for
the sieve approximation method.

\begin{figure}
\caption{Convergence results\label{fig:RANDOMIZED}}

\noindent \begin{centering}
\includegraphics[width=0.49\textwidth]{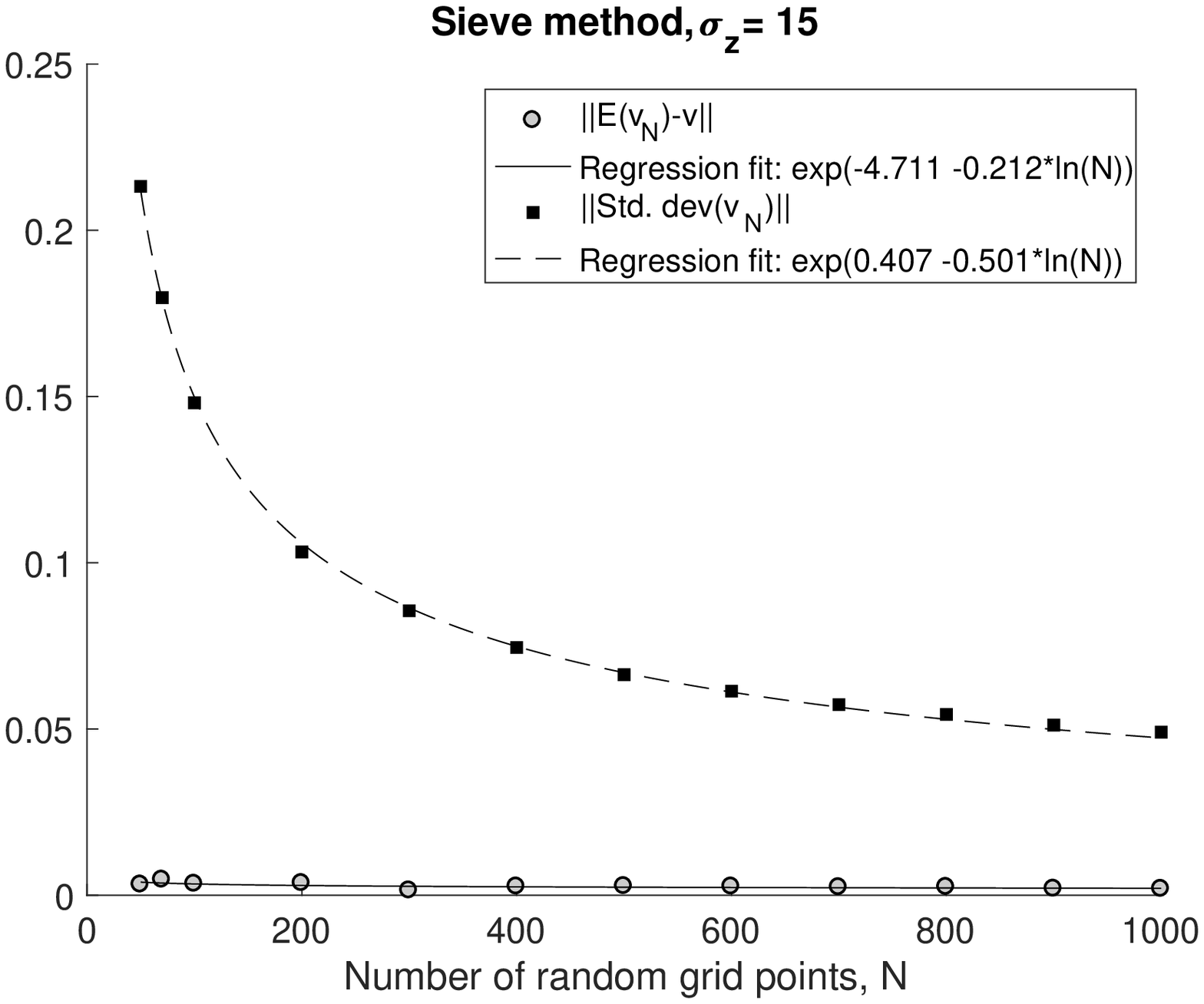}\includegraphics[width=0.49\textwidth]{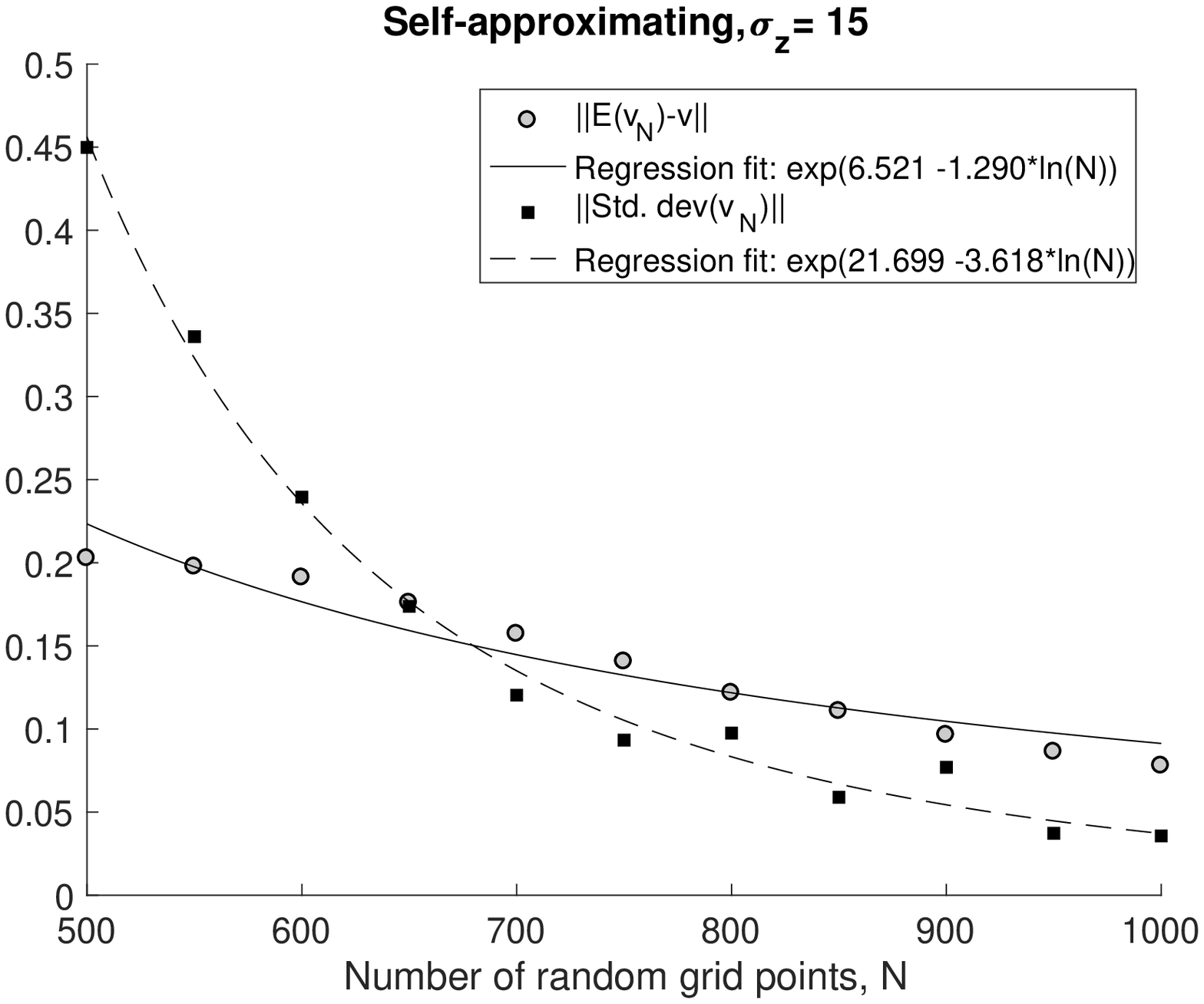}
\par\end{centering}
\noindent \begin{centering}
\includegraphics[width=0.49\textwidth]{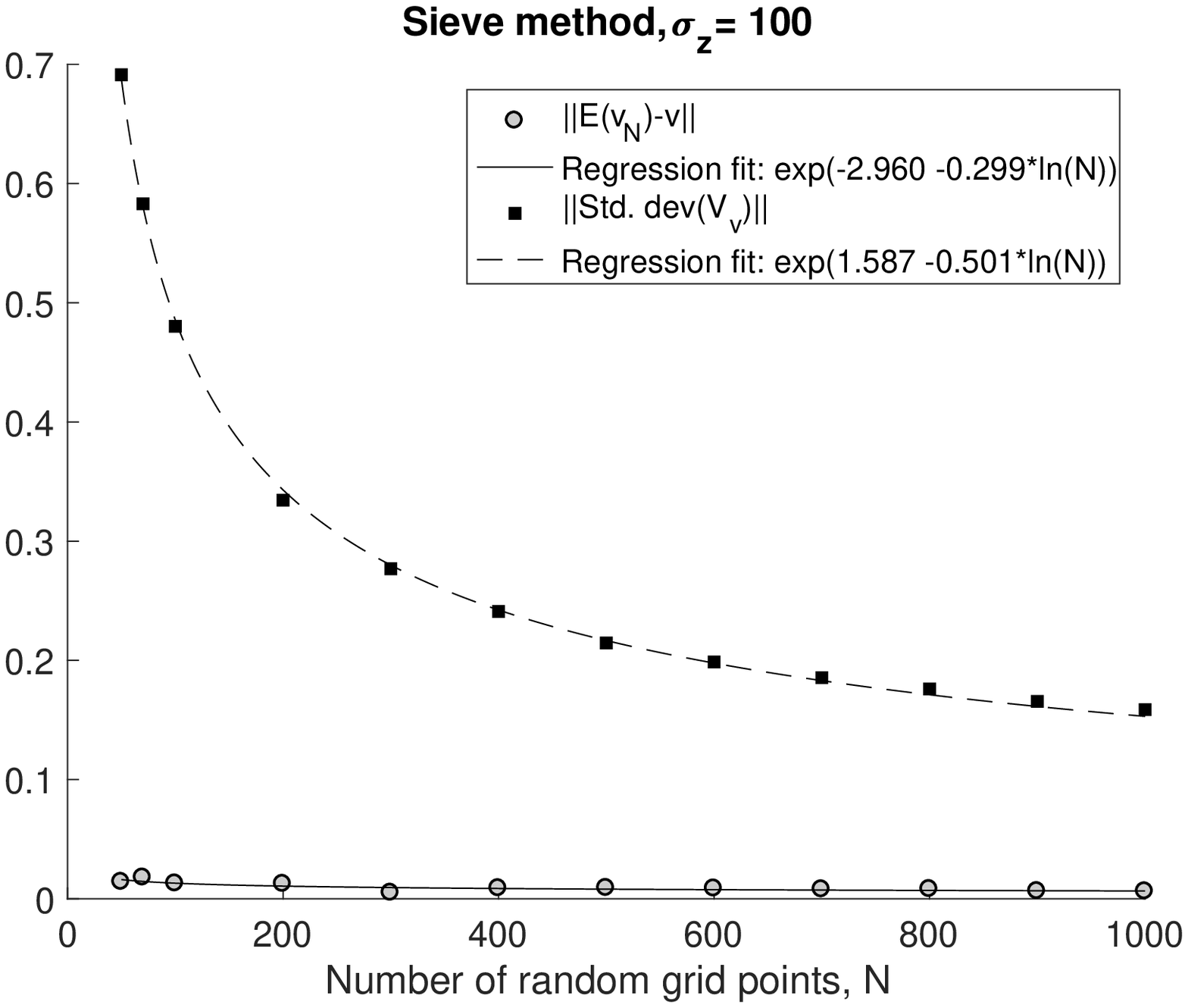}\includegraphics[width=0.49\textwidth]{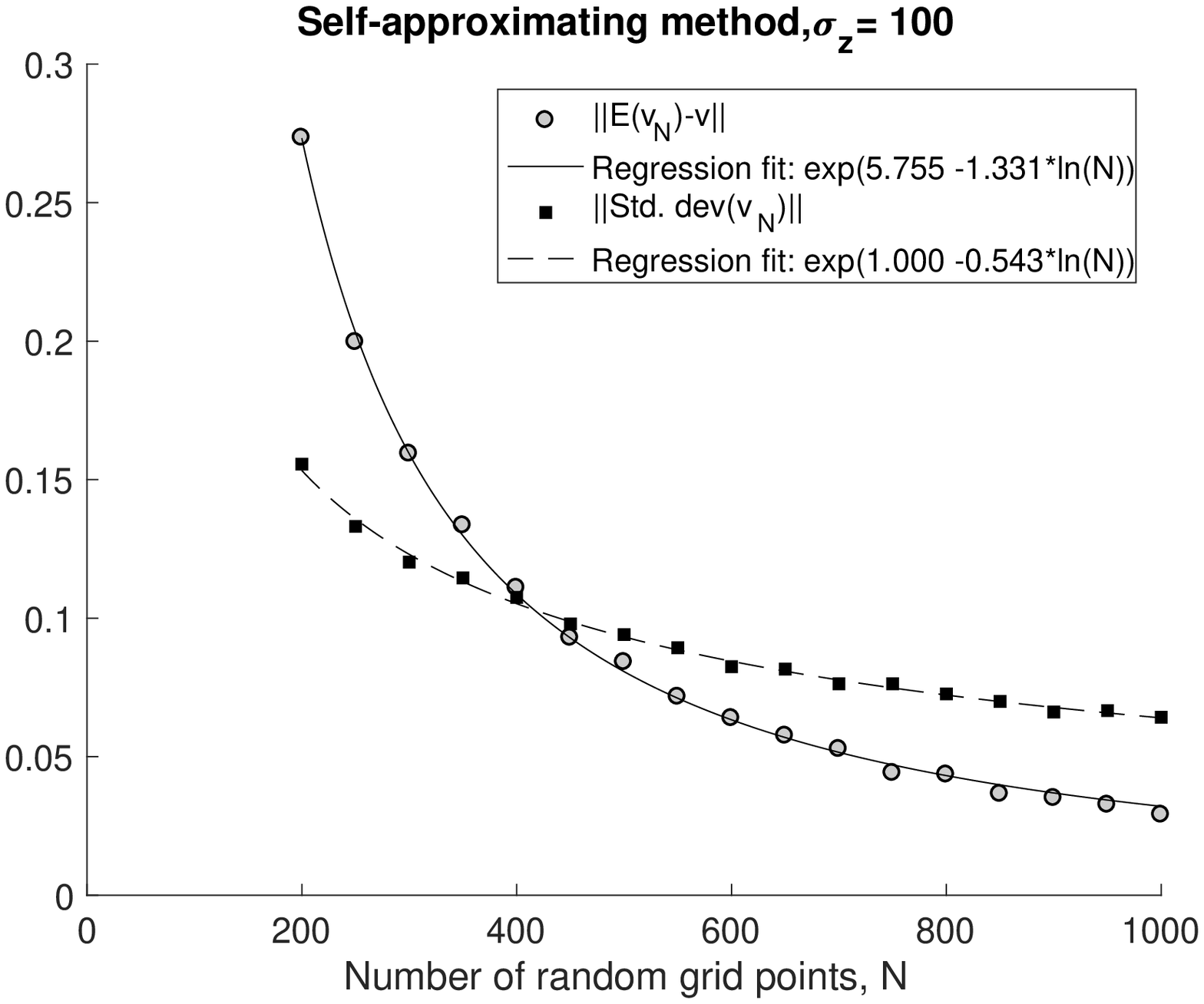}
\par\end{centering}
{\footnotesize{}Notes: Discount factor is $\text{\ensuremath{\beta}}=0.95$,
utility function parameters are $\theta_{c}=2$, $RC=10$, and transition
parameters are $a=2$, $b=5$ and $\pi=0.000000001$. Uniform bias
and variance were estimated using $500$ evaluation points and $S=2000$
implementations.}{\footnotesize\par}
\end{figure}

According to the theory, the variance should vanish with rate $1/N$
for both methods and we therefore expect $\rho_{SD}=-0.5$ so that
$||\sqrt{Var}||_{\infty}$ vanish with $1/\sqrt{N}$. For the projection
based method, we see that the rate with which the standard deviation
shrinks to zero is indeed close to $-0.5$ for all values of $K>1$
and irrespectively of the value of $\sigma_{\varepsilon}$. For the
self-approximating method we estimate the rate to $\rho_{SD}=-0.541$
when $\sigma_{\varepsilon}=100$, which is in line with the theory.
However, $||\sqrt{Var}||_{\infty}$ is found to vanish with rate $1/N^{3.6}$
for $\sigma_{\varepsilon}=15$. This seems to indicate that the asymptotic
theory developed in Theorems \ref{Thm: Rust approx rate} and \ref{Thm: Rust normal}
do not provide a very accurate approximation of the performance of
the self-approximating method for small and moderate choices of $N$
when the support of $Z_{t}|\left(Z_{t-1}=z,d_{t}=1\right)$ is small
($\sigma_{Z}=15$). We conjecture that the discrepancy between theoretical
predictions and numerical results for the self-approximating method
is due to the aforementioned issues with the marginal importance sampler
discussed in Section \ref{subsec:Implementation-of-Bellman}: Many
of the draws are not used in the computation of the simulated Bellman
operator because they fall outside the support of $Z_{t}|Z_{t-1}=z$
for a given choice of $z$. Thus, the effective number of draws is
smaller than $N$ and changes as $z$ varies.

\begin{table}
\caption{Bias, variance, and rates of convergence for various values of K\label{tab:ConvRates_by_K}}

\begin{tabular}{lrrrrrrr} 
\hline\hline 
& \multicolumn{5}{c}{Sieve Method} & \multicolumn{1}{c}{Self-approx.} \\ 
\# of basis functions, K &1& 2& 5& 10& 15& \multicolumn{1}{c}{method} \\ 
\hline 
& \multicolumn{6}{c}{$\sigma_{z}$=15} \\ 
$||Bias||_\infty$ for $N=500$ & 12.743 & 7.029 & 0.348 & 0.016 & 0.003 & 0.203\\ 
$||\sqrt{Var}||_\infty$ for $N=500$  & 0.000 & 0.020 & 0.063 & 0.066 & 0.066 & 0.450\\ 
Convergence rate for $||Bias||_\infty$ & 0.000 & 0.000 & 0.002 & -0.012 & -0.212 & -1.290\\ 
Convergence rate for $||\sqrt{Var}||_\infty$ & 0.169 & -0.500 & -0.501 & -0.501 & -0.501 & -3.618\\
\\
& \multicolumn{6}{c}{$\sigma_{z}$=100} \\ 
$||Bias||_\infty$ for $N=500$ &22.446 & 10.937 & 0.112 & 0.009 & 0.009 & 0.084 \\ 
$||\sqrt{Var}||_\infty$ for $N=500$ & 0.000 & 0.128 & 0.218 & 0.215 & 0.215 & 0.094\\ 
Convergence rate for $||Bias||_\infty$ & 0.000 & 0.000 & -0.027 & -0.299 & -0.299 & -1.331\\ 
Convergence rate for $||\sqrt{Var}||_\infty$ & 0.169 & -0.500 & -0.501 & -0.501 & -0.501 & -0.543\\ 
\hline\hline 
\end{tabular}
\end{table}

For the projection based method, the main source of bias is due to
the sieve projection. From Figure \ref{fig:Point-wise-bias}, we see
that, with $N=200$ and $K=9$, the sieve-based methods using second-order
B-splines or Chebyshev polynomials have virtually no bias, and both
Figure \ref{fig:RANDOMIZED} as well as in Table \ref{tab:ConvRates_by_K}
also confirms that we practically eliminate by approximate the value
function using Chebychev polynomials with $K=20$. However, there
still remains a small bias that vanishes as $N$ grows. For small
$K$, we see that the bias is roughly independent of $N$. As $K$
increases so does the dependence on $N$. However, even for $K=20$
where we estimate $\rho_{Bias}$ to be $0.21$ and $0.30$ for $\sigma_{Z}=15$
and $\sigma_{Z}=100$ respectively, the rate of convergence is far
from $1/N$. This is probably due to the presence of higher-order
bias components that our asymptotic theory does not account for.

For the self-approximating method, there is no sieve projection bias
but a larger simulation induced bias that decreases with $N$. We
obtain rate estimates of $1/N^{1.7}$ and $1/N^{1.4}$ for the bias
when $\sigma_{Z}=15$ and $\sigma_{Z}=100$ respectively; these are
slightly faster than expected but not too far from the theoretical
predictions of $1/N$. For the self-approximating method, bias constitute
more than half of RMSE when $N<600$ for $\text{\ensuremath{\sigma_{Z}=15}}$
(or $N<400$ for $\text{\ensuremath{\sigma_{Z}=100}}$), but since
$||Bias||_{\infty}$ decays faster than $||\sqrt{Var}||_{\infty}$
, the simulation bias eventually becomes second order for large $N$.

Comparing $||MSE||_{\infty}$ for $N=500$ we find that the sieve-based
method clearly dominates the self-approximating method when $\sigma_{Z}=15$,
whereas the self-approximating method performs best when $\sigma_{Z}=100$.
This is not entirely surprising since a large value of $\sigma_{\varepsilon}$
implies a large conditional support of $Z_{t}$ in which case the
draws of the marginal sampler are more likely to fall within the support,
c.f. the discussion in Subsection \ref{subsec:Implementation-of-Bellman}.
Thus, the over-all error of the self-approximating method will tend
to be smaller when $\sigma_{Z}$ is large. The opposite is the case
for the sieve based method which becomes more precise for smaller
value of $\sigma_{Z}$ since the variance of the simulated Bellman
operator used for this method gets smaller as $\sigma_{Z}$ gets smaller.
This shows that there is considerably scope for improving the performance
of the sieve-based method by more careful design of the sampling method.

\begin{figure}[h]
\caption{Asymptotic Normality\label{fig:NORMALITY_RAND}}

\noindent \begin{centering}
\includegraphics[width=0.49\textwidth]{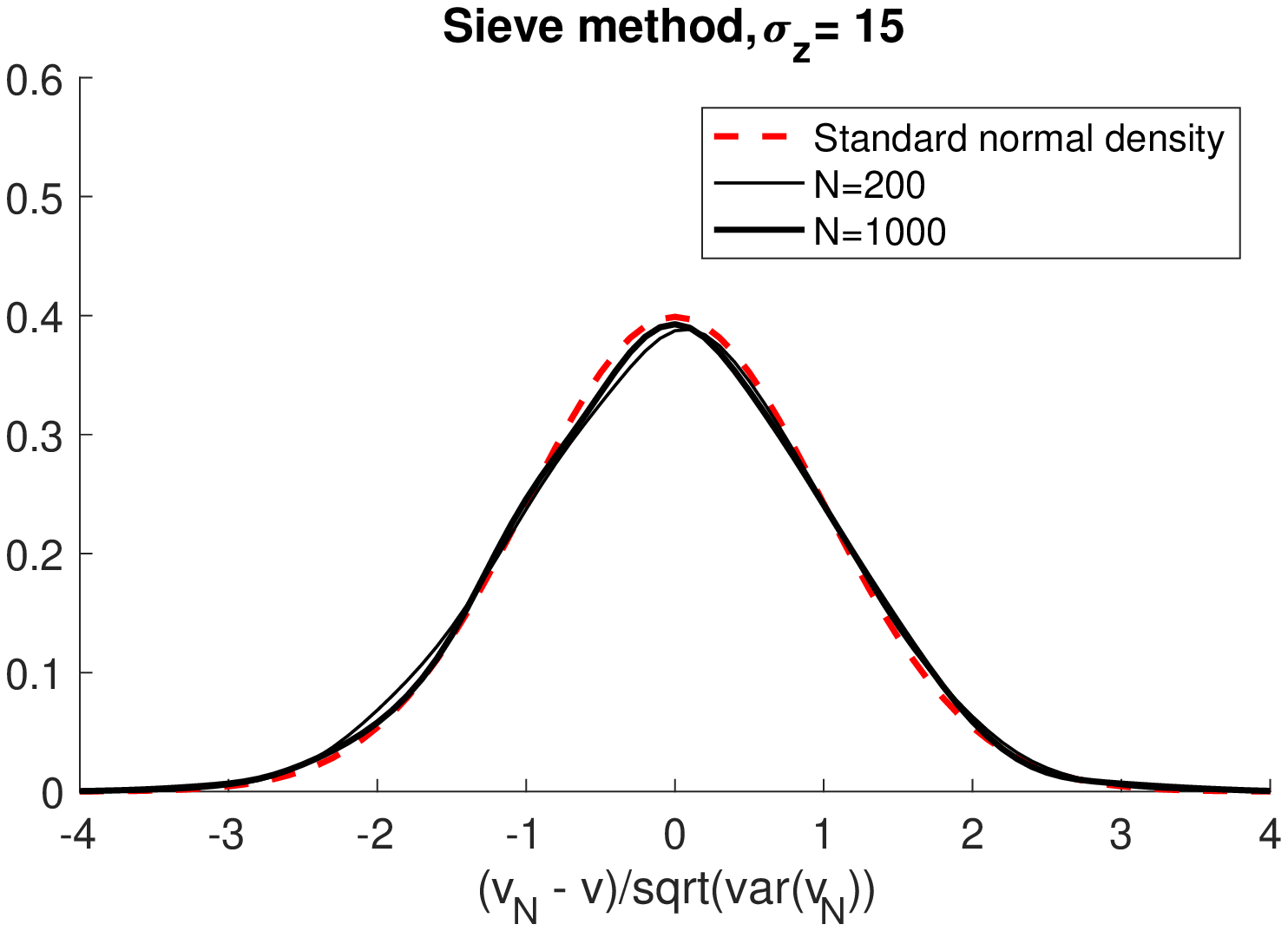}\includegraphics[width=0.49\textwidth]{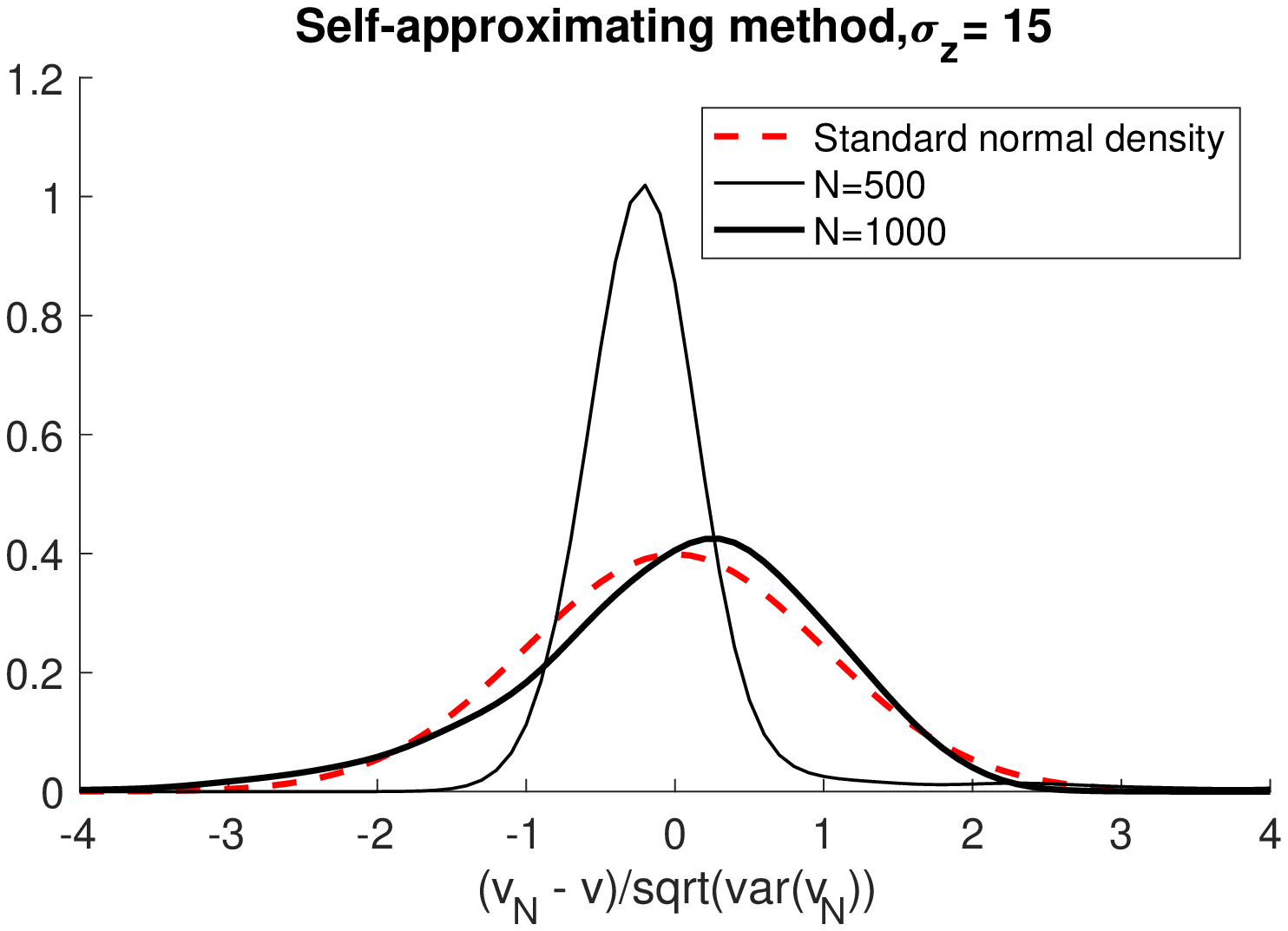}
\par\end{centering}
\noindent \begin{centering}
\includegraphics[width=0.49\textwidth]{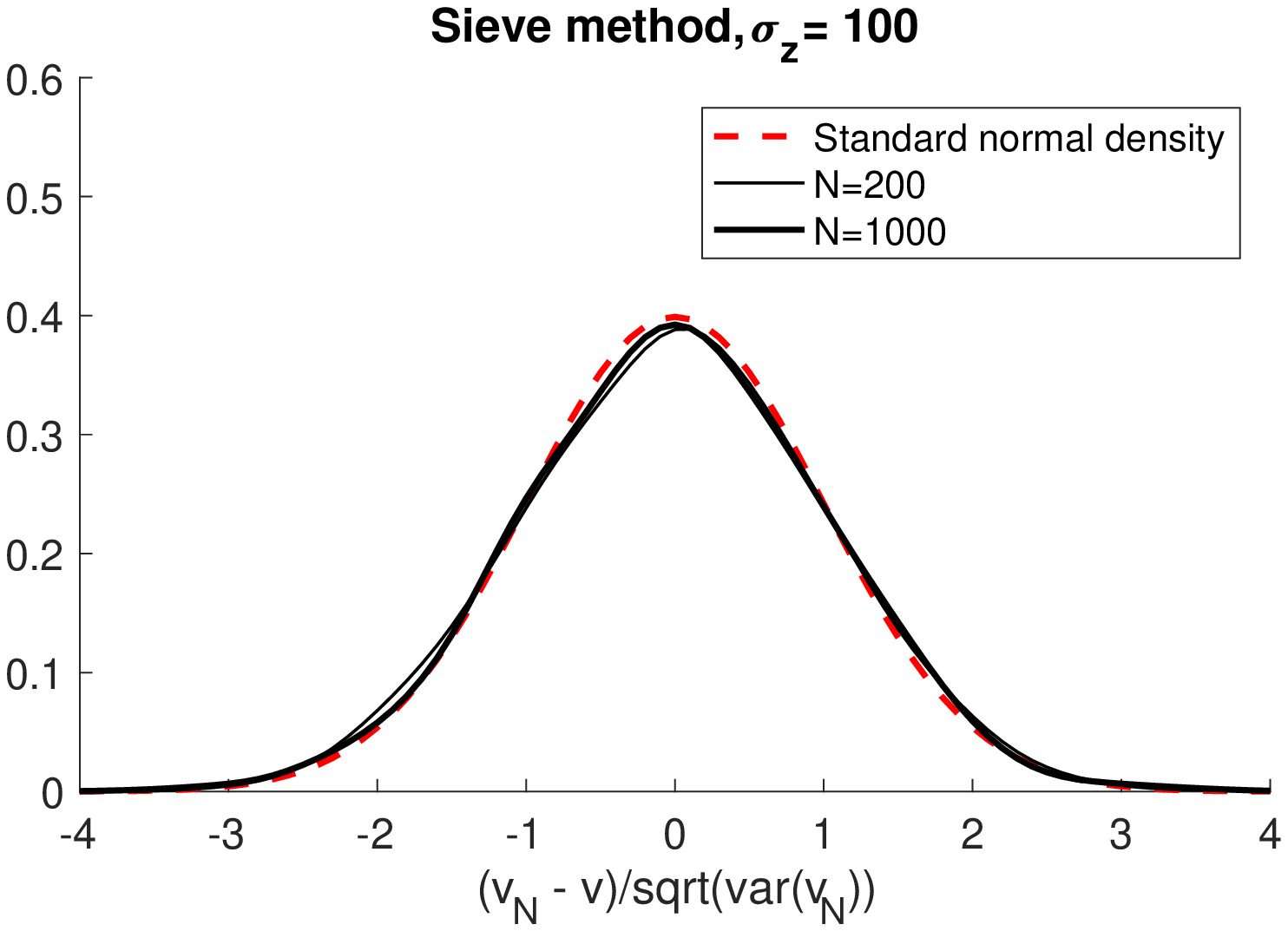}\includegraphics[width=0.49\textwidth]{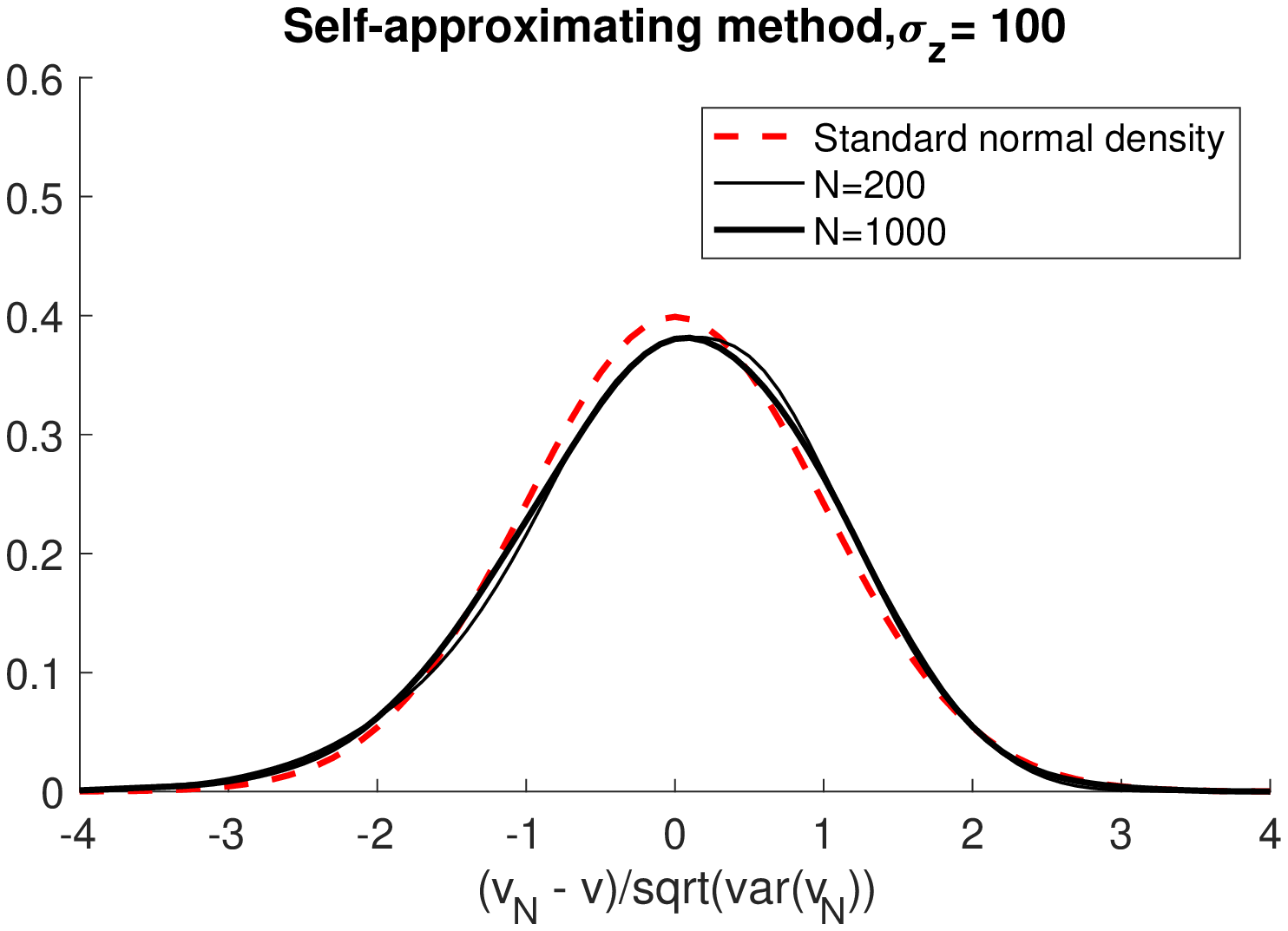}
\par\end{centering}
{\footnotesize{}Notes: Each panel shows kernel density estimates of
$(\widehat{v}_{N}(z)-E[\widehat{v}_{N}(z)])/\sqrt{var(\hat{v}_{N}(z))}$
for $z=500$ based on $S=2000$ solutions for each sample size $N$.
Discount factor is $\text{\ensuremath{\beta}}=0.95$, utility function
parameters are $\theta_{c}=2$, $RC=10$, and transition parameters
are $\sigma_{Z}=100$, $a=2$, $b=5$ and $\pi=0$.000000001.}{\footnotesize\par}
\end{figure}

Theorems \ref{Thm: Rust normal} and \ref{thm: proj rate} state that
when $N$ is large, the approximate value functions should be normally
distributed. We here investigate whether this asymptotic approximation
is useful in practice by looking at the pointwise distribution of
the approximate solutions obtained through both methods. In Figure
\ref{fig:NORMALITY_RAND}, we plot the distribution of $\left(\tilde{v}\left(z\right)-E\left[\tilde{v}\left(z\right)\right]\right)/\sqrt{Var\left(\tilde{v}\left(z\right)\right)}$
for $z=500$, where $\tilde{v}$ denotes a given approximation method,
together with the standard normal distribution. It is here important
to note we do not center the estimate around $v\left(z\right)$ but
instead around $E[\tilde{v}(z)]$ ; this is due to the sizable bias
of the self-approximating method. For the sieve-based method, we see
that its normalized distribution is quite close to the standard normal
irrespectively of the value of $\sigma_{Z}$. In contrast, the normal
distribution is a poor approximation for the self-approximating method
when $\sigma_{Z}=15$ when $N=500$; we expect this is due to the
fact that the effective number of draws is quite small and so the
asymptotic approximation is poor in this case. As expected the approximation
gets better as $N$ and/or $\sigma_{Z}$ increases.

\subsection{\label{subsec:smoothng}Effect of smoothing}

\begin{figure}
\caption{Sup-norm MSE of solutions to Bellman operators with simulated taste
shocks and state transitions for varying levels of smoothing, for
$N=100$.}
\label{fig:tastesmoothing}\begin{tikzpicture}[]
\begin{axis}[height = {50.8mm}, legend pos = {north west}, ylabel = {$||MSE||_\infty$}, xlabel={Smoothing, $\lambda$}, xmin = {-0.03}, xmax = {1.03}, ymax = {5.065717913977283}, , unbounded coords=jump,scaled x ticks = false,xlabel style = {font = {\fontsize{11 pt}{14.3 pt}\selectfont}, color = {rgb,1:red,0.00000000;green,0.00000000;blue,0.00000000}, draw opacity = 1.0, rotate = 0.0},xmajorgrids = true,xtick = {0.0,0.25,0.5,0.75,1.0},xticklabels = {$0.00$,$0.25$,$0.50$,$0.75$,$1.00$},xtick align = inside,xticklabel style = {font = {\fontsize{8 pt}{10.4 pt}\selectfont}, color = {rgb,1:red,0.00000000;green,0.00000000;blue,0.00000000}, draw opacity = 1.0, rotate = 0.0},x grid style = {color = {rgb,1:red,0.00000000;green,0.00000000;blue,0.00000000},
draw opacity = 0.1,
line width = 0.5,
solid},axis lines* = left,x axis line style = {color = {rgb,1:red,0.00000000;green,0.00000000;blue,0.00000000},
draw opacity = 1.0,
line width = 1,
solid},scaled y ticks = false,ylabel style = {font = {\fontsize{11 pt}{14.3 pt}\selectfont}, color = {rgb,1:red,0.00000000;green,0.00000000;blue,0.00000000}, draw opacity = 1.0, rotate = 0.0},ymajorgrids = true,ytick = {4.800000000000001,4.8500000000000005,4.9,4.95,5.0,5.050000000000001},yticklabels = {$4.80$,$4.85$,$4.90$,$4.95$,$5.00$,$5.05$},ytick align = inside,yticklabel style = {font = {\fontsize{8 pt}{10.4 pt}\selectfont}, color = {rgb,1:red,0.00000000;green,0.00000000;blue,0.00000000}, draw opacity = 1.0, rotate = 0.0},y grid style = {color = {rgb,1:red,0.00000000;green,0.00000000;blue,0.00000000},
draw opacity = 0.1,
line width = 0.5,
solid},axis lines* = left,y axis line style = {color = {rgb,1:red,0.00000000;green,0.00000000;blue,0.00000000},
draw opacity = 1.0,
line width = 1,
solid},    xshift = 0.0mm,
    yshift = 0mm,
    axis background/.style={fill={rgb,1:red,1.00000000;green,1.00000000;blue,1.00000000}}
,legend style = {color = {rgb,1:red,0.00000000;green,0.00000000;blue,0.00000000},
draw opacity = 1.0,
line width = 1,
solid,fill = {rgb,1:red,1.00000000;green,1.00000000;blue,1.00000000},font = {\fontsize{8 pt}{10.4 pt}\selectfont}},colorbar style={title=}, ymin = {4.788232625982812}, width = {76.2mm}]\addplot+[draw=none, color = {rgb,1:red,0.00000000;green,0.00000000;blue,0.00000000},
draw opacity = 1.0,
line width = 0,
solid,mark = *,
mark size = 2.0,
mark options = {
    color = {rgb,1:red,0.00000000;green,0.00000000;blue,0.00000000}, draw opacity = 1.0,
    fill = {rgb,1:red,0.00000000;green,0.00000000;blue,0.00000000}, fill opacity = 1.0,
    line width = 1,
    rotate = 0,
    solid
}] coordinates {
(0.0, 4.796085983190203)
(0.05, 4.796492241401615)
(0.1, 4.797448072907024)
(0.15, 4.798767044433014)
(0.2, 4.800431024840986)
(0.25, 4.8024450177151214)
(0.3, 4.804803860401239)
(0.35, 4.807497304003023)
(0.4, 4.810524076556262)
(0.5, 4.817673368727176)
(0.6, 4.826694435794043)
(0.7, 4.838424850270665)
(0.8, 4.854038184982356)
(0.9, 4.874989386443136)
(1.0, 5.057864556769893)
};
\addlegendentry{K=4}
\end{axis}
\begin{axis}[height = {50.8mm}, legend pos = {north west}, , ylabel = {$||MSE||_\infty$}, xlabel={Smoothing, $\lambda$},  xmin = {-0.03}, xmax = {1.03}, ymax = {2.8802126747033445}, unbounded coords=jump,scaled x ticks = false,xlabel style = {font = {\fontsize{11 pt}{14.3 pt}\selectfont}, color = {rgb,1:red,0.00000000;green,0.00000000;blue,0.00000000}, draw opacity = 1.0, rotate = 0.0},xmajorgrids = true,xtick = {0.0,0.25,0.5,0.75,1.0},xticklabels = {$0.00$,$0.25$,$0.50$,$0.75$,$1.00$},xtick align = inside,xticklabel style = {font = {\fontsize{8 pt}{10.4 pt}\selectfont}, color = {rgb,1:red,0.00000000;green,0.00000000;blue,0.00000000}, draw opacity = 1.0, rotate = 0.0},x grid style = {color = {rgb,1:red,0.00000000;green,0.00000000;blue,0.00000000},
draw opacity = 0.1,
line width = 0.5,
solid},axis lines* = left,x axis line style = {color = {rgb,1:red,0.00000000;green,0.00000000;blue,0.00000000},
draw opacity = 1.0,
line width = 1,
solid},scaled y ticks = false,ylabel style = {font = {\fontsize{11 pt}{14.3 pt}\selectfont}, color = {rgb,1:red,0.00000000;green,0.00000000;blue,0.00000000}, draw opacity = 1.0, rotate = 0.0},ymajorgrids = true,ytick = {2.6500000000000004,2.7,2.75,2.8000000000000003,2.85},yticklabels = {$2.65$,$2.70$,$2.75$,$2.80$,$2.85$},ytick align = inside,yticklabel style = {font = {\fontsize{8 pt}{10.4 pt}\selectfont}, color = {rgb,1:red,0.00000000;green,0.00000000;blue,0.00000000}, draw opacity = 1.0, rotate = 0.0},y grid style = {color = {rgb,1:red,0.00000000;green,0.00000000;blue,0.00000000},
draw opacity = 0.1,
line width = 0.5,
solid},axis lines* = left,y axis line style = {color = {rgb,1:red,0.00000000;green,0.00000000;blue,0.00000000},
draw opacity = 1.0,
line width = 1,
solid},    xshift = 76.2mm,
    yshift = 0.0mm,
    axis background/.style={fill={rgb,1:red,1.00000000;green,1.00000000;blue,1.00000000}}
,legend style = {color = {rgb,1:red,0.00000000;green,0.00000000;blue,0.00000000},
draw opacity = 1.0,
line width = 1,
solid,fill = {rgb,1:red,1.00000000;green,1.00000000;blue,1.00000000},font = {\fontsize{8 pt}{10.4 pt}\selectfont}},colorbar style={title=}, ymin = {2.637865408625582}, width = {76.2mm}]\addplot+[draw=none, color = {rgb,1:red,0.00000000;green,0.00000000;blue,0.00000000},
draw opacity = 1.0,
line width = 0,
solid,mark = *,
mark size = 2.0,
mark options = {
    color = {rgb,1:red,0.00000000;green,0.00000000;blue,0.00000000}, draw opacity = 1.0,
    fill = {rgb,1:red,0.00000000;green,0.00000000;blue,0.00000000}, fill opacity = 1.0,
    line width = 1,
    rotate = 0,
    solid
}] coordinates {
(0.0, 2.677144934699648)
(0.05, 2.676825145511437)
(0.1, 2.6756882707789655)
(0.15, 2.6738174269468815)
(0.2, 2.6712729219778604)
(0.25, 2.6681016162234856)
(0.3, 2.664383152967328)
(0.35, 2.660264082897299)
(0.4, 2.656047322315762)
(0.5, 2.6483382027836524)
(0.6, 2.6447242935145754)
(0.7, 2.650853170736444)
(0.8, 2.673730139721892)
(0.9, 2.7241009122398214)
(1.0, 2.873353789814351)
};
\addlegendentry{K=15}
\end{axis}

\end{tikzpicture}
\end{figure}
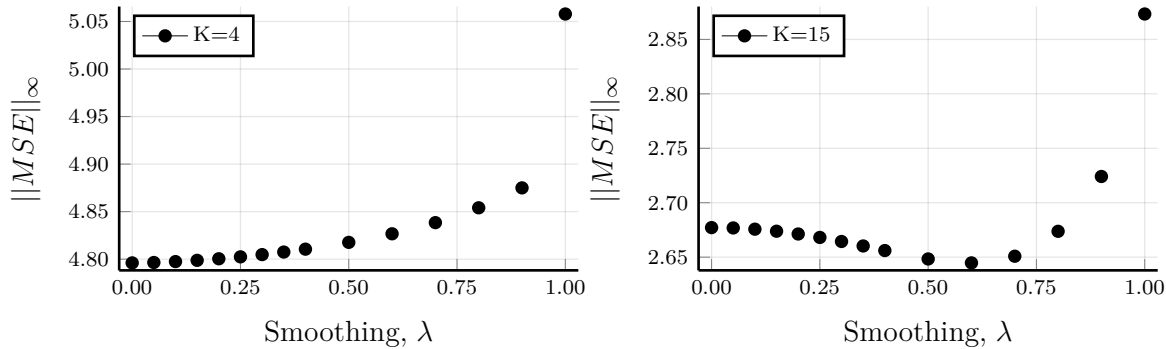

The results reported above did not involve any smoothing bias. We
now numerically study the effect of smoothing. This is done by, instead
of integrating out the i.i.d. extreme value taste shocks $\varepsilon_{t}$
analytically as we have done so far, using Monte Carlo simulations
to evaluate this part of the integral and then introducing the smoothing
device to ensure that the simulated Bellman operator remains smooth.
While this may appear somewhat artificial, the merit of doing this
exercise is that we can use the same ``exact'' solution as benchmark
as used above.

In Figure \ref{fig:tastesmoothing} we plot the sup-norm of the mean
squared error, $||\text{MSE}||_{\infty}$ as a function of $\lambda$
(the smoothing scale parameter) for the sieve-based method using $K=4$
or $8$ Chebyshev polynomials (similar results were obtained for the
self-approximating method and so are left out). For $K=4$, the MSE
increases monotonically as a function of $\lambda$ while for $K=15$
the bias due to smoothing is non-monotonic in $\lambda$. In both
cases, at $\lambda=0$, any remaining biases are due to either sieve-approximation
or simulations. Importantly, the bias due to smoothing is negiglible
(relative to the other biases) for small and moderate values of $\lambda$
while the variance is largely unaffected. We have no theory or heuristics
for choosing an optimal $\lambda$ to optimally balance bias and variance
due to smoothing but the current numerical results indicate that choosing
a quite small $\lambda$ value works well.

\subsection{Performance in the bivariate case}

We now examine how the solution methods perform in the bivariate case
($d_{Z}=\dim\left(Z_{t}\right)=2$) in order to see if there is any
curse of dimensionality built into the two methods. We do this for
two different models as described below.

\subsubsection*{An Additive DDP}

We here follow the approach of \citet{Arcidiaconoetal2013} and \citet{rust1997comparison}
and build a $d_{Z}$-dimensional model by adding up $d_{Z}$ independent
versions of the univariate model considered so far. That is, we choose
the utilities and state dynamics as $\bar{u}(z,\varepsilon,d)=\sum_{i=1}^{d_{Z}}u(z_{i},\varepsilon_{i},d_{i})$
and $\bar{F}_{Z}(z^{\prime}|z,d)=\prod_{i=1}^{d_{Z}}F_{Z}(z_{i}^{\prime}|z_{i},d_{i})$,
where $z=\left(z_{1},...,z_{d_{z}}\right)$ and $d=\left(d_{1},...,d_{d_{Z}}\right)$,
with $F_{Z}(z_{i}^{\prime}|z_{i},d_{i})$ and $u(z_{i},\varepsilon_{i},d_{i})$
denoting the state transition and per-period utility in the univariate
case as described in Section \ref{subsec: Rust model}. Note here
that $\left(Z_{t,i},\varepsilon_{t,i}\right)$ and $\left(Z_{t,j},\varepsilon_{t,j}\right)$
are fully independent of each other, $i\neq j$ and the number of
alternatives are $2^{d_{Z}}$, where $d_{Z}$ Thus, the model considers
the joint replacement decision of $d_{Z}$ assets whose stochastic
usages $\left(Z_{t,1},...,Z_{t,d_{Z}}\right)$ are mutually independent.
Conveniently, the integrated value function of this multidimensional
problem, $\bar{v}(z_{1},...,z_{d_{Z}})$, is simply the sum of the
solutions to each of the underlying univariate models, $\bar{v}(z_{1},...,z_{d_{Z}})=\sum_{i=1}^{d_{Z}}v(z_{i})$,
where $v(z_{i})$ is the solution to the univariate model in Section
\ref{subsec: Rust model}. This is a rather simplistic multivariate
model but it comes with the major advantage that we can obtain a very
accurate approximation of the exact solution by simply adding up the
``exact'' solution found for the univariate case. With a more complicated
multidimensional structure, the computational cost of finding the
``exact'' solution is much higher. However, when implementing our
solution methods, we forgo forgo the knowledge of the additive structure
of the solution and so treat the above model as a ``proper'' multivariate
problem.

\subsubsection*{Simulation error}

Given the issues with the self-approximating method for small values
of $\sigma_{Z}=15$, we here focus exclusively on the case $\sigma_{Z}=100$.
To get a sense of the pointwise performance of the self-approximating
method, we plot the pointwise bias and standard deviation for this
method with $N=3000$ in Figure \ref{fig:RANDOMIZED-2D-ERROR} together
with the pointwise errors of the corresponding replacement (choice)
probabilities. The overall shape and level of the integrated value
function is quite well captured, and the same is true for the policy.
However, the approximation errors tend to get larger out in the tails
of the distribution and some of this comes from the fact that the
issues with the marginal sampler used for the self-approximating method
are amplified here. The problems are especially present in the off-grid
evaluations, where we often have very few draws in a given region
where we want to evaluate the value function or policies.

\begin{figure}
\caption{Approximation errors of self-approximating method, bivariate DDP \label{fig:RANDOMIZED-2D-ERROR}}

\includegraphics[width=0.49\textwidth]{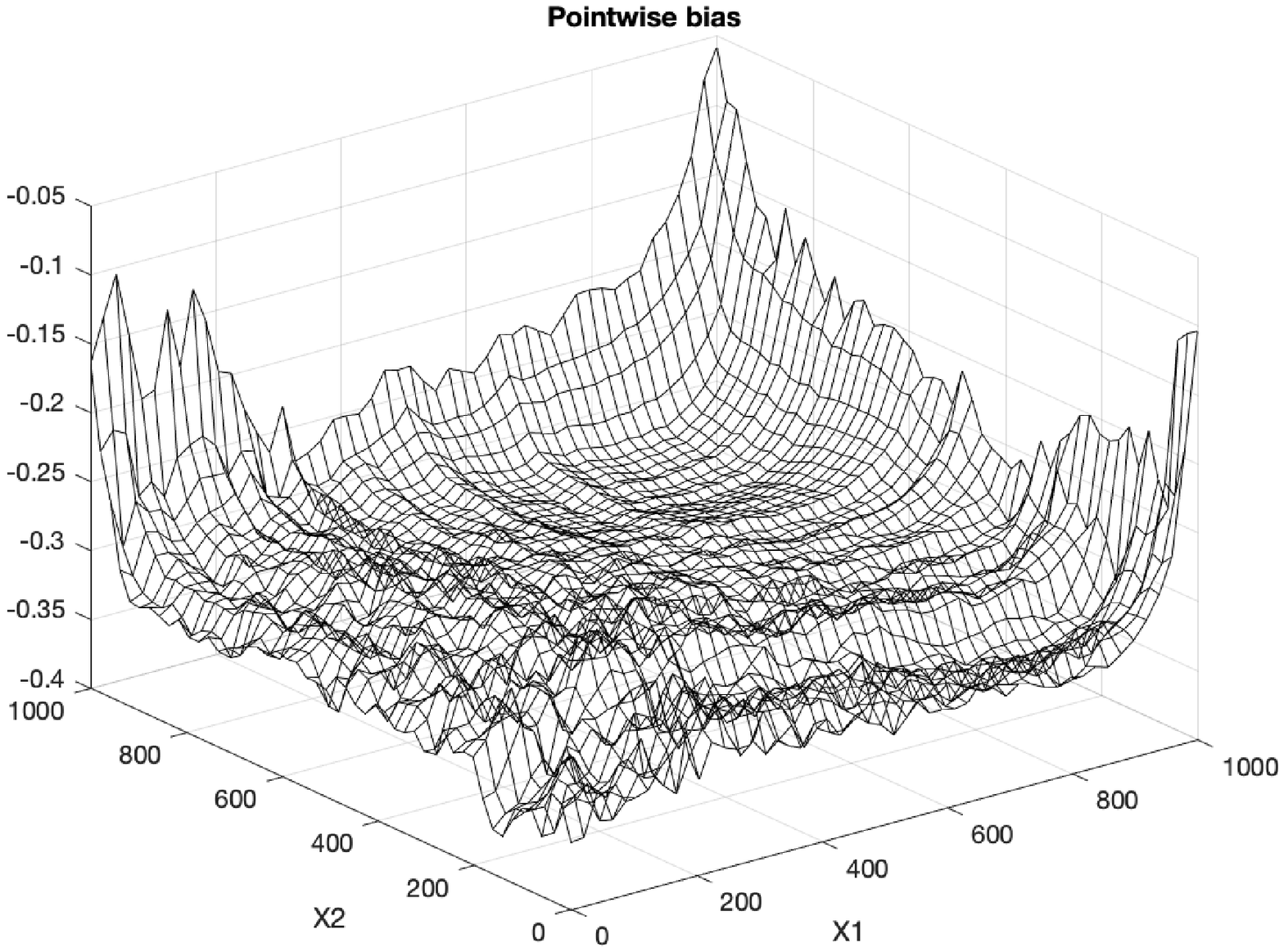}\includegraphics[width=0.49\textwidth]{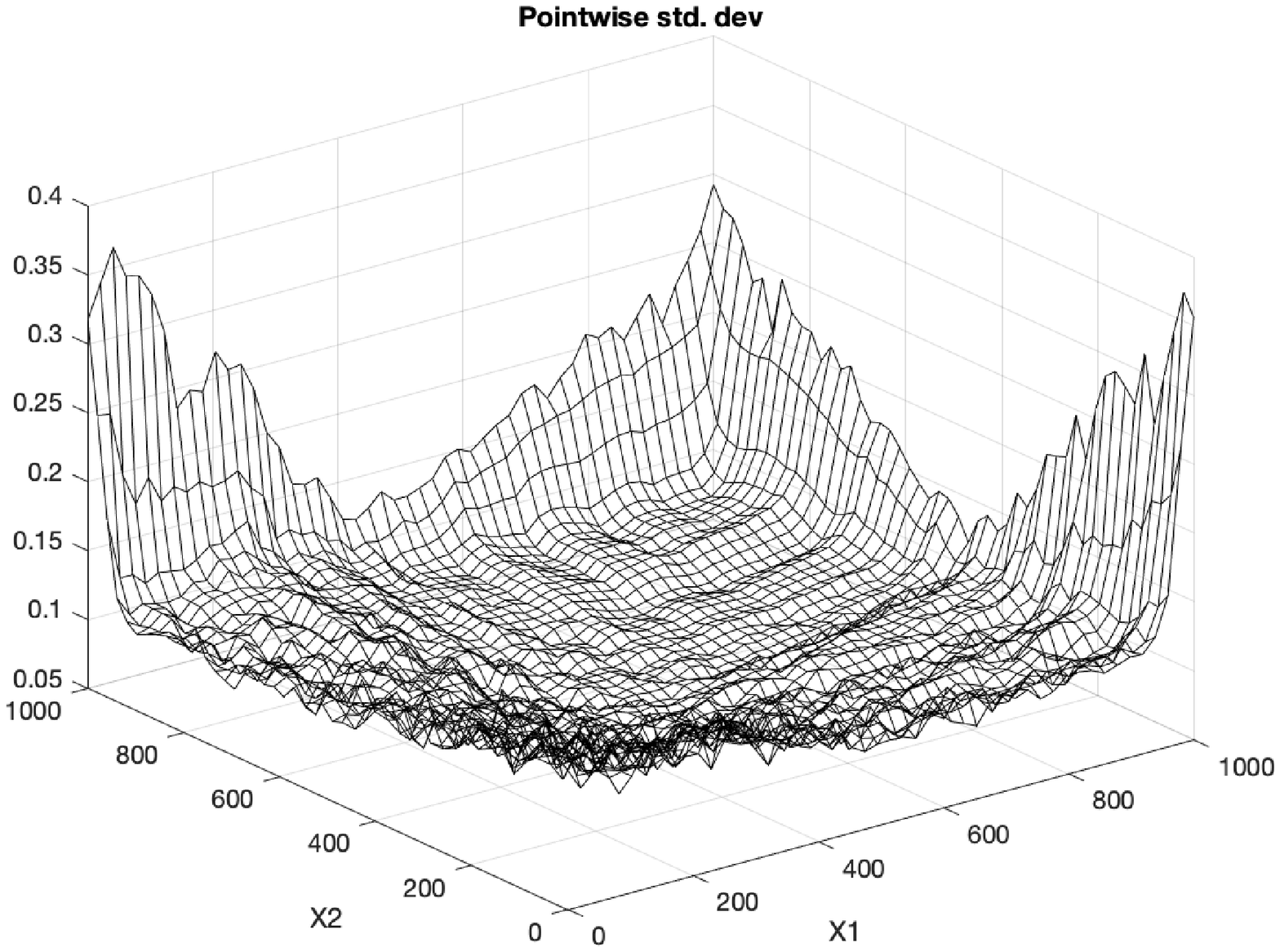}

{\footnotesize{}Notes: Discount factor is $\text{\ensuremath{\beta}}=0.95$,
utility function parameters are $\theta_{c}=2$, $RC=1$0, $\lambda=1$
and transition parameters are $\sigma_{Z}=100$, $a=2$, $b=5$ and
$\pi=0$.000000001. The ``exact'' solution was computed by averaging
over $S=100$ solutions, each found using the smoothed random Bellman
operator with $N=3000$ pseudo random draws. Each fixed point was
found using a contraction tolerance of machine precision. }{\footnotesize\par}
\end{figure}

Next, we examine $||Bias||_{\infty}$ and $||\sqrt{Var}||_{\infty}$
for both methods as we increase $N$. These are plotted in Figure
\ref{fig:RANDOMIZED-2D-CONVERGENCE} where it should be noted that
the reported range of $N$ reported on the $x$-axis of the two figures
differ substantially. This is due to the fact that the self\textendash approximating
method became numerically unstable for $N$ smaller than 1,400 while
no such issues were present for the sieve-based method. As in the
univariate case, both bias and variance of the two methods vanish
as $N$ increases. However, comparing Figures \ref{fig:RANDOMIZED-2D-CONVERGENCE}
and \ref{fig:RANDOMIZED}, while the errors of the sieve-based method
in the bivariate case is of a similar magnitude as in the univariate
case, the errors of the self-approximating method are much larger
in the bivariate case. This seems to indicate a certain type of curse-of-dimensionality
in this particular application of the self-approximating method. This
is caused by the issues with the marginal importance sampler employed
for this method.

\begin{figure}
\caption{Simulation errors for bivariate additive DDP \label{fig:RANDOMIZED-2D-CONVERGENCE}}

\noindent \begin{centering}
\includegraphics[width=0.5\textwidth]{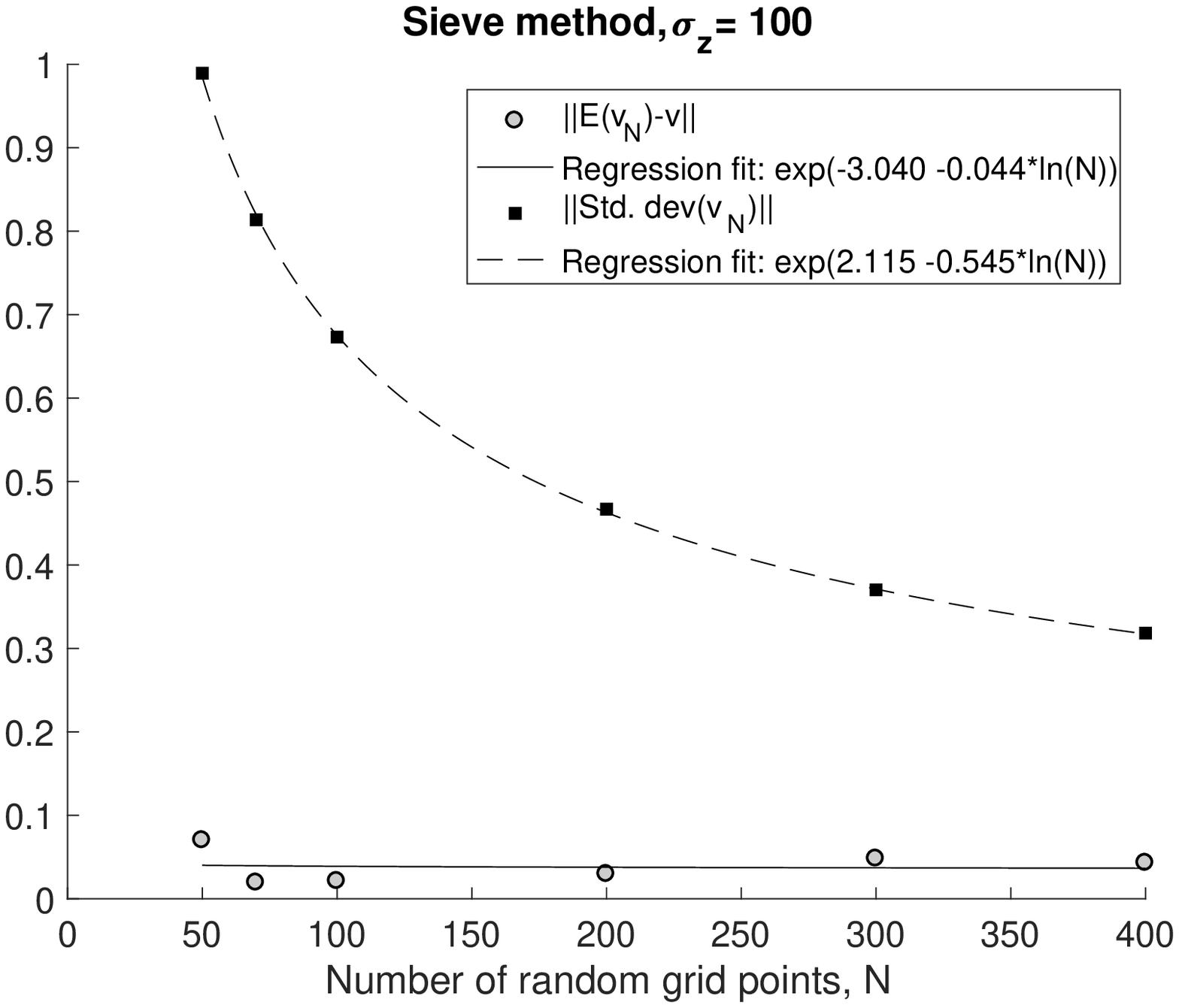}\includegraphics[width=0.5\textwidth]{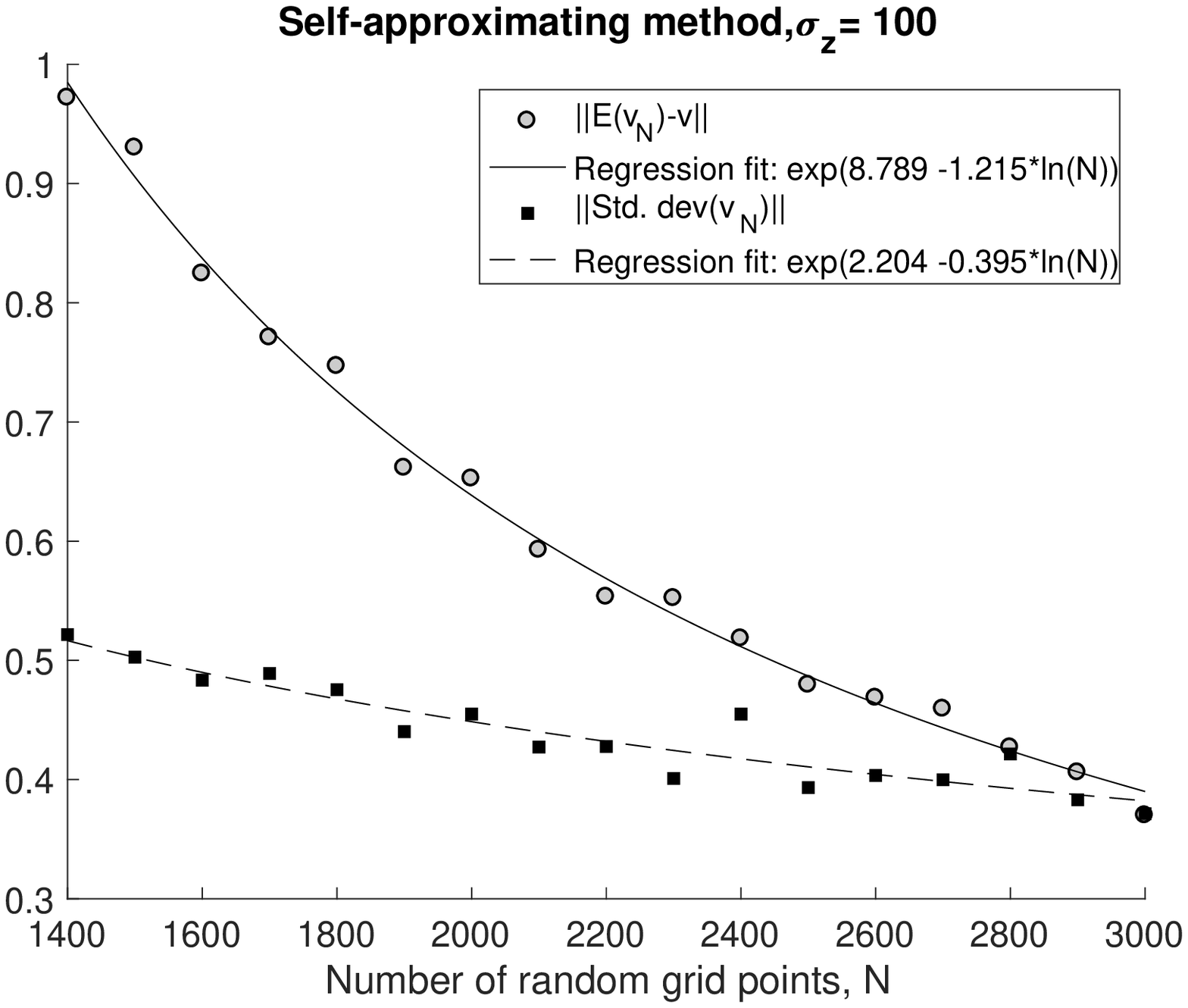}
\par\end{centering}
{\footnotesize{}Notes: Discount factor is $\text{\ensuremath{\beta}}=0.95$,
utility function parameters are $\theta_{c}=2$, $RC=10$, $\lambda=1$
and and parameters for transition density $f(z'|z,d)$ are $\sigma_{Z}=100$,
$a=2$, $b=5$ and $\pi=0$.000000001. Point-wise bias and variance
was estimated in $500$ evaluation points based on $S=200$ replications.
We report the sup norm of the bias and the standard deviation for
each N for both methods and NLS regression fits of $||\sqrt{{Var\}}}||_{\infty}=\exp(\alpha_{SD}+\rho_{SD}\ln(N)$
and $||Bias||_{\infty}=\exp(\alpha_{Bias}+\rho_{Bias}\ln(N)$.}{\footnotesize\par}
\end{figure}

\subsubsection*{Sieve approximation error}

In the implementation of the sieve-based method we use as sieve basis
the tensor product of univariate Chebyshev polynomials or B-splines.
That is, given, say, $J$ univariate basis functions, say, $p_{1},...,p_{J}$,
we construct our bivariate basis functions as $B_{i,j}\left(z_{1},z_{2}\right)=p_{i}\left(z_{1}\right)p_{j}\left(z_{2}\right)$
for $i,j=1,...,J$ yielding a total of $K=J^{2}$ bivariate basis
functions. In particularly, we do not exploit the additive structure
of the problem since we are interested in the practical contents of
Theorems \ref{thm: proj rate} where no particular sparsity/special
structure of the model is assumed to be known. 

However, in practive, Chebyshev polynomials very easily pick up the
additive structure and effectively sets the coefficients of the cross-product
terms to zero. This is illustrated in Table \ref{tab:Coefficients-on-tensor}
in Appendix \ref{sec:Sieve-spaces}, where we report the coefficients
for one particular projection-based bivariate value function estimate
using a tensor product of $J=5$ Chebyshev polynomials. However, this
is due to the particular properties of the Chebyshev polynomials and
is not enforced by us in the implementation. For example, if we instead
use B-splines, the ``estimated'' coefficients of the cross-product
terms were significantly different from zero, c.f. Table \ref{tab:Coefficients-on-tensor-bspline}
in Appendix \ref{sec:Sieve-spaces}.

In the left-hand side panel (a) of Figure \ref{fig:norm_high_d},
we report the uniform bias of the projection-based method with $N$
chosen very large for the additive bivariate model. We find that the
bias vanishes as $K$ increases as in the one-dimensional model. However,
convergence is now slower in $K$ relative to the univariate case
and we require $K=50$ to obtain a sieve approximation bias of $10^{-2}$
while $K=7$ sufficed in the univariate case. This is consistent with
theoretical error rates for polynomial interpolation where the rate
slows down as the dimension of the problem increases, c.f. Section
\ref{subsec:Numerical-implementation}. 

\begin{figure}
\caption{Bias of value function in bivariate DDP for varying $K$ .\label{fig:norm_high_d}}

\noindent \centering{}\begin{tikzpicture}[]
\begin{axis}[height = {60.0mm}, ylabel = {$||E(V)-V_0||$}, title = {(a) Basic Model}, xmin = {1.0}, xmax = {225.0}, ymax = {25.96605423454403}, ymode = {log}, xlabel = {$K = J^{d_z}$}, {unbounded coords=jump, scaled x ticks = false, xticklabel style={rotate = 0}, xmajorgrids = false,
    xtick = {50.0,100.0,150.0,200.0}, xticklabels = {50,100,150,200},
    xtick align = inside, scaled y ticks = false, yticklabel style={rotate = 0}, log basis y=10, ymajorgrids = false, ytick = {0.0001,0.001,0.01,0.1,1.0,10.0}, yticklabels = {$10^{-4}$,$10^{-3}$,$10^{-2}$,$10^{-1}$,$10^{0}$,$10^{1}$}, ytick align = inside,     xshift = 0.0mm,
    yshift = 0.0mm,
    axis background/.style={fill={rgb,1:red,1.00000000;green,1.00000000;blue,1.00000000}}
}, ymin = {6.220869254036643e-5}, width = {70.0mm}]\addplot+ [color = black,
draw opacity=1.0,
line width=1,
solid,mark = none,
mark size = 2.0,
mark options = {
    color = black, draw opacity = 1.0,
    line width = 1,
    rotate = 0,
    solid
}]coordinates {
(1.0, 25.96605423454403)
(4.0, 15.473414532393248)
(9.0, 0.8262958253584181)
(16.0, 0.33221784041361246)
(25.0, 0.1198437624284665)
(36.0, 0.02422665488304787)
(49.0, 0.019612055432070008)
(64.0, 0.004234806634592303)
(81.0, 0.002999501508448077)
(100.0, 0.0007378123762293853)
(121.0, 0.0003459005954056238)
(144.0, 0.00023370275029321874)
(169.0, 0.00013909289346258902)
(196.0, 0.00010918456974451374)
(225.0, 6.220869254036643e-5)
};
\end{axis}

\end{tikzpicture}\begin{tikzpicture}[]
\begin{axis}[height = {60.0mm}, ylabel = {$||E(V)-V_0||$}, title = {(b) Model with interactions}, xmin = {1.0},
    xmax = {225.0},
    ymax = {60.96605423454403}, ymode = {log}, xlabel = {$K = J^{d_z}$}, {unbounded coords=jump, scaled x ticks = false, xticklabel style={rotate = 0}, xmajorgrids = false,
    xtick = {50.0,100.0,150.0,200.0}, xticklabels = {50,100,150,200},
    xtick align = inside, scaled y ticks = false, yticklabel style={rotate = 0}, log basis y=10, ymajorgrids = false,
    ytick = {0.0001,0.001,0.01,0.1,1.0,10.0}, yticklabels = {$10^{-4}$,$10^{-3}$,$10^{-2}$,$10^{-1}$,$10^{0}$,$10^{1}$},
    ytick align = inside,     xshift = 0.0mm,
    yshift = 0.0mm,
    axis background/.style={fill={rgb,1:red,1.00000000;green,1.00000000;blue,1.00000000}}
}, ymin = {6.220869254036643e-5}, width = {70.0mm}]\addplot+ [color = black,
draw opacity=1.0,
line width=1,
solid,mark = none,
mark size = 2.0,
mark options = {
    color = black, draw opacity = 1.0,
    line width = 1,
    rotate = 0,
    solid
}]coordinates {
(1,59.637078792423615)
(4,42.61202717231602)
(9,24.968764395765728)
(16,7.84800786862214)
(25,1.8504965142465153)
(36,0.47665259524937653)
(49,1.0264764436439364)
(64,0.2733687279546828)
(81,0.10208350909686459)
(100,0.12628316244317972)
(144,0.024055191906768414)
(169,0.013355665029038732)
(196,0.02022608108572399)
(225,0.009718294217506696)
};
\end{axis}

\end{tikzpicture}
\end{figure}
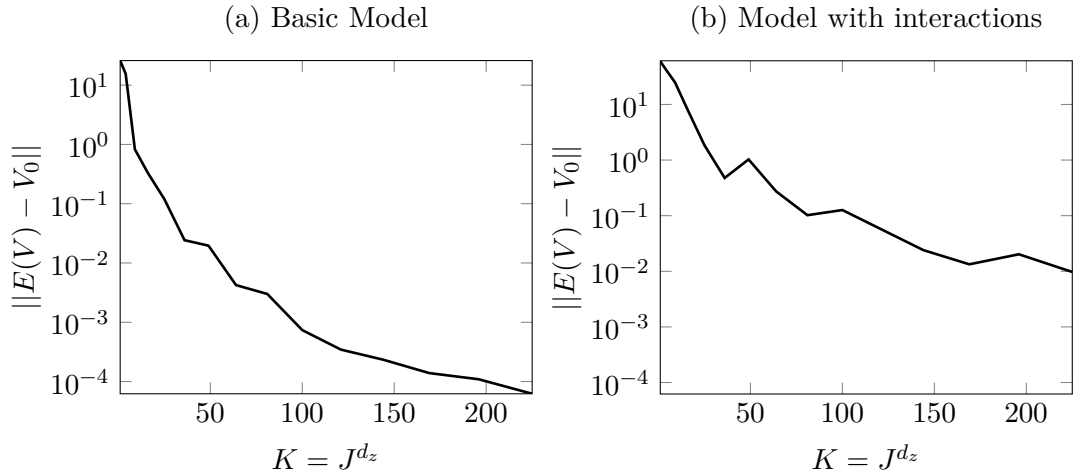

\subsubsection*{A non-additive DDP}

One concern with the numerical results reported for the bivariate
additive model in the previous section is that they may understate
the curse of dimensionality of the sieve method: The true value function
is by construction additive in the two state variables and so interaction
terms do not appear. This in turn implies that the computational complexity
of solving this particular model is relatively low; in particular,
the solution should be well-approximated by lower-dimensional sieves
($K$ small).

To investigate how the sieve method performs when applied to a more
complex, non-additive model, we here consider a slightly more complicated
bivariate model where we include a multiplicative interaction term
so that maintenance and replacement costs of the two busses interact,
$\bar{u}(z,d)=\sum_{i=1}^{2}u(z_{i},d_{i})-u(z_{1},d_{1})u(z_{2},d_{2})/20$.
Such a structure could, for example, reflect that capacity constraints
make it more costly to simultaneously replace the engines of both
busses. The resulting value funtion will have a more complicated multidimensional
structure and so we expect that the computational cost of our sieve
method should be higher in this scenario.

The sieve approximation bias of our solution method for this model
is reported in the right-hand side panel (b) in Figure \ref{fig:norm_high_d}.
Compared to panel (a) \textendash{} the additive case \textendash{}
we see that more sieve terms are required in order to reach a specific
absolute error level in the model with interactions. In Table \ref{interactiontable}
in Appendix \ref{sec:Sieve-spaces} the coefficients on the first
ten basis functions in each dimension and their interactions are reported.
Compared to the Chebyshev-based solution earlier we see quite significant
coefficients on the coefficients for the cross-terms. However, the
coefficients on the basis functions tend to zero quite quickly as
K increase. The sup-norm of the difference in the value function at
$40.000$ evaluation grids is on the order of $10^{-5}$ when comparing
the solutions with $K=50^{2}=2500$ and $K=30^{2}=900$ basis functions,
and individual coefficients fall below $10^{-6}$ for univariate basis
functions and cross products beyond the 22nd univariate basis functions,
and below $10^{-8}$ around the 30th basis functions.However, it is
important to stress that a large $K$ here only comes with a computational
cost while a large $K$ has little effect on the variance of the sieve
method. All together, we find that the sieve-based solution method
works well also in higher dimensions, in particular when the model
has a particular structure that can be utilized in the solution method.

\section{Conclusion}

We have proposed two novel methods for numerical computation of either
the so-called integrated or expected value functions in a general
class of dynamic discrete choice models.. Both methods rely on a smoothed
simulated version of the Bellman operators definining the integrated
and expected values functions. The smoothing facilitates both the
practical implementation and the theoretical analysis of the approximate
value functions. Under regularity conditions, we develop an asymptotic
theory for the two methods as the number of simulations used to compute
the simulated Bellman operators diverge. A set of numerical experiments
show that our first method, the so-called self-approximating method
can be somewhat unstable while the second one, which relies on sieve
methods, apppears much more numerically robust. The next step is to
develop methods for choosing the number of simulations, sieve basis
functions and smoothing parameter $\lambda$ in a given setting so
that the resulting approximate solution is of a good quality. Another
area of research is to investigate how the proposed solution methods
can be used for the estimation of dynamic discrete choice models.

\newpage{}

\bibliographystyle{chicago}
\bibliography{DDC-sieve}

\newpage{}

\appendix
\counterwithin{thm}{section}

\appendix

\section{\label{app:Auxiliary-Results}Auxiliary Results}

We derive two general results for approximate solutions to functional
fixed-points. Let $\left(\mathcal{X},\left\Vert \cdot\right\Vert \right)$
be a normed vector space and $\Psi:\mathcal{X}\rightarrow\mathcal{\mathcal{X}}$
be some contraction mapping w.r.t. $\left\Vert \cdot\right\Vert $
so that there exists a unique solution $x_{0}\in\mathcal{\mathcal{X}}$
to $x=\Psi\left(x\right)$. Let $\Psi_{N}$ be an approximation to
$\Psi$ and let $\Pi_{K}$ be a projection operator, $K,N\geq1$.
\begin{thm}
\label{thm: General rate}Suppose (i) $\left\Vert \Psi_{N}\left(x_{0}\right)-\Psi\left(x_{0}\right)\right\Vert =O_{p}(\rho_{\Psi,N})$
for some $\rho_{\Psi,N}\rightarrow0$ and (ii) for some $\beta<1$,
$\left\Vert \Psi_{N}\left(x\right)-\Psi_{N}\left(y\right)\right\Vert \leq\beta\Vert x-y\Vert$
for all $N$ large enough and all $x,y$. Then there exists a unique
solution $x_{N}\in\mathcal{\mathcal{X}}$ to $x=\Psi_{N}\left(x\right)$
with probability approaching one (w.p.a.1) satisfying $\Vert x_{N}-x_{0}\Vert=O_{P}\left(\rho_{\Psi,N}\right).$

Suppose furthermore (iii) $\Pi_{K}:\mathcal{X}\rightarrow\mathcal{\mathcal{X}}$
satisfies $\Vert\Pi_{K}\left(x_{N}\right)-x_{N}\Vert=O_{p}(\rho_{\Pi,K})$
for some $\rho_{\Pi,K}\rightarrow0$. Then there exists a unique solution
$\hat{x}_{N}\in\mathcal{\mathcal{X}}$ to $x=\left(\Pi_{K}\Psi_{N}\right)\left(x\right)$
w.p.a.1 satisfying
\[
\left\Vert \hat{x}_{N}-x_{0}\right\Vert \leq\left\Vert \hat{x}_{N}-x_{N}\right\Vert +\left\Vert x_{N}-x_{0}\right\Vert =O_{p}(\rho_{\Pi,K})+O_{p}(\rho{}_{\Psi,N}).
\]
\end{thm}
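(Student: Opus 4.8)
The plan is to treat this as a standard two-stage perturbation argument for fixed points of contractions, using only the Banach fixed-point theorem together with two triangle inequalities; no empirical-process machinery is needed since everything is conditional on the draws.

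\textbf{First part.} By hypothesis (ii) there is an event $E_{N}$ with $P(E_{N})\to1$ on which $\Psi_{N}$ is a $\beta$-contraction on $\mathcal{X}$, so on $E_{N}$ the Banach fixed-point theorem delivers a unique $x_{N}\in\mathcal{X}$ with $x_{N}=\Psi_{N}(x_{N})$. On $E_{N}$ I would then use $x_{0}=\Psi(x_{0})$ and $x_{N}=\Psi_{N}(x_{N})$ to write
\[
\|x_{N}-x_{0}\|\le\|\Psi_{N}(x_{N})-\Psi_{N}(x_{0})\|+\|\Psi_{N}(x_{0})-\Psi(x_{0})\|\le\beta\|x_{N}-x_{0}\|+\|\Psi_{N}(x_{0})-\Psi(x_{0})\|,
\]
hence $\|x_{N}-x_{0}\|\le(1-\beta)^{-1}\|\Psi_{N}(x_{0})-\Psi(x_{0})\|=O_{p}(\rho_{\Psi,N})$ by (i). The complement of $E_{N}$ has vanishing probability and is absorbed into the $O_{p}$.

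\textbf{Second part.} Here I would first observe that $\Pi_{K}\Psi_{N}$ inherits the contraction property: since $\Pi_{K}$ is non-expansive with respect to $\|\cdot\|$ and $\Psi_{N}$ is a $\beta$-contraction, $\|\Pi_{K}\Psi_{N}(x)-\Pi_{K}\Psi_{N}(y)\|\le\|\Psi_{N}(x)-\Psi_{N}(y)\|\le\beta\|x-y\|$, so Banach again yields a unique $\hat{x}_{N}$ with $\hat{x}_{N}=\Pi_{K}\Psi_{N}(\hat{x}_{N})$ w.p.a.1. For the rate, using $\Psi_{N}(x_{N})=x_{N}$,
\[
\|\hat{x}_{N}-x_{N}\|=\|\Pi_{K}\Psi_{N}(\hat{x}_{N})-\Psi_{N}(x_{N})\|\le\|\Pi_{K}\Psi_{N}(\hat{x}_{N})-\Pi_{K}\Psi_{N}(x_{N})\|+\|\Pi_{K}(x_{N})-x_{N}\|\le\beta\|\hat{x}_{N}-x_{N}\|+O_{p}(\rho_{\Pi,K}),
\]
where the last term invokes (iii). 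Rearranging gives $\|\hat{x}_{N}-x_{N}\|=O_{p}(\rho_{\Pi,K})$, and combining with the first part through the triangle inequality yields the stated bound.

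\textbf{Main obstacle.} The only genuinely non-mechanical point is the existence of $\hat{x}_{N}$: the argument above needs $\Pi_{K}\Psi_{N}$ to be a contraction in the \emph{same} norm $\|\cdot\|$ in which the fixed-point problem is posed, whereas $\Pi_{K}$ is in general only a metric projection with respect to a possibly different (pseudo-)norm and need not be non-expansive in $\|\cdot\|$. I would deal with this exactly as in Section~\ref{subsec:Projection-based-approximation}: either include non-expansiveness of $\Pi_{K}$ (with respect to $\|\cdot\|$) among the hypotheses, or appeal to the asymptotic near-non-expansiveness bound $\|\Pi_{K}\|_{op}\le1+\sup_{\|v\|=1}\|\Pi_{K}(v)-v\|=1+o(1)$, which together with $\beta<1$ makes $\Pi_{K}\Psi_{N}$ a contraction for all $K$ large enough and restores the Banach argument. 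Everything else---the two triangle inequalities and the $O_{p}$ bookkeeping---is routine.
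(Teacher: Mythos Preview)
Your proposal is correct and follows the paper's own proof essentially line for line: the same insertion of $\Psi_N(x_0)$ and contraction/rearrangement for the first part, and the same insertion of $\Pi_K\Psi_N(x_N)=\Pi_K(x_N)$ plus contraction of $\Pi_K\Psi_N$ for the second part. If anything you are more explicit than the paper about the one delicate point---the paper simply asserts ``combining (ii) and (iii), we see that $\Pi_K\Psi_N$ is a contraction mapping w.p.a.1'' without spelling out that this relies on (near) non-expansiveness of $\Pi_K$, whereas you correctly flag this as the only nontrivial ingredient and resolve it exactly as the paper does in its surrounding discussion.
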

\begin{proof}
We first observe that due to (ii), there exists a unique solution
$x_{N}=\Psi_{N}\left(x_{N}\right)$ for all $N$ large enough which
satisfies
\begin{align*}
\Vert x_{N}-x_{0}\Vert & =\Vert\Psi_{N}(x_{N})-\Psi(x_{0})\Vert\leq\Vert\Psi_{N}(x_{N})-\Psi_{N}(x_{0})\Vert+\Vert\Psi_{N}(x_{0})-\Psi(x_{0})\Vert\\
 & \leq\beta\Vert x_{N}-x_{0}\Vert+\Vert\Psi_{N}(x_{0})-\Psi(x_{0})\Vert,
\end{align*}
and so $\Vert x_{N}-x_{0}\Vert\leq\Vert\Psi_{N}(x_{0})-\Psi(x_{0})\Vert/\left(1-\beta\right)=O_{P}\left(\rho_{\Psi,N}\right).$
Next, combining (ii) and (iii), we see that $\Pi_{K}\Psi_{N}$ is
a contraction mapping w.p.a.1. with Lipschitz coeffient $\beta$,
and so $\hat{x}_{N}$ defined in the theorem exists and is unique
w.p.a.1. Moreover, by the same arguments employed in the analysis
of $x_{N}$, 
\[
\Vert\hat{x}_{N}-x_{N}\Vert\leq\frac{\Vert\Pi_{K}\Psi_{N}(x_{N})-\Psi_{N}(x_{N})\Vert}{1-\beta}=\frac{\Vert\Pi_{K}\left(x_{N}\right)-x_{N}\Vert}{1-\beta}=O_{p}(\rho_{\Pi,K}).
\]
\end{proof}
\begin{thm}
\label{thm: general dist}Suppose the following conditions are satisfied:
(i) $\Vert x_{N}-x_{0}\Vert=O_{P}\left(\rho_{\Psi,N}\right)$; (ii)
$\rho_{\Psi,N}^{-1}\left\{ \Psi_{N}(x_{0})-\Psi(x_{0})\right\} \rightsquigarrow\mathbb{G}$
in $\left(\mathcal{X},\left\Vert \cdot\right\Vert \right)$; (iii)
$\Psi_{N}(x_{0})$ is Frechet differentiable at $x_{0}$ w.p.a.1 with
Frechet differential $\nabla\Psi_{N}(x_{0})\left[\cdot\right]:\partial\mathcal{X}\mapsto\mathcal{X}$
for some function set $\partial\mathcal{X}$, where $x_{N}-x_{0}\in\partial\mathcal{X}$
w.p.a.1, such that $\left\Vert \Psi_{N}(x_{N})-\Psi_{N}(x_{0})-\nabla\Psi_{N}(x_{0})\left[x_{N}-x_{0}\right]\right\Vert =o_{P}\left(\left\Vert x_{N}-x_{0}\right\Vert \right)$
and (iv) $\sup_{dx\in\partial\mathcal{X}:\left\Vert dx\right\Vert =1}\left\Vert \left\{ \nabla\Psi_{N}(x_{0})-\nabla\Psi(x_{0})\right\} \left[dx\right]\right\Vert =o_{p}\left(1\right)$.
Then $\left\{ I-\nabla\Psi(x_{0})\right\} \left[\rho_{\Psi,N}\left\{ x_{N}-x_{0}\right\} \right]\rightsquigarrow\mathbb{G}.$
If furthermore (v) $I-\nabla\Psi(x_{0})\left[\cdot\right]:\mathcal{\partial\mathcal{X}}\mapsto\mathcal{X}$
has a continuous inverse, then $\rho_{\Psi,N}\left\{ x_{N}-x_{0}\right\} \rightsquigarrow\left\{ I-\nabla\Psi(x_{0})\right\} ^{-1}\left[\mathbb{G}\right]$.
\end{thm}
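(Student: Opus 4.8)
The plan is to linearize the defining relation $x_N=\Psi_N(x_N)$ about the population fixed point $x_0=\Psi(x_0)$, exactly as one linearizes the estimating equation of a finite-dimensional $Z$-estimator, and then to read off the limiting law by the continuous mapping theorem. First I would subtract the two fixed-point identities and insert the Frechet expansion of $\Psi_N$ at $x_0$ from condition (iii):
\begin{equation*}
x_N-x_0=\{\Psi_N(x_N)-\Psi_N(x_0)\}+\{\Psi_N(x_0)-\Psi(x_0)\}=\nabla\Psi_N(x_0)\left[x_N-x_0\right]+\{\Psi_N(x_0)-\Psi(x_0)\}+R_N ,
\end{equation*}
where $R_N$ denotes the Frechet remainder. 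By (iii), $\|R_N\|=o_P(\|x_N-x_0\|)=o_P(\rho_{\Psi,N})$, the last step using the preliminary rate (i). Since a Frechet differential is a bounded linear map, $I-\nabla\Psi_N(x_0)$ is linear and the identity rearranges to $\{I-\nabla\Psi_N(x_0)\}\left[x_N-x_0\right]=\{\Psi_N(x_0)-\Psi(x_0)\}+R_N$.

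Next I would pass from the sample differential $\nabla\Psi_N(x_0)$ to the population differential $\nabla\Psi(x_0)$. Adding and subtracting $\nabla\Psi(x_0)\left[x_N-x_0\right]$,
\begin{equation*}
\{I-\nabla\Psi(x_0)\}\left[x_N-x_0\right]=\{\Psi_N(x_0)-\Psi(x_0)\}+R_N+\{\nabla\Psi_N(x_0)-\nabla\Psi(x_0)\}\left[x_N-x_0\right] ,
\end{equation*}
and by linearity of the differentials the last term is at most $\|x_N-x_0\|\sup_{dx\in\partial\mathcal{X}:\|dx\|=1}\|\{\nabla\Psi_N(x_0)-\nabla\Psi(x_0)\}\left[dx\right]\|$ in norm, which by (i) and (iv) is $O_P(\rho_{\Psi,N})\cdot o_P(1)=o_P(\rho_{\Psi,N})$. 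Dividing through by $\rho_{\Psi,N}$ and pulling the resulting scalar inside the linear operator $I-\nabla\Psi(x_0)$ yields the key identity
\begin{equation*}
\{I-\nabla\Psi(x_0)\}\left[\rho_{\Psi,N}^{-1}(x_N-x_0)\right]=\rho_{\Psi,N}^{-1}\{\Psi_N(x_0)-\Psi(x_0)\}+o_P(1) .
\end{equation*}

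The two conclusions now follow quickly. By (ii) the leading term on the right converges weakly to $\mathbb{G}$ in $(\mathcal{X},\|\cdot\|)$, so by the Slutsky lemma for weak convergence in metric spaces (absorbing the $o_P(1)$ term; cf.\ \citealp{VW}) the left-hand side converges weakly to $\mathbb{G}$ as well, which is the first assertion. For the second, under (v) the linear operator $I-\nabla\Psi(x_0)$ has a continuous inverse, so applying $\{I-\nabla\Psi(x_0)\}^{-1}$ to both sides of the key identity and invoking the continuous mapping theorem gives $\rho_{\Psi,N}^{-1}(x_N-x_0)\rightsquigarrow\{I-\nabla\Psi(x_0)\}^{-1}\left[\mathbb{G}\right]$.

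The argument is structurally routine---it is the infinite-dimensional analogue of the delta-method argument for extremum estimators---so I do not expect a genuine obstacle, only two points that need care. First, in the intended applications $\mathcal{X}$ is a space of bounded functions equipped with the sup-norm and hence non-separable, so ``$\rightsquigarrow$'' must be read in the generalized (outer-probability) sense, and tightness of the Gaussian limit $\mathbb{G}$ is what licenses both the Slutsky step and the continuous-mapping step above. Second, the two $o_P(\rho_{\Psi,N})$ estimates---for the Frechet remainder $R_N$ and for the differential-replacement term---genuinely consume the preliminary rate (i), so this result is to be applied in tandem with Theorem~\ref{thm: General rate}, which supplies that rate; the rest is bookkeeping.
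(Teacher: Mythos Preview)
Your proof is correct and follows essentially the same route as the paper's: linearize the fixed-point identity $x_N=\Psi_N(x_N)$ about $x_0=\Psi(x_0)$ via the Frechet expansion (iii), absorb the remainder using the rate (i), replace $\nabla\Psi_N(x_0)$ by $\nabla\Psi(x_0)$ using (iv), and then read off the weak limit from (ii) and the continuous mapping theorem under (v). Your additional remarks on the non-separable sup-norm setting and on the dependence on the preliminary rate from Theorem~\ref{thm: General rate} are accurate and in the spirit of the paper's framework.
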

\begin{proof}
To show the first claim, combine a functional Taylor expansion with
conditions (i) and (iii),
\begin{eqnarray*}
0 & = & \left(I-\Psi_{N}\right)(x_{N})=\left(I-\Psi_{N}\right)(x_{0})+\left\{ I-\nabla\Psi_{N}(x_{0})\right\} \left[x_{N}-x_{0}\right]+o_{P}\left(\left\Vert x_{N}-x_{0}\right\Vert \right)\\
 & = & \Psi(x_{0})-\Psi_{N}(x_{0})+\left\{ I-\nabla\Psi_{N}(x_{0})\right\} \left[x_{N}-x_{0}\right]+o_{P}\left(\rho_{\Psi,N}\right).
\end{eqnarray*}
Next, by (iv),
\begin{eqnarray*}
 &  & \left\Vert \left\{ I-\nabla\Psi_{N}(x_{0})\right\} \left[x_{N}-x_{0}\right]-\left\{ I-\nabla\Psi(x_{0})\right\} \left[x_{N}-x_{0}\right]\right\Vert \\
 & = & \left\Vert \left\{ \nabla\Psi_{N}(x_{0})-\nabla\Psi(x_{0})\right\} \left[x_{N}-x_{0}\right]\right\Vert \\
 & \leq & \sup_{dx\in\partial\mathcal{X}:\left\Vert dx\right\Vert =1}\left\Vert \left\{ \nabla\Psi_{N}(x_{0})-\nabla\Psi(x_{0})\right\} \left[dx\right]\right\Vert \left\Vert x_{N}-x_{0}\right\Vert \\
 & = & o_{P}\left(\rho_{\Psi,N}\right).
\end{eqnarray*}
Combining this with (ii),
\begin{eqnarray*}
\left\{ I-\nabla\Psi(x_{0})\right\} \left[\rho_{\Psi,N}^{-1}\left\{ x_{N}-x_{0}\right\} \right] & =\rho_{\Psi,N}^{-1} & \left\{ \Psi_{N}(x_{0})-\Psi(x_{0})\right\} +o_{P}\left(1\right)\rightsquigarrow\mathbb{G},
\end{eqnarray*}
The second claim follows by (v) and the continuous mapping theorem.
\end{proof}
It is important here to note the tension between the requirement that
that $x_{N}-x_{0}\in\partial\mathcal{X}$ in (iii) and$\sup_{dx\in\partial\mathcal{X}:\left\Vert dx\right\Vert =1}\left\Vert \left\{ \nabla\Psi_{N}(x_{0})-\nabla\Psi(x_{0})\right\} \left[dx\right]\right\Vert =o_{p}\left(1\right)$.
The first condition will hold if we choose $\partial\mathcal{X}$
large enough. But at the same time, we need to show uniform convergence
over the same space which will generally only hold if $\partial\mathcal{X}$
is Glivenko-Cantelli. In the application to value function approximation,
this is achieved by choosing $\partial\mathcal{X}=\boldsymbol{\mathbb{C}}_{r}^{1}\left(\mathcal{Z}\right)$
defined in Section \ref{subsec:Function-approximation} for some $r<\infty$.

\section{\label{App:Proofs}Proofs}
\begin{proof}[Proof of Theorem \ref{Thm: contraction}]
First note that for any $V\left(z;\lambda\right)\in\boldsymbol{\mathbb{B}}\left(\mathcal{Z}\times\left(0,\bar{\lambda}\right)\right)^{D}$
and with $C$ denoting a generic constant,
\begin{eqnarray}
\left|G_{\lambda}\left(u_{\psi}\left(U,z\right)+\beta V_{\psi}\left(U;z,\lambda\right)\right)\right| & \leq & C\left(1+\left\Vert u_{\psi}\left(U,z\right)\right\Vert +\beta\left\Vert V_{\psi}\left(U;z,\lambda\right)\right\Vert \right)\label{eq: ineq}\\
 & \leq & C\left(1+\bar{u}_{\psi}\left(U\right)+\beta\left\Vert V\right\Vert _{\infty}\right),\nonumber 
\end{eqnarray}
and so, using Assumption \ref{assu: mom bound},
\begin{eqnarray*}
\left\Vert \Gamma(V)\right\Vert _{\infty} & \leq & \sup_{\left(z,\lambda\right)\in\mathcal{Z}\times\left(0,\bar{\lambda}\right)}E\left[\left|G_{\lambda}\left(u_{\psi}\left(U;z\right)+\beta V_{\psi}(U;z,\lambda)\right)\right|w_{\psi}\left(U;z\right)\right]\\
 & \leq & CE\left[\left|\left(1+\bar{u}_{\psi}\left(U\right)+\beta\left\Vert V\right\Vert _{\infty}\right)\right|\bar{w}_{\psi}\left(U\right)\right]<\infty,
\end{eqnarray*}
which shows that $\Gamma:\boldsymbol{\mathbb{B}}\left(\mathcal{Z}\times\left(0,\bar{\lambda}\right)\right)^{D}\mapsto\boldsymbol{\mathbb{B}}\left(\mathcal{Z}\times\left(0,\bar{\lambda}\right)\right)^{D}$.
Recycling eq. (\ref{eq: ineq}),
\begin{eqnarray*}
\left\Vert \Gamma_{N}(V)\right\Vert _{\infty} & \leq & \frac{\sum_{i=1}^{N}\left\Vert G_{\lambda}\left(u_{\psi}\left(U_{i};z\right)+\beta V_{\psi}\left(U_{i};z,\lambda\right)\right)\right\Vert w_{\psi}\left(U_{i};z\right)}{\sum_{i=1}^{N}w_{\psi}\left(U_{i};z\right)}\\
 & \leq & C\left(\frac{\sum_{i=1}^{N}\left\Vert \bar{u}_{\psi}\left(U\right)\right\Vert w_{\psi}\left(U_{i};z\right)}{\sum_{i=1}^{N}w_{\psi}\left(U_{i};z\right)}+1+\beta\left\Vert V\right\Vert _{\infty}\right)\\
 & \leq & C\left(\frac{\sum_{i=1}^{N}\left\Vert \bar{u}_{\psi}\left(U_{i}\right)\right\Vert \bar{w}_{\psi}\left(U_{i}\right)}{\inf_{z\in\mathcal{Z}}\sum_{i=1}^{N}w_{\psi}\left(U_{i};z\right)}+1+\beta\left\Vert V\right\Vert _{\infty}\right)<\infty.
\end{eqnarray*}
Thus, for any given $N\geq1$, $\Gamma_{N}:\boldsymbol{\mathbb{B}}\left(\mathcal{Z}\times\left(0,\bar{\lambda}\right)\right)^{D}\mapsto\boldsymbol{\mathbb{B}}\left(\mathcal{Z}\times\left(0,\bar{\lambda}\right)\right)^{D}.$
To show that $\Gamma_{N}:\boldsymbol{\mathbb{B}}\left(\mathcal{Z}\times\left(0,\bar{\lambda}\right)\right)^{D}\mapsto\boldsymbol{\mathbb{B}}\left(\mathcal{Z}\times\left(0,\bar{\lambda}\right)\right)^{D}$
is a contraction, use that, by quasi-linearity of $G_{\lambda}\left(r\right)$,
for any $V_{1},V_{2}\in\boldsymbol{\mathbb{B}}\left(\mathcal{Z}\times\left(0,\bar{\lambda}\right)\right)^{D}$,
\begin{eqnarray*}
\Gamma_{N}(V_{1})(z,\lambda,d) & = & \sum_{i=1}^{N}G_{\lambda}\left(u_{\psi}\left(U_{i};z\right)+\beta V_{\psi,2}\left(U_{i};z,\lambda\right)+\beta\left[V_{\psi,1}\left(U_{i};z,\lambda\right)-V_{\psi,2}\left(U_{i};z\right)\right]\right)w_{N,i}\left(z,d\right)\\
 & \leq & \sum_{i=1}^{N}G_{\lambda}\left(u_{\psi}\left(U_{i};z\right)+\beta V_{\psi,2}\left(U_{i};z,\lambda\right)+\beta\left\Vert V_{1}-V_{2}\right\Vert {}_{\infty}\mathbf{1}_{D}\right)w_{N,i}\left(z,d\right)\\
 & = & \sum_{i=1}^{N}G_{\lambda}\left(u_{\psi}\left(U_{i};z\right)+\beta V_{\psi,2}\left(U_{i};z,\lambda\right)\right)w_{N,i}\left(z,d\right)+\beta\left\Vert V_{1}-V_{2}\right\Vert {}_{\infty}\sum_{i=1}^{N}w_{N,i}\left(z,d\right)\\
 & = & \Gamma_{N}(V_{2})(z,\lambda,d)+\beta\left\Vert V_{1}-V_{2}\right\Vert {}_{\infty},
\end{eqnarray*}
where $\mathbf{1}_{d}=\left(1,....,1\right)\in\mathbb{R}^{D}$ and
we have used that $\sum_{i=1}^{N}w_{N,i}\left(z,d\right)=1$ by construction.
The proof of $\Gamma$ being a contraction is analogous.

Next, we prove that $V_{N}\left(z,\lambda\right)$ is $s\geq1$ times
continuously differentiable under Assumption \ref{Ass: smoothness}:
We know that $\Gamma_{N}$ is a contraction mapping on $\mathbb{B}\left(\mathcal{Z}\times\left(0,\bar{\lambda}\right)\right)^{D}$.
But the set of $s\geq0$ continuously differentiable functions $\mathbb{C}^{s}\left(\mathcal{Z}\times\left(0,\bar{\lambda}\right)\right)^{D}$
is a closed subset of $\mathbb{B}\left(\mathcal{Z}\times\left(0,\bar{\lambda}\right)\right)^{D}$
and so the result will follow if $\Gamma_{N,\lambda}\left(\mathbb{C}_{s}\left(\mathcal{Z}\times\left(0,\bar{\lambda}\right)\right)^{D}\right)\subseteq\mathbb{C}^{s}\left(\mathcal{Z}\times\left(0,\bar{\lambda}\right)\right)^{D}$.
But for any $V\in\mathbb{C}^{s}\left(\mathcal{Z}\times\left(0,\bar{\lambda}\right)\right)^{D}$,
it follows straightforwardly by the chain rule in conjuction with
the stated assumptions that $\Gamma_{N}(V)(z,\lambda)=\sum_{i=1}^{N}G_{\lambda}\left(u_{\psi}\left(U_{i};z\right)+\beta V_{\psi}\left(U_{i};z,\lambda\right)\right)w_{N,i}\left(z\right)$
is $s\geq0$ continuously differentiable w.r.t. $\left(z,\lambda\right)$.
The proof of the Lipschitz property under Assumption \ref{Ass: smoothness}(i)
is similar and so left out.
\end{proof}
\begin{proof}[Proof of Theorem \ref{thm: smoothing}]
We only show the result for $V_{0}$; the proof for the $V_{N}$
is analogous. Applying (\ref{eq: G approx error}), the following
holds for any $V$,
\begin{align*}
\left|\Gamma(V)(z,0,d)-\Gamma(V)(z,\lambda,d)\right|\leq & \int\left|\max_{d\in\mathcal{D}}\left\{ u(s^{\prime},d)+\beta V(z^{\prime},d^{\prime})\right\} -G_{\lambda}\left(u(s^{\prime})+\beta V(z^{\prime})\right)\right|dF_{S}(s^{\prime}|z,d)\\
\leq & \sup_{r\in\mathbb{R}^{D}}\left|G_{\lambda}\left(r\right)-\max_{d\in\mathcal{D}}r\left(d\right)\right|\int_{\mathcal{Z}\times\mathcal{E}}dF_{s}(ds^{\prime}|z,d)\\
\leq & \lambda\log D.
\end{align*}
The result now follows from the first part of Theorem \ref{thm: General rate}
with $\Psi_{N}\left(\cdot\right)=\Gamma\left(\cdot\right)\left(\cdot,\lambda_{N}\right)$. 
\end{proof}
Our asymptotic analysis of $V_{N}$ proceeds in two steps: First,
we develop a master theorem that delivers the desired result under
a set of high-level conditions on the model and chosen importance
sampler. The conditions are formulated to cover a wide range of different
specifications, including both the case of $Z_{t}$ being continuously
distributed or having countable support. Also, the master theorem
allows for a wide range of the per-period utility functions and importance
samplers. To state the high-level conditions, we recall the following
definitions (see \citealp{VW}): A class $\mathcal{F}$ of measurable
functions mapping $U$ into $\mathbb{R}$ is called $P$$_{U}$-Glivenko-Cantelli
if $\sup_{f\in\mathcal{F}}\left|\frac{1}{N}\sum_{i=1}^{N}f\left(U_{i}\right)-E\left[f\left(U\right)\right]\right|\rightarrow^{P}0$
and it is called $P$$_{U}$-Donsker if $\sup_{f\in\mathcal{F}}\frac{1}{\sqrt{N}}\sum_{i=1}^{N}\left\{ f\left(U_{i}\right)-E\left[f\left(U\right)\right]\right\} \rightsquigarrow\mathbb{G}$
in the space of all bounded functions from $\mathcal{F}$ to $\mathbb{R}$,
where $\mathbb{G}$ is a tight Gaussian process.
\begin{thm}
\label{thm: Master} (i) Suppose that Assumption \ref{assu: mom bound}
is satisfied and the function classes $\mathcal{W}:=\left\{ \left.U\mapsto w_{\psi}\left(U;z\right)\right|z\in\mathcal{Z}\right\} $
and
\begin{equation}
\mathcal{\mathcal{G}}=\left\{ \left.U\mapsto G_{\lambda}\left(u_{\psi}\left(U;z\right)+\beta V_{0}\left(\psi_{Z}\left(U;z\right),\lambda\right)\right)w_{\psi}\left(U;z\right)\right|\left(z,\lambda\right)\in\mathcal{Z}\times\left(0,\bar{\lambda}\right)\right\} \label{eq: G set def}
\end{equation}
are $P$$_{U}$-Donsker. Then the first part of Theorem \ref{Thm: Rust approx rate}
holds.

(ii) Suppose furthermore that $V_{N}-V_{0}\in\partial\mathcal{V}$
where $\partial\mathcal{V}$ is $P_{U}$-Glivenko-Cantelli with an
integrable envelope function and
\[
\mathcal{G}^{\prime}=\left\{ \left.U\mapsto\sum_{d\in\mathcal{D}}\dot{G}_{d,\lambda}\left(u_{\psi}\left(U;z\right)+\beta V_{\psi,0}\left(U;z,\lambda\right)\right)w_{\psi}\left(U;z\right)\right|\left(z,\lambda\right)\in\mathcal{Z}\times\left(0,\bar{\lambda}\right)\right\} 
\]
is $P_{U}$-Glivenko-Cantelli. Then the conclusions of Theorem \ref{Thm: Rust normal}
also hold.
\end{thm}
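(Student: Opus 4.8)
The plan is to obtain Theorem \ref{thm: Master} by checking the high-level hypotheses of the two abstract fixed-point results, Theorems \ref{thm: General rate} and \ref{thm: general dist}, applied with $\Psi=\Gamma$, $\Psi_{N}=\Gamma_{N}$ as in eqs. (\ref{eq: generalized Gamma})--(\ref{eq: generalised Gamma_N}), $x_{0}=V_{0}$, $x_{N}=V_{N}$, the space $\boldsymbol{\mathbb{B}}(\mathcal{Z}\times(0,\bar{\lambda}))^{D}$ with $\left\Vert \cdot\right\Vert _{\infty}$, and rate $\rho_{\Psi,N}=1/\sqrt{N}$. The one bookkeeping device used throughout is the self-normalization of the weights: writing $\mathbb{P}_{N}$ for the empirical average over $U_{1},\dots,U_{N}\sim P_{U}$, $h(U;z,\lambda)$ for the generic element of $\mathcal{G}$ in (\ref{eq: G set def}) and $w(U;z)=w_{\psi}(U;z)$ for that of $\mathcal{W}$, one has $\Gamma_{N}(V_{0})(z,\lambda)=\mathbb{P}_{N}h(\cdot;z,\lambda)/\mathbb{P}_{N}w(\cdot;z)$ while $\Gamma(V_{0})(z,\lambda)=\mathbb{P}h(\cdot;z,\lambda)$ since $\mathbb{P}w(\cdot;z)=\int dF=1$; hence $\Gamma_{N}(V_{0})-\Gamma(V_{0})=\mathbb{P}_{N}g(\cdot;z,\lambda)/\mathbb{P}_{N}w(\cdot;z)$ with $g$ the centered integrand of (\ref{eq: g def}), $\mathbb{P}g=0$, and $\{g(\cdot;z,\lambda)\}$ a $P_{U}$-Donsker class, being a uniformly bounded recombination of the Donsker classes $\mathcal{G}$ and $\mathcal{W}$.

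\textbf{Part (i).} Condition (i) of Theorem \ref{thm: General rate} asks for $\left\Vert \Gamma_{N}(V_{0})-\Gamma(V_{0})\right\Vert _{\infty}=O_{P}(1/\sqrt{N})$: the Donsker property of $\{g\}$ gives $\sqrt{N}\sup_{z,\lambda}|\mathbb{P}_{N}g|=O_{P}(1)$, and Glivenko--Cantelli (implied by Donsker) for $\mathcal{W}$ gives $\inf_{z}\mathbb{P}_{N}w\to 1>0$ w.p.a.1, so the ratio is $O_{P}(1/\sqrt{N})$. Condition (ii) --- $\Gamma_{N}$ a $\beta$-contraction for all $N$ large --- is Theorem \ref{Thm: contraction} together with the observation that $\inf_{z\in\mathcal{Z}}\sum_{i=1}^{N}w_{\psi}(U_{i};z)>0$ w.p.a.1. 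Theorem \ref{thm: General rate} then delivers $\left\Vert V_{N}-V_{0}\right\Vert _{\infty}=O_{P}(1/\sqrt{N})$, which is the first part of Theorem \ref{Thm: Rust approx rate}.

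\textbf{Part (ii).} Now verify conditions (i)--(v) of Theorem \ref{thm: general dist}. (i) is the rate just established. (ii): $\sqrt{N}\{\Gamma_{N}(V_{0})-\Gamma(V_{0})\}=\sqrt{N}\,\mathbb{P}_{N}g/\mathbb{P}_{N}w\rightsquigarrow\mathbb{G}$ on $\boldsymbol{\mathbb{B}}(\mathcal{Z}\times(0,\bar{\lambda}))^{D}$ by the functional CLT for the Donsker class $\{g\}$ and a Slutsky argument ($\mathbb{P}_{N}w\to1$), with $\mathbb{G}$ the mean-zero Gaussian process whose covariance is $\mathbb{P}[g(\cdot;z_{1},\lambda_{1})g(\cdot;z_{2},\lambda_{2})']=\Omega(z_{1},\lambda_{1},z_{2},\lambda_{2})$ of (\ref{eq: Omega def}). (iii): $\Gamma_{N}$ is differentiable at $V_{0}$ with differential $\nabla\Gamma_{N}(V_{0})$ of (\ref{eq: dGamma_N def}); since $G_{\lambda}$ is $C^{\infty}$ with Lipschitz gradient, the second-order Taylor remainder satisfies $\left\Vert \Gamma_{N}(V_{N})-\Gamma_{N}(V_{0})-\nabla\Gamma_{N}(V_{0})[V_{N}-V_{0}]\right\Vert _{\infty}=O_{P}(\left\Vert V_{N}-V_{0}\right\Vert _{\infty}^{2})=O_{P}(1/N)=o_{P}(\left\Vert V_{N}-V_{0}\right\Vert _{\infty})$, and $V_{N}-V_{0}\in\partial\mathcal{V}$ by hypothesis. (iv): from the explicit form (\ref{eq: dGamma_N def}), $\nabla\Gamma_{N}(V_{0})[dV]$ is a self-normalized double empirical average of the class $\mathcal{G}'$ (the terms $\dot{G}_{d,\lambda}(\cdot)w_{\psi}(\cdot;z)$) against the $dV$-averages, so its difference from $\nabla\Gamma(V_{0})[dV]$ is uniformly $o_{P}(1)$ over $dV\in\partial\mathcal{V}$ with $\left\Vert dV\right\Vert =1$ once $\mathcal{G}'$ and the class induced by $\partial\mathcal{V}$ are $P_{U}$-Glivenko--Cantelli with integrable envelopes --- which is precisely what is assumed, products of such classes again being Glivenko--Cantelli. (v): $\nabla\Gamma(V_{0})$ is the differential of a $\beta$-contraction, hence a bounded linear operator of operator norm $\le\beta<1$, so $I-\nabla\Gamma(V_{0})$ is boundedly invertible via the Neumann series. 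Theorem \ref{thm: general dist} then yields $\sqrt{N}\{V_{N}-V_{0}\}\rightsquigarrow\{I-\nabla\Gamma(V_{0})\}^{-1}[\mathbb{G}]=\mathbb{G}_{V}$, which is Theorem \ref{Thm: Rust normal}.

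\textbf{Main obstacle.} As anticipated in the remark following Theorem \ref{thm: general dist}, the delicate step is the simultaneous requirement in (iii)--(iv) that $\partial\mathcal{V}$ be large enough to contain $V_{N}-V_{0}$ w.p.a.1 yet small enough (Glivenko--Cantelli) for the differentials to converge uniformly over it; this is exactly why the master theorem keeps $\partial\mathcal{V}$ abstract, so that in the smooth case one may take $\partial\mathcal{V}=\boldsymbol{\mathbb{C}}_{r}^{1}(\mathcal{Z}\times(0,\bar{\lambda}))^{D}$ (placing $V_{N}-V_{0}$ there via Theorem \ref{Thm: contraction}) while other applications can trade smoothness for, e.g., monotonicity. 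A secondary technical point is that the constant in the $G_{\lambda}$-Taylor remainder in (iii) degrades like $1/\lambda$ as $\lambda\downarrow0$, so the uniform-in-$\lambda$ statement requires a little care near the (open) endpoint $\lambda=0$; this is harmless for the fixed-$\lambda$ conclusion and is handled by restricting $\lambda$ to a compact subinterval of $(0,\bar{\lambda})$ before combining with Theorem \ref{thm: smoothing} for the corollary on $\lambda_{N}\downarrow0$.
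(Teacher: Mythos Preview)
Your proposal is correct and follows essentially the same approach as the paper: both verify the hypotheses of Theorems \ref{thm: General rate} and \ref{thm: general dist} with $\Psi=\Gamma$, $\Psi_{N}=\Gamma_{N}$, using the Donsker assumption on $\mathcal{G},\mathcal{W}$ for the rate and weak convergence of $\Gamma_{N}(V_{0})-\Gamma(V_{0})$, and the Glivenko--Cantelli assumption on $\mathcal{G}'$ and $\partial\mathcal{V}$ (closed under products via the van der Vaart--Wellner permanence result) for uniform convergence of the differential. The only presentational difference is that the paper keeps the numerator $\tilde{\Gamma}_{N}(V_{0})$ and denominator $W_{N}$ separate and linearizes the ratio to obtain $\mathbb{G}=\mathbb{G}_{1}-\Gamma(V_{0})\mathbb{G}_{2}$, whereas you center first to $g$ and apply Slutsky; these are equivalent and yield the same covariance (\ref{eq: Omega def}). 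Your explicit flagging of the $1/\lambda$ blow-up in the second-order Taylor remainder for condition (iii) is a point the paper's proof does not spell out; your suggested fix (compact subintervals of $(0,\bar{\lambda})$ combined with Theorem \ref{thm: smoothing}) is a reasonable way to make the uniform-in-$\lambda$ statement rigorous.
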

\begin{proof}
To show (i), we apply the first part of Theorem \ref{thm: General rate}
with $\Psi_{N}=\Gamma_{N}$, which is a contraction w.p.a.1, c.f.
Theorem \ref{Thm: contraction}. First write
\begin{equation}
\Gamma_{N}(V)(z,\lambda)=\frac{\tilde{\Gamma}_{N}(V)(z,\lambda)}{W_{N}\left(z\right)}.\label{eq: Gamma_N ratio}
\end{equation}
where
\begin{eqnarray}
\tilde{\Gamma}_{N}(V)(z,\lambda) & = & \frac{1}{N}\sum_{i=1}^{N}G_{\lambda}\left(u_{\psi}\left(U_{i};z\right)+\beta V_{\psi}(U_{i};z)\right)w_{\psi}\left(U_{i};z\right),\label{eq: Gamma-tilde def}\\
W_{N}\left(z\right) & = & \frac{1}{N}\sum_{j=1}^{N}w_{\psi}\left(U_{i};z\right),\label{eq: W def}
\end{eqnarray}
The Donsker condition on $\mathcal{G}$ and $\mathcal{W}$ now implies
that 
\begin{equation}
\sqrt{N}\left(\tilde{\Gamma}_{N}(V_{0})-\Gamma(V_{0}),W_{N}-1\right)\rightsquigarrow\left(\mathbb{G}_{1},\mathbb{G}_{2}\right)\label{eq: Gamma_N conv}
\end{equation}
on $\mathcal{B}\left(\mathcal{Z}\times\left(0,\bar{\lambda}\right)\right)$,
where $\left(\mathbb{G}_{1},\mathbb{G}_{2}\right)$ is a Gaussian
process, and so
\begin{eqnarray}
\sqrt{N}\left\{ \Gamma_{N}(V_{0})-\Gamma(V_{0})\right\}  & = & \sqrt{N}\left\{ \tilde{\Gamma}_{N}(V_{0})-\Gamma(V_{0})\right\} -\Gamma(V_{0})\sqrt{N}\left\{ W_{N}-1\right\} +o_{P}\left(1\right)\nonumber \\
 & \rightsquigarrow & \mathbb{G}:=\mathbb{G}_{1}-\Gamma(V_{0})\mathbb{G}_{2}.\label{eq: Gamma_N conv 2}
\end{eqnarray}
In particular, $\left\Vert \Gamma_{N}(V_{0})-\Gamma(V_{0})\right\Vert _{\infty}=O_{P}\left(1/\sqrt{N}\right)$.
We conclude from Theorem \ref{thm: General rate} that $\Vert V_{N}-V_{0}\Vert_{\infty}=O_{p}(1/\sqrt{N})$.

To show the second part, we apply Theorem \ref{thm: general dist}.
Weak convergence was derived above and it is easily seen that the
influence function of $\Gamma_{N}(V_{0})$ takes the form given in
eq. (\ref{eq: g def}) and so the Gaussian process $\mathbb{G}\left(z,\lambda\right)$
in eq. (\ref{eq: Gamma_N conv 2}) has covariance kernel given in
(\ref{eq: Omega def}). The Frechet differential $dV\mapsto\nabla\Gamma_{N}(V_{N})\left[dV\right]$
was derived in (\ref{eq: dGamma_N def}). It is a linear operator
with $\left\Vert \nabla\Gamma_{N}(V_{N})\left[dV\right]\right\Vert \leq\beta\left\Vert dV\right\Vert $
and so $dV\mapsto\left\{ I-\nabla\Gamma_{N}(V_{N})\right\} \left[dV\right]$
has a well-defined continuous inverse. Thus, what remains is to verify
(iv) of Theorem \ref{thm: general dist}. This is done by showing
uniform convergence of $dV\mapsto\nabla\tilde{\Gamma}_{N}(V_{0})\left[dV\right]$
over $\mathcal{B}\left(\mathcal{Z}\times\left(0,\bar{\lambda}\right)\times\partial\mathcal{V}\right)$.
But
\begin{eqnarray}
\nabla\tilde{\Gamma}_{N}(V_{0})\left[dV\right](z) & = & \frac{\beta}{N}\sum_{i=1}^{N}\sum_{d\in\mathcal{D}}\dot{G}_{d,\lambda}\left(u_{\psi}\left(U_{i};z\right)+\beta V_{\psi,0}\left(U_{i};z,\lambda\right)\right)dV_{\psi}\left(U_{i};z,\lambda,d\right)w_{\psi}\left(U_{i};z\right)\label{eq: dGamma_N def-1}
\end{eqnarray}
where $V_{\psi}\left(U;z,\lambda,d\right)\in\partial\mathcal{V}_{\psi}$
with
\[
\partial\mathcal{V}_{\psi}=\left\{ \left.U\mapsto dV\left(\psi_{Z}\left(U;z\right),\lambda\right)\right|\left(z,\lambda,dV\right)\in\mathcal{Z}\times\left(0,\bar{\lambda}\right)\times\partial\mathcal{V}\right\} 
\]
which is Glivenko-Cantelli since $\partial\mathcal{V}$ and $\left\{ \left.U\mapsto\psi_{Z}\left(U;z\right)\right|z\in\mathcal{Z}\right\} $
both have this property. Since $\mathcal{G}^{\prime}$ is also Glivenko-Cantelli,
it now follows from Theorem 3 in \citet{Vaart&Wellner2000} that $\mathcal{G}^{\prime}\cdot\partial\mathcal{V}_{\psi}$
is Glivenko-Cantelli as well which yields the desired result.
\end{proof}
\begin{proof}[Proof of Theorem \ref{Thm: Rust approx rate}]
 To show $\Vert V_{N}-V_{0}\Vert_{\infty}=O_{P}\left(1/\sqrt{N}\right)$,
we verify the conditions of part (i) in Theorem \ref{thm: Master}.
First observe that $V_{0}(z,\lambda)$ is Lipschitz in $\left(z,\lambda\right)$,
c.f. Theorem \ref{Thm: contraction}, and that $r\mapsto G_{\lambda}\left(r\right)$
is also Lipschitz uniformly in $\lambda\in(0,\bar{\lambda})$. Next,
we show that $G_{\lambda}\left(r\right)$ is also Lispchitz w.r.t.
$\lambda$ uniformly in $r$ by verifying that $\partial G_{\lambda}\left(r\right)/\left(\partial\lambda\right)$
is bounded uniformly in $\lambda\in(0,\bar{\lambda})$: Write
\[
G_{\lambda}\left(r\right)=\lambda\log\left[\sum_{d\in\mathcal{D}}\exp\left(\frac{r\left(d\right)}{\lambda}\right)\right]=\max_{d\in D}r\left(d\right)+\lambda\log\left[\sum_{d\in\mathcal{D}}\exp\left(\frac{\bar{r}\left(d\right)}{\lambda}\right)\right],
\]
where $\bar{r}\left(d\right)=r\left(d\right)-\max_{d\in\mathcal{D}}r\text{\ensuremath{\left(d\right)}}\leq0$,
$d\in\mathcal{D}$, to obtain
\begin{eqnarray}
\dot{G}_{\lambda}^{\left(\lambda\right)}(r)=\frac{\partial G_{\lambda}\left(r\right)}{\partial\lambda} & = & \log\left[\sum_{d\in\mathcal{D}}\exp\left(\frac{\bar{r}\left(d\right)}{\lambda}\right)\right]-\frac{\sum_{d\in\mathcal{D}}\exp\left(\frac{\bar{r}\left(d\right)}{\lambda}\right)\frac{\bar{r}\left(d\right)}{\lambda}}{\sum_{d\in\mathcal{D}}\exp\left(\frac{\bar{r}\left(d\right)}{\lambda}\right)}.\label{eq: G lambda-deriv}
\end{eqnarray}
Since $1\leq\sum_{d\in\mathcal{D}}\exp\left(\frac{\bar{r}\left(d\right)}{\lambda}\right)\leq D$
and $-De^{-1}\leq\sum_{d\in\mathcal{D}}\exp\left(\frac{\bar{r}\left(d\right)}{\lambda}\right)\frac{\bar{r}\left(d\right)}{\lambda}\leq0$
for all $\lambda>0$ and all $r\in\mathbb{R}^{D}$, we conclude that
$\left|\dot{G}_{\lambda}^{\left(\lambda\right)}(r)\right|\leq\log\left(D\right)+De^{-1}$
and so . Next,
\begin{eqnarray*}
 &  & \left|G_{\lambda}\left(u_{\psi}(U;z)+\beta V_{0}\left(\psi_{Z}\left(U,z\right),\lambda\right)\right)-G_{\lambda^{\prime}}\left(u_{\psi}(U;z^{\prime})+\beta V_{0}\left(\psi_{Z}\left(U,z^{\prime}\right),\lambda^{\prime}\right)\right)\right|\\
\leq &  & \left|G_{\lambda}\left(u_{\psi}(U;z)+\beta V_{0}\left(\psi_{Z}\left(U,z\right),\lambda\right)\right)-G_{\lambda^{\prime}}\left(u_{\psi}(U;z)+\beta V_{0}\left(\psi_{Z}\left(U,z\right),\lambda\right)\right)\right|\\
 & + & \left|G_{\lambda^{\prime}}\left(u_{\psi}(U;z)+\beta V_{0}\left(\psi_{Z}\left(U,z\right),\lambda\right)\right)-G_{\lambda^{\prime}}\left(u_{\psi}(U;z^{\prime})+\beta V_{0}\left(\psi_{Z}\left(U,z^{\prime}\right),\lambda^{\prime}\right)\right)\right|\\
\leq &  & C\left\{ \left|\lambda-\lambda^{\prime}\right|+\left\Vert u_{\psi}(U;z)-u_{\psi}(U;z^{\prime})\right\Vert +\left\Vert V_{0}\left(\psi_{Z}\left(U,z\right),\lambda\right)-V_{0}\left(\psi_{Z}\left(U,z^{\prime}\right),\lambda^{\prime}\right)\right\Vert \right\} ,
\end{eqnarray*}
and it now follows that under Assumption \ref{Ass: smoothness} together
with the Lipschitz property of $V_{0}$ that $\mathcal{G}$ as defined
in eq. (\ref{eq: G set def}) is Type IV class under $P_{U}$ with
index 2 according to the definition on p. 2278 in Andrews (1994) which
yields the first part of the theorem.

Next, we analyze $\partial V_{N}/\left(\partial z_{j}\right)$, $j=1,...,d_{Z}$.
Since $\sup_{z\in\mathcal{Z}}\left|\sum_{i=1}^{N}w\left(S_{i}\left(z,d\right)|z,d\right)/N-1\right|=O_{P}\left(1/\sqrt{N}\right)$,
we replace $w_{N,i}\left(z,d\right)$ by $w_{\psi}\left(U_{i};z,d\right)/N$
in the following. Now, taking derivatives w.r.t. $z_{j}$, $j=1,...,d_{Z}$,
on both sides of eq. (\ref{eq: Sim Fixed point}),
\begin{eqnarray}
\frac{\partial V_{N}\left(z,\lambda\right)}{\partial z_{j}} & = & \nabla\Gamma_{N}(V_{N})\left[\frac{\partial V_{N}}{\partial z_{j}}\right]\left(z,\lambda\right)+\Gamma_{N,j}^{\left(z\right)}(V_{N})\left(z,\lambda\right),\label{eq: V_N z deriv}
\end{eqnarray}
where $\nabla\Gamma_{N}$ was defined in ($\ref{eq: dGamma_N def}$)
and
\begin{eqnarray*}
\dot{\Gamma}_{N,j}^{\left(z\right)}(V_{N})\left(z,\lambda,d\right) & =\frac{1}{N} & \sum_{i=1}^{N}\sum_{d\in\mathcal{D}}\dot{G}_{\lambda,d}^{\left(r\right)}\left(u_{\psi}\left(U_{i};z,d\right)+\beta V_{\psi,N}\left(U_{i};z,\lambda\right)\right)\frac{\partial u_{\psi}\left(U_{i};z,d\right)}{\partial z_{j}}w_{\psi}\left(U_{i};z\right)\\
 &  & +\frac{1}{N}\sum_{i=1}^{N}G_{\lambda}\left(u_{\psi}\left(U_{i},d\right)+\beta V_{\psi,N}\left(U_{i};z,\lambda\right)\right)\frac{\partial w_{\psi}\left(U_{i};z\right)}{\partial z_{j}}.
\end{eqnarray*}
where $\dot{G}_{\lambda,d}^{\left(r\right)}(r)$ was defined in (\ref{eq: G r-deriv}).
Similarly,
\begin{eqnarray}
\frac{\partial V_{N}\left(z,\lambda\right)}{\partial\lambda} & = & \nabla\Gamma_{N}(V_{N})\left[\frac{\partial V_{N}}{\partial\lambda}\right]\left(z,\lambda\right)+\dot{\Gamma}_{N,j}^{\left(\lambda\right)}(V_{N})\left(z,\lambda\right),\label{eq: V_N z deriv-1}
\end{eqnarray}
where
\begin{eqnarray*}
\dot{\Gamma}_{N,j}^{\left(\lambda\right)}(V_{N})\left(z,\lambda,d\right) & =\frac{1}{N} & \sum_{i=1}^{N}\dot{G}_{\lambda}^{\left(\lambda\right)}\left(u_{\psi}\left(U_{i};z,d\right)+\beta V_{\psi,N}\left(U_{i};z,\lambda\right)\right)w_{\psi}\left(U_{i};z\right),
\end{eqnarray*}
and
\begin{equation}
\dot{G}_{\lambda}^{\left(\lambda\right)}\left(r\right)=\log\left[\sum_{d\in\mathcal{D}}\exp\left(\frac{r\left(d\right)}{\lambda}\right)\right]-\frac{\sum_{d\in\mathcal{D}}\exp\left(\frac{r\left(d\right)}{\lambda}\right)\frac{r\left(d\right)}{\lambda}}{\sum_{d\in\mathcal{D}}\exp\left(\frac{r\left(d\right)}{\lambda}\right)}\label{eq: G lambda-deriv-1}
\end{equation}
The mapping $dV\mapsto\nabla\Gamma_{N}(V_{N})\left[dV\right]$ is
a bounded linear operator with $\left\Vert \nabla\Gamma_{N}(V_{N})\left[dV\right]\right\Vert \leq\beta\left\Vert dV\right\Vert $
and so
\[
\frac{\partial V_{N}\left(z,\lambda\right)}{\partial z_{j}}=\left\{ I-\nabla\Gamma_{N}(V_{N})\right\} ^{-1}\left[\dot{\Gamma}_{N,j}^{\left(z\right)}(V_{N})\right]\left(z,\lambda\right).
\]
Thus,
\[
\left\Vert \frac{\partial V_{N}}{\partial z_{j}}\right\Vert _{\infty}=\left\Vert \left\{ I-\nabla\Gamma_{N}(V_{N})\right\} ^{-1}\left[\Gamma_{N,j}^{\left(z\right)}(V_{N})\right]\right\Vert _{\infty}\leq\frac{\left\Vert \Gamma_{N,j}^{\left(z\right)}(V_{N})\right\Vert _{\infty}}{1-\beta},
\]
where,
\begin{eqnarray*}
\left\Vert \Gamma_{N,j}^{\left(z\right)}(V_{N})\right\Vert _{\infty} & \leq\frac{1}{N} & \sum_{i=1}^{N}\left\Vert \frac{\partial u_{\psi}\left(U_{i};\cdot\right)}{\partial z_{j}}\right\Vert _{\infty}\left\Vert w_{\psi}\left(U_{i};\cdot\right)\right\Vert _{\infty}\\
 &  & +\frac{1}{N}\sum_{i=1}^{N}\left\{ \left\Vert u_{\psi}\left(U_{i};\cdot\right)\right\Vert _{\infty}+\beta\left\Vert V_{N}\right\Vert _{\infty}\right\} \left\Vert \frac{\partial w_{\psi}\left(U_{i};\cdot\right)}{\partial z_{j}}\right\Vert _{\infty}.
\end{eqnarray*}
We know $\left\Vert V_{N}\right\Vert _{\infty}\rightarrow^{P}\left\Vert V\right\Vert _{\infty}$
and, under Assumption \ref{Ass: smoothness}, we can appeal to the
ULLN to obtain 
\begin{eqnarray*}
\frac{1}{N}\sum_{i=1}^{N}\left\Vert \frac{\partial u_{\psi}\left(U_{i};\cdot\right)}{\partial z_{j}}\right\Vert _{\infty}\left\Vert w_{\psi}\left(U_{i};\cdot\right)\right\Vert _{\infty} & \rightarrow^{P} & E\left[\left\Vert \frac{\partial u_{\psi}\left(U;\cdot\right)}{\partial z_{j}}\right\Vert _{\infty}\left\Vert w_{\psi}\left(U;\cdot\right)\right\Vert _{\infty}\right],\\
\frac{1}{N}\sum_{i=1}^{N}\left\Vert u_{\psi}\left(U_{i};\cdot\right)\right\Vert _{\infty}\left\Vert \frac{\partial w_{\psi}\left(U_{i};\cdot\right)}{\partial z_{j}}\right\Vert _{\infty} & \rightarrow^{P} & E\left[\left\Vert u_{\psi}\left(U;\cdot\right)\right\Vert _{\infty}\left\Vert \frac{\partial w_{\psi}\left(U;\cdot\right)}{\partial z_{j}}\right\Vert _{\infty}\right].
\end{eqnarray*}
We conclude that $\left\Vert \Gamma_{N,j}^{\left(z\right)}(V_{N})\right\Vert _{\infty}$
and therefore also $\left\Vert \partial V_{N}/\left(\partial z_{j}\right)\right\Vert _{\infty}$
are bounded w.p.a.1. Similarly, it follows that $\left\Vert \partial V_{N}/\left(\partial\lambda\right)\right\Vert _{\infty}$
is bounded w.p.a.1. and so $V_{N}\in\boldsymbol{\mathbb{C}}_{r}^{1}\left(\mathcal{Z}\times\left(0,\bar{\lambda}\right)\right)$
w.p.a.1 for some fixed $r<\infty$. 

Finally, observe that $dV\mapsto\varPsi_{N,j}\left(dV\right)\left(z,\lambda\right)=\nabla\Gamma_{N}(V_{N})\left[dV\right]\left(z,\lambda\right)+\Gamma_{N,j}^{\left(z\right)}(V_{N})\left(z,\lambda\right)$
is a contraction mapping on $\boldsymbol{\mathbb{B}}\left(\mathcal{Z}\times\left[0,\bar{\lambda}\right]\right)^{D}$
and so we can apply Theorem \ref{thm: General rate}. First, we expand
each of the two terms w.r.t. $V_{N}$,
\begin{eqnarray*}
 &  & \nabla\Gamma_{N}(V_{N})\left[dV\right](z;\lambda)-\nabla\Gamma_{N}(V_{0})\left[dV\right](z;\lambda)\\
 & = & \beta\sum_{i=1}^{N}\sum_{d_{1},d_{2}\in\mathcal{D}}\ddot{G}_{\lambda,d_{1},d_{2}}^{\left(r\right)}\left(u_{\psi}\left(U_{i};z\right)+\beta\bar{V}_{\psi,N}\left(U_{i};z,\lambda\right)\right)\left\{ V_{\psi,N}\left(U_{i};z,\lambda,d_{2}\right)-V_{\psi,0}\left(U_{i};z,\lambda,d\right)\right\} \\
 & \times & dV\left(U_{i};z,\lambda,d_{1}\right)w_{\psi}\left(U_{i};z\right),
\end{eqnarray*}
where $\ddot{G}_{\lambda,d_{1},d_{2}}^{\left(r\right)}(r)=\frac{\partial^{2}G_{\lambda}(r)}{\partial r\left(d_{1}\right)\partial r\left(d_{1}\right)}$.
It is easily checked that $\left|\ddot{G}_{\lambda,d_{1},d_{2}}^{\left(r\right)}(r)\right|\leq C/\lambda$
for some $C<\infty$ and so the right hand side in the above equation
is bounded by $C/\lambda\left\Vert V_{N}-V_{0}\right\Vert _{\infty}\left\Vert dV\right\Vert _{\infty}=O_{P}\left(\sqrt{N}/\lambda\right)$
for any given $dV\in\boldsymbol{\mathbb{B}}\left(\mathcal{Z}\times\left[0,\bar{\lambda}\right]\right)^{D}$.
By similar arguments, we can show that $\left\Vert \dot{\Gamma}_{N,j}^{\left(z\right)}(V_{N})-\dot{\Gamma}_{N,j}^{\left(z\right)}(V_{0})\right\Vert _{\infty}=O_{P}\left(\sqrt{N}/\lambda\right)$
and $\left\Vert \dot{\Gamma}_{N,j}^{\left(\lambda\right)}(V_{N})-\dot{\Gamma}_{N,j}^{\left(\lambda\right)}(V_{0})\right\Vert _{\infty}=O_{P}\left(\sqrt{N}/\lambda\right)$.
Theorem \ref{thm: General rate} now yields the second part of the
theorem.
\end{proof}
\begin{proof}[Proof of Theorem \ref{Thm: Rust normal}]
 We verify the conditions in part (ii) of Theorem \ref{thm: Master}
with $\partial\mathcal{V}=\boldsymbol{\mathbb{C}}_{r}^{1}\left(\mathcal{Z}\right)^{D}$
and $r<\infty$ given in Theorem \ref{Thm: Rust approx rate}. First,
by arguments similar to the ones in the analysis of $\mathcal{G}$
in the proof of Theorem \ref{Thm: Rust approx rate}, $\mathcal{G}^{\prime}$
is Glivenko-Cantelli due to the Lipschitz property of $V_{0}\left(\psi_{Z}\left(U;z\right),\lambda\right)$
and the other components entering the function set under Assumption
\ref{Ass: smoothness}. Second, $\boldsymbol{\mathbb{C}}_{r}^{1}\left(\mathcal{Z}\right)^{D}$
has finite Bracketing number according to Theorem 2.7.1 in \citet{VW}
and so is also Glivenko-Cantelli.

 The rate result is an immediate consequence of Theorem \ref{thm: General rate}
together with Assumption \ref{ass: projection 1}. For the weak convergence
result, we use the decomposition (\ref{eq: V-hat decomp}) where $\left\Vert \hat{V}_{N}-V_{N}\right\Vert _{\infty}=O_{p}(\rho_{\Pi,K})=o_{P}\left(1/\sqrt{N}\right)$
while the second term converges weakly according to Theorem \ref{Thm: Rust normal}.
\end{proof}

\section{\label{sec:Sieve-spaces}Additional numerical details for sieve method}

\subsection*{Chebyshev basis functions}

Chebyshev polynomials of the first kind have well-known good properties
for approximating functions on bounded intervals. Recall that Chebyshev
polynomials are defined on $\left[-1,1\right]$. We then choose $-\infty<z^{\min}<z^{\max}<\infty$
and define the $k$th basis function as follows for any $z\in\mathbb{R}$:

\[
B_{c,k}(z)=\begin{cases}
\cos\left((k-1)\arccos(T(z))\right), & |T(z)|\leq1\\
\text{({sign}}(T(z)))^{k}, & |T(z)|>1
\end{cases},
\]
where $T(z)=2\frac{z-z^{\min}}{z^{\max}-z^{\min}}-1$ maps $z$ into
the interval $[-1,1]$. In particular, the basis functions are ``truncated''
and are set to one outside the interval $\left[z^{\min},z^{\max}\right]$.
This is done to avoid any erratic extrapolation. We then choose the
grid points $z_{1},...,z_{M}$ in (\ref{eq: Pi least-squares}) as
the Chebyshev nodes in order to minimize the presence of Runge's phenomenon.
Thus, $M=K$ in this case.

\subsection*{B-Splines}

We use cardinal B(asis)-splines to form our B-spline spaces, so they
are represented by a knot vector with equidistant entries $(0,\frac{1}{M+1},\frac{2}{M+1},\ldots,\frac{M}{M+1},1)$,
and the Cox-de Boor recursion

\begin{align*}
\bar{B}_{i,0}(z)= & \begin{cases}
1 & \text{{\,if\,}}t_{i}\leq z<t_{i+1}\\
0 & \text{{otherwise}}
\end{cases}\\
\bar{B}_{i,k}(z)= & \frac{z-t_{i}}{t_{i+k}-t_{i}}\bar{B}_{i,k-1}(z)+\frac{t_{i+k+1}-z}{t_{i+k+1}-t_{i+1}}\bar{B}_{i+1,k-1}(z).
\end{align*}
For interpolation purposes we use the so-called Universal (Parameters)
Method by \citet{tjahjowidodo2017direct}. This amounts to choosing
the $M$ grid points as the unique maximizers of all B-splines of
degree $k\geq1$, or any point if $k=0$ in which case we set it to
the first $K$ elements of the knot vector. The above are defined
on the unit interval $[0,1]$ and so the final basis functions are
chosen as
\[
B_{c,k}(z)=\begin{cases}
\bar{B}_{k}(T\left(z\right)) & 0\leq T(z)\leq1\\
\text{({sign}}(T(z))) & otherwise
\end{cases}.
\]
where now $T(z)=\frac{z-z^{\min}}{z^{\max}-z^{\min}}$.

\begin{table}
\caption{Coefficients on tensor product Chebyshev basis functions in the 2D
model of engine replacement for $K=J^{2}=25$, $N=200$.\label{tab:Coefficients-on-tensor}}

\centering
\begin{tabular}{lrrrrr}
\toprule
$J_1\textbackslash J_2$& 1&2&3&4&5\\
\midrule
1&-38.4713&-4.4754&1.6176&-0.256420&-0.064960\\
2&-4.4754&1.9662e-14&-6.2341e-15&-1.1318e-15&2.5392e-15\\
3&1.6176&7.2256e-15&5.6179e-14&-2.0548e-14&-3.3049e-15\\
4&-0.2564&-1.0110e-14&-8.8673e-15&7.5672e-15&-2.9838e-15\\
5&-0.0649&4.0869e-15&-1.5251e-14&3.4571e-15&1.4218e-15\\
\bottomrule
\end{tabular}
\end{table}

\begin{table}
\caption{Coefficients on tensor product 2nd order B-Spline basis functions
in the 2D model of engine replacement for $K=J^{2}=25$, $N=200$.\label{tab:Coefficients-on-tensor-bspline}}

\centering
\begin{tabular}{lrrrrr}
\toprule
$J_1\textbackslash J_2$& 1&2&3&4&5\\
\midrule
1&-21.9800 & -27.1194 & -30.0583 & -31.1025 & -31.506\\
2&-27.1194 & -32.2589 & -35.1977 & -36.2419 & -36.6455\\
3&-30.0583 & -35.1977 & -38.1366 & -39.1808 & -39.5844\\
4&-31.1025 & -36.2419 & -39.1808 & -40.2250 & -40.6285\\
5&-31.5060 & -36.6455 & -39.5844 & -40.6285 & -41.0321\\
\bottomrule
\end{tabular}
\end{table}

\begin{table}
\caption{Coefficients on the ten basis functions, and their products, upon
convergence with $K=50^{2}.$}

\begin{tabular}{lrrrrrrrrrr}
\toprule
$J_1\textbackslash J_2$&1&2&3&4&5&6&7&8&9&10\\
\midrule 1 &-52.200 & -5.070 & 2.700 & -1.170 & 0.478 & -0.145 & -0.006 & 0.038 & -0.024 & 0.007\\
2 &-5.070 & -1.560 & 0.135 & 0.225 & -0.057 & -0.023 & 0.013 & -0.006 & 0.004 & 0.001\\
3 &2.700 & 0.135 & -0.176 & 0.032 & 0.046 & -0.021 & -0.004 & 0.005 & -0.002 & 0.001\\
4 &-1.170 & 0.225 & 0.032 & -0.087 & 0.019 & 0.019 & -0.012 & 0.002 & 0.001 & -0.001\\
5 &0.478 & -0.057 & 0.046 & 0.019 & -0.038 & 0.010 & 0.009 & -0.008 & 0.002 & 0.000\\
6 &-0.145 & -0.023 & -0.021 & 0.019 & 0.010 & -0.018 & 0.005 & 0.005 & -0.004 & 0.002\\
7 &-0.006 & 0.013 & -0.004 & -0.012 & 0.009 & 0.005 & -0.009 & 0.003 & 0.002 & -0.003\\
8 &0.038 & -0.006 & 0.005 & 0.002 & -0.008 & 0.005 & 0.003 & -0.005 & 0.002 & 0.001\\
9 &-0.024 & 0.004 & -0.002 & 0.001 & 0.002 & -0.004 & 0.002 & 0.002 & -0.003 & 0.001\\
10 &0.007 & 0.001 & 0.001 & -0.001 & 0.0002 & 0.002 & -0.003 & 0.001 & 0.001 & -0.002\\ \bottomrule \end{tabular}

\label{interactiontable}
\end{table}

\end{document}